\newtheorem{theorem}{Theorem}
\newtheorem{lemma}{Lemma}
\newtheorem{definition}{Definition}
\newtheorem{corollary}{Corollary}
\newcommand{\specialcell}[2][c]{%
  \begin{tabular}[#1]{@{}c@{}}#2\end{tabular}}
\newcommand{\be}{\begin{equation}}
\newcommand{\ee}{\end{equation}}
\newcommand{\ba}{\begin{align}}
\newcommand{\ea}{\end{align}}
\newcommand{\tr}{\textup{tr}}
\newcommand{\la}{\langle}
\newcommand{\ra}{\rangle}
\newcommand{\bo}{{\Theta}}
\newcommand{\id}{{\mathbbm{1}}}
\newcommand{\ketbra}[2]{\ensuremath{| #1\rangle\!\!\langle #2 |} }
\newcommand{\ot}{\otimes}
\newcommand{\diff}[2]{\frac{d #1}{d #2}}
\newcommand{\seconddiff}[2]{\frac{d^2 #1}{d {#2}^2}}
\newcommand{\onecold}{\textup{c}}
\newcommand{\hot}{\textup{Hot}}
\newcommand{\cold}{\textup{Cold}}
\newcommand{\batt}{\textup{W}}
\newcommand{\mach}{\textup{M}}
\newcommand{\CW}{\textup{ColdW}}
\newcommand{\CMW}{\textup{ColdMW}}
\newcommand{\total}{\textup{ColdHotMW}}
\newcommand{\Wext}{W_\textup{ext}}
\newcommand{\Walpha}{W_\alpha}
\newcommand{\alpst}{\alpha^*}
\begin{document}
	\title{Surpassing the Carnot Efficiency by extracting imperfect work}

	\author{Nelly Huei Ying Ng}
	\affiliation{QuTech, Delft University of Technology, Lorentzweg 1, 2611 CJ Delft, Netherlands}
	\affiliation{Centre for Quantum Technologies, National University of Singapore, 117543 Singapore}
	
	\author{Mischa Woods}
	\affiliation{University College of London, Department of Physics \& Astronomy, 
				London WC1E 6BT, United Kingdom}
	\affiliation{QuTech, Delft University of Technology, Lorentzweg 1, 2611 CJ Delft, Netherlands}

	\author{Stephanie Wehner}
	\affiliation{QuTech, Delft University of Technology, Lorentzweg 1, 2611 CJ Delft, Netherlands}
	\affiliation{Centre for Quantum Technologies, National University of Singapore, 117543 Singapore}
	\date{\today}

	\begin{abstract}
	A suitable way of quantifying work for microscopic quantum systems has been constantly debated in the field of quantum thermodynamics. 
	One natural approach is to measure the average increase in energy of an ancillary system, called the battery, after a work extraction protocol. 
	The quality of energy extracted is usually argued to be good by quantifying higher moments of the energy distribution, or by restricting the amount of entropy to be low. 
	This limits the amount of heat contribution to the energy extracted, but does not completely prevent it. 
	We show that the definition of ``work" is crucial. If one allows for a definition of work that tolerates a non-negligible entropy increase in the battery, then a small scale heat engine can possibly exceed the Carnot efficiency. This can be achieved even when one of the heat baths is finite in size.
	\end{abstract}
	\maketitle
	
	\section*{Introduction}
	Given physical systems where energy is only present in its most disordered form (heat), how efficiently can one convert such heat and store it as useful energy (work)? 
	This question lies at the foundation of constructing heat engines, like the steam engine. 
	Though nearly two centuries old, it remains one of central interest in physics, and can be applied in studying a large variety of systems, from naturally arising biological systems to intricately engineered ones. 
	Classically it is known that a heat engine cannot perform at efficiencies higher than a theoretical maximum known as the Carnot efficiency, which is given by 
	\begin{equation}
	\eta_C = 1-\frac{T_\cold}{T_\hot},
	\end{equation}
	$T_\cold, T_\hot$ being the temperatures of the heat reservoirs at which the engine operates between. This fundamental limit on efficiency can be derived as a consequence of the second law of thermodynamics, which is regarded as one of the ``most perfect laws in physics" \cite{LiebYngvason}.
	
	Recent advancements in the engineering and control of quantum systems have pushed the applicability of conventional thermodynamics to its limits.
	In particular, instead of large scale machines that initially motivated the study of thermodynamics, we are now able to build nanoscale quantum machines. A quantum heat engine (QHE) is a machine that performs the task of work extraction when the involved systems are not only extremely small in size/particle numbers, but also subjected to the laws of quantum physics. 
	Such studies are highly motivated by the prospects of designing small, energy efficient machines applicable to state-of-the-art devices, particularly those relevant for quantum computing and information processing.
	The question then arises: how efficient can these machines be?
	\begin{center}
		\begin{figure}[h!]
		\includegraphics[scale=0.35]{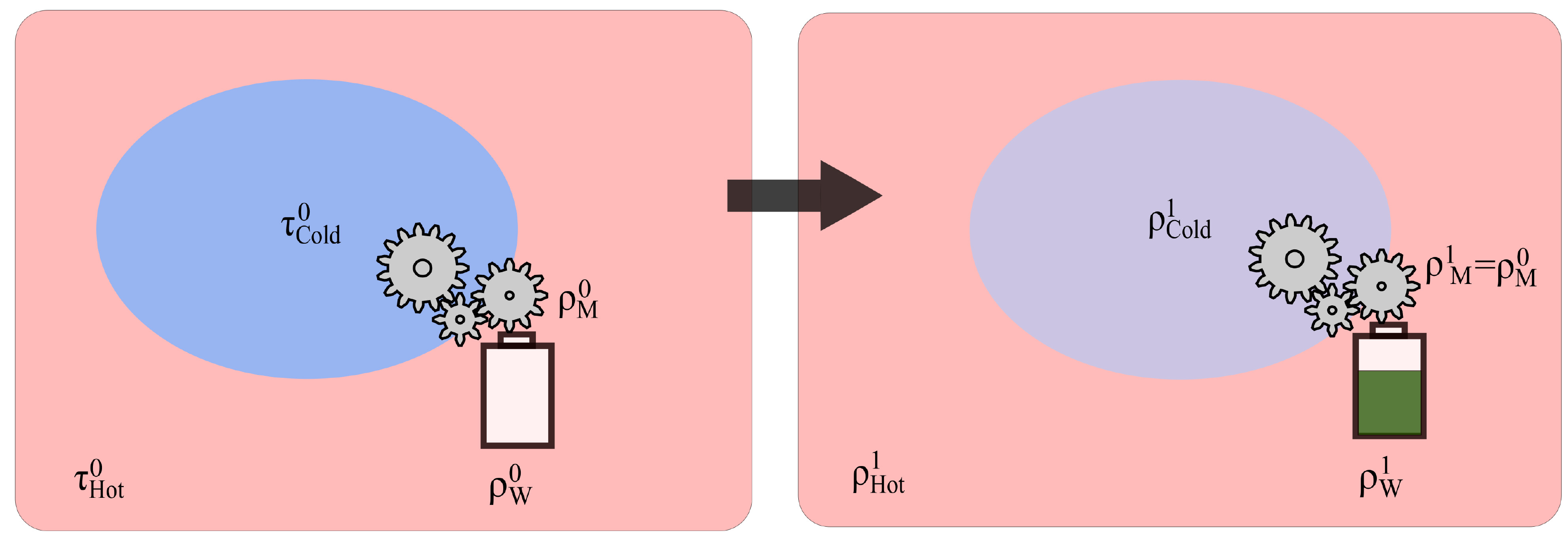}\label{fig:HE}
		\caption{A heat engine with all its basic components: 1) two baths $\tau_\cold^0=\frac{1}{\tr (e^{-\beta_\cold\hat H_\cold})}	e^{-\beta \hat H_\cold}$ and $\tau_\hot^0=\frac{1}{\tr (e^{-\beta_\hot\hat H_\hot})}	e^{-\beta \hat H_\hot}$ which are initially thermal at distinct inverse temperatures $\beta_\cold > \beta_\hot$, 2) a machine $\rho_\mach^0$ which utilizes this temperature difference to extract work, while undergoing a cyclic process, i.e. $\rho_\mach^1=\rho_\mach^0$, and 3) a battery that goes from $\rho_\batt^0\rightarrow\rho_\mach^1$ and stores the extracted energy.}
		\end{figure}
	\end{center}
	\vspace{-0.9cm}
	
	Recently, a number of schemes for QHEs have been proposed and analyzed, involving systems such as ion traps, photocells, or optomechanical systems \cite{abah2012single,nanoscaleHE,scully2010quantum,scully2011quantum,zhang2014quantum,dong2015work,kieu2004second,skrzypczyk2014work,Rossnagel325}. 
	Some of these schemes lie outside the usual definition of a heat engine (see Fig.~\ref{fig:HE}). For example, instead of the engine using a hot and cold bath, the extended quantum 
	heat engine (EQHE) has access to reservoirs which are not in a thermal state~\cite{nanoscaleHE,entenchance,scully2003extracting}. 
	In this case, EQHE with high efficiencies (even surpassing $\eta_C$) have been proposed and demonstrated.
	Nevertheless, \cite{universalityQCE2015} has pointed out that the second law is, strictly speaking, never violated for such EQHEs because one always has to invest extra work in order to create and replenish these special non-thermal reservoir states.
	In contrast, here, we consider the standard setting of a quantum heat engine, in which the baths are indeed thermal. 
	In this setting, it has been proven that although the Carnot efficiency can be approached, but it can never be surpassed. \cite{WNW15}. 
	
	QHEs are radically different from classical engines, since energy fluctuations are much more prominent due to the small number of particles involved, and many assumptions for bulk systems such as ergodicity do not hold. The laws of thermodynamics for small quantum systems are more restrictive due to finite-size effects \cite{HO-limitations,2ndlaw,tajima2014refined,WNW15,quan2014maximum}. It turns out that such laws introduce additional restrictions on the performance of QHEs~\cite{WNW15}. 
	Specifically, not all QHEs can even achieve the Carnot efficiency. The maximal achievable efficiency depends not only on the temperatures, but also on the Hamiltonian structure of the baths involved. 
	Furthermore, considering a probabilistic approach towards work extraction, \cite{verley2014unlikely} found that the achievement of Carnot efficiency is a very unlikely event, when considering energy fluctuations in the microregime.
	
	Can we design a QHE that operates between genuinely thermal reservoirs and yet achieves a high efficiency? To answer this question, several protocols have been proposed \cite{esposito2010quantum,skrzypczyk2014work} and analyzed, showing that they operate at the Carnot efficiency. 
	Crucial to these results is the definition of work. 
In these approaches, the most common approach of quantifying work is to measure the average increase in energy of an ancillary system, sometimes referred to as the battery, after a certain work extraction protocol \cite{skrzypczyk2014work,PhysRevLett.111.240401,extworkcorr,GHRdR2016,vinjanampathy2015quantum}. 
	The quality of work extracted is usually argued to be good by quantifying higher moments of the energy distribution, or by restricting the amount of entropy to be low. However, while such approaches limit the amount of heat contribution to the energy extracted, they do not completely prevent it. What's more, no justification goes into using such a definition of work. 
	A universally agreed upon definition of performing microscopic work is lacking, and it remains a constantly debated subject in the field of quantum thermodynamics \cite{aaberg2013truly,definingworkEisert,gemmer2015single,hossein2015work,vinjanampathy2015quantum}.
	This is mainly why a complete picture describing the performance limits of a QHE remains unknown. 
	
	The goal of our paper is to show that average energy increase is not an adequate definition of work for microscopic quantum systems when considering heat engines, even when 
imposing further restrictions such as a limit on entropy increase. Specifically, we demonstrate that if one allows for a definition of work that tolerates a non-negligible entropy increase in the battery, then one can in fact exceed the Carnot efficiency. Most importantly, this can already happen when 1) the cold bath only consists of 1 qubit, where we know that finite-size effects further impede the possibility of thermodynamic state transitions, and 2) without using any additional resources such as non-thermal reservoirs in EQHEs. 
The reason for being able to surpass the Carnot efficiency stems from the fact that heat contributions have ``polluted" our definition of work extraction. 
	We show that work can be divided into different categories: perfect and near-perfect work, where heat (entropy) contributions are negligible with respect to the energy gained; while imperfect work characterizes the case where heat contributions are comparable to the amount of energy gain. We find examples of extracting imperfect work where the Carnot efficiency is surpassed. This completes our picture of the understanding of work in QHEs, since we already know that by drawing perfect/near perfect work, no QHE can ever surpass the Carnot efficiency \cite{WNW15}.

	\section*{General setting of a heat engine}\label{section:1Setting}
	\subsection*{The setup}
	Let us first describe a generic quantum heat engine, which is in essence, a procedure for extracting work from two thermal baths at a temperature difference. 
	Such an engine comprises of four basic elements: two thermal baths at distinct temperatures $T_\hot$ and $T_\cold$ respectively, a machine, and a battery. 
	The machine interacts with these baths in such a way that utilizes the temperature difference between the two baths to perform work extraction. 
	The extracted work can then be transferred and stored in the battery, while the machine returns to its original state. 
	The total Hamiltonian 
	\begin{equation}\label{eq:totalH}
		\hat H_t = \hat H_\cold +\hat H_\hot + \hat H_\mach +\hat H_\batt,
	\end{equation}
	is the sum of each individual Hamiltonian, where the indices Hot, Cold, M, W represent a hot thermal bath (Hot), a cold thermal bath (Cold), a machine (M), and a battery (W) respectively. 
	Let us also consider an initial state $\rho^{0}_\total=\tau_\cold^{0}\otimes \tau_\hot^{0}\otimes \rho_\mach^{0}\otimes \rho_\batt^{0}$.
	We assume the systems were initially brought together in an uncorrelated fashion, because they have not interacted with each other beforehand.
	The state $\tau_\hot^{0}$ ($\tau_\cold^{0}$) is the initial thermal state at temperature $T_\hot$ ($T_\cold$), corresponding to the hot (cold) bath Hamiltonian $\hat H_\hot (\hat H_\cold)$, and $T_\cold<T_\hot$. 
	More generally, given any Hamiltonian $\hat H$ and temperature $T$, the thermal state is defined as $\tau=\frac{1}{\tr (e^{-\hat H/k_B T})}	e^{-\hat H/k_B T}$. 
	For notational convenience, we shall often work with inverse temperatures, defined as $\beta_h:=1/k_B T_\hot$ and $\beta_c:=1/k_B T_\cold$ where $k_B$ is the Boltzmann constant. 
	The initial machine $(\rho_\mach^0, \hat H_\mach)$ can be chosen arbitrarily, as long as its final state is preserved (and therefore the machine acts like a catalyst). 

	We adopt a thermodynamic resource theory approach \cite{brandao2013resource,HO-limitations}, allowing all unitaries $U$ on the global system such that $[U,\hat H_\total]=0$. Such a set of operations conserve the energy of the global system, which is a requirement based on the first law of thermodynamics. 
	If $(\tau_\hot^0, \hat H_\hot)$ and $(\rho_\mach^0, \hat H_\mach)$ can be arbitrarily chosen, then any such unitary $U$, $(\tau_\hot^0, \hat H_\hot)$ and $(\rho_\mach^0, \hat H_\mach)$  defines a \emph{catalytic thermal operation} \cite{2ndlaw} which one can perform on the joint state $\CW$. This implies that the cold bath is used as a non-thermal resource, relative to the hot bath. 
	By catalytic thermal operations that act on the cold bath, using the hot bath as a thermal reservoir, and the machine as a catalyst, one can extract work and store it in the battery. 

	The aim is to achieve a final \emph{reduced} state $\rho^1_\total$, such that
	\begin{align}\label{eq:rho1}
		\rho_\CMW^1 = \tr_\hot (\rho^{1}_\total)=\rho_\CW^{1}\otimes \rho_\mach^{1},\qquad\qquad
	\end{align}
	where $\rho_\mach^{1}=\rho_\mach^{0}$, and $\rho_\cold^{1}$ is the final joint state of the cold bath and battery. 
	For any bipartite state $\rho_{\rm AB}$, we use the notation of reduced states $\rho_{\rm A}:=\tr_\textup{B}(\rho_{\rm AB})$.

	Finally, we need to describe the battery such that the state transformation from $\rho^{0}_\total$ to $\rho^{1}_\total$ stores work in the battery. 
	This is done as follows: consider the battery which has a Hamiltonian (written in its diagonal form)
	$\hat H_\batt:=\sum_{i=1}^{n_\batt} E^\batt_i|E_i\ra \la E_i|_\batt$.
	For some parameter $\varepsilon\in [0,1)$, we consider the initial and final states of the battery to be
	\begin{align}
		\rho_\batt^{0}&=|E_j\ra\la E_j|_\batt\label{eq:rhobatt_int}\\
		\rho_\batt^{1}&=(1-\varepsilon)|E_k\ra\la E_k|_\batt+\varepsilon|E_j\ra\la E_j|_\batt\label{eq:rhobatt_fin}
	\end{align}
	respectively. This can be seen as a simple form of extracting work: going from a pure energy eigenstate to a higher energy eigenstate (except with some probability of failure). In \cite{WNW15}, it has been shown that a much more general form of the battery states may be allowed.
	The extracted work during a transformation $\Wext$ is then defined as the energy difference 
	\be\label{eq:Wextdef}
		\Wext:=E_k^\batt-E_j^\batt.\qquad\qquad\qquad\qquad\quad
	\ee
	where we define $E_k^\batt>E_j^\batt$ such that $\Wext>0$. 
	The parameter $\varepsilon$ corresponds to the failure probability of extracting work, usually chosen to be small. 

	To summarize, so far we have made the following minimal assumptions:
	\begin{itemize}
		\item [\textbf{ (A.1)}]Product state: 
			There are no initial correlations between the cold bath, machine and battery. 
			Initial correlations we assume do not exist, since each of the initial systems are brought independently into the process. 
			This is an advantage of our setup, since if one assumed initial correlations, one would then have to use unknown resources to generate them in the first place. 
		\item [\textbf{ (A.2)}]Perfect cyclicity: 
			The machine undergoes a cyclic process, i.e. $\rho_\mach^0=\rho_\mach^{1}$, and is also not correlated with the rest of the cold bath and battery, as described in Eq.~\eqref{eq:rho1}. This is to ensure that the machine does not get compromised in the process: since if $\rho_\mach^0$ was initially correlated with some reference system R, then by monogamy of entanglement, correlations between $\rho_\mach^1$ and $\rho_\CW^1$ would potentially destroy such correlations between the machine M with R.
		\item[\textbf{ (A.3)}]Isolated quantum system: 
			The heat engine as a whole, is isolated from and does not interact with the world. This assumption ensures that all possible resources in a work extraction process have been accounted for.
		\item[\textbf{ (A.4)}] Finite dimension: 
			The Hilbert space associated with $\rho_\total^0$ is finite dimensional but can be arbitrarily large. Moreover, the Hamiltonians $\hat H_\cold,$ $\hat H_\hot,$ $\hat H_\mach$ and $\hat H_\batt$ all have bounded pure point spectra, meaning that these Hamiltonians have eigenvalues which are bounded. This assumption comes from the resource theoretic approach of thermodynamics \cite{HO-limitations}.
	\end{itemize}
		
	\subsection*{Quantifying work and efficiency} 
	From \cite{WNW15} we know that there is an interplay between the values of $\varepsilon$ with the maximum extractable work, $\Wext$. 
	Let us first look at $\varepsilon$: this failure probability injects a certain amount of entropy into the battery's final state, therefore compromising the quality of extracted work. For an initially pure battery state, let $\Delta S$ denote the von Neumann entropy of the final battery state,
	\be\label{eq:DeltaSDef}
		\Delta S:=-\rho_\batt^1\ln\rho_\batt^1=-\varepsilon\ln \varepsilon -(1-\varepsilon)\ln (1-\varepsilon).
	\ee
	Since the probability distribution of the final battery state has its support on a two-dimensional subspace of the battery system, this definition also coincides with the binary entropy of $\varepsilon$, denoted by $h_2 (\varepsilon)$.
	
	The more entropy $\Delta S$ is created in the battery, the more disordered is the energy one extracts, i.e. the larger are the heat contributions. Ideally, we would like zero entropy; where the final state of the battery is simply a pure energy eigenstate $\ketbra{E_k}{E_k}_\batt$ with $E_k^\batt-E_j^\batt>0$. Not only then we obtain a net increase in energy, but also we have full knowledge of the final battery state. Taking another extreme example, for a fixed amount of average energy increase, $\Delta S$ is maximized when the final state of the battery is thermal. A thermal state by itself cannot then be used to obtain work; it has to be combined with other resources (for example, another heat bath at a different temperature) in order to obtain ordered work.
	
	However, the absolute value of $\Delta S$ is less important by itself; instead we want to compare it with the amount of energy extracted. Therefore, we may categorize work into the following regimes: 
	\begin{definition}(Perfect work \cite{WNW15})\label{def:perfectwork}
		An amount of work extracted $W_\textup{ext}$ is referred to as \textup{perfect work} when $\varepsilon=0$.
	\end{definition}	
	\begin{definition}(Near perfect work \cite{WNW15})\label{def:nearperfectwork}
		An amount of work extracted $W_\textup{ext}$ is referred to as \textup{near perfect work} when
		\begin{itemize}
			\item[1)] $0< \varepsilon\leq l,~$ for some fixed $l<1$ and
			\item[2)] $\displaystyle 0<\frac{\Delta S}{W_\textup{ext}}<p~$ for any $~p>0$, 
						i.e. $\displaystyle\frac{\Delta S}{W_\textup{ext}}$ is arbitrarily small.
		\end{itemize}
	When $\Wext$ is finite, items 1) and 2) are both satisfied only in the limit $\varepsilon\rightarrow 0$, if and only if $\displaystyle\lim_{\varepsilon\rightarrow 0^+} \frac{\Delta S}{W_\textup{ext}} =0.$
	\end{definition} 
	\begin{definition}(Imperfect work)\label{def:imperfectwork}
	An amount of work extracted $W_\textup{ext}$ is referred to as \textup{imperfect work} when
		\begin{itemize}
			\item[1)] $0< \varepsilon\leq l,~$ for some fixed $l<1$ and
			\item[2)] $\displaystyle \frac{\Delta S}{W_\textup{ext}}=p~$ for some value $~p>0$, 
						i.e. $\displaystyle\frac{\Delta S}{W_\textup{ext}}$ is lower bounded away from zero.
		\end{itemize}
	\end{definition}

	Let us also define the notion of a \emph{quasi-static} heat engine, which will be important in our analysis. 
	\begin{definition}(Quasi-static \cite{WNW15})\label{def:quasistatic}
		A heat engine is \textup{quasi-static} if the final state of the cold bath is a thermal state and its inverse temperature $\beta_f$ only differs infinitesimally from the initial cold bath temperature, i.e. $\beta_f=\beta_c-g$, where $0<g\ll 1$. We also refer to $g$ as the quasi-static parameter.
	\end{definition}
	
	Having fully described the QHE setup, one then asks: for what values of $\Wext$ can the transition $\rho_\total^0\rightarrow\rho_\total^1$ occur? The possibility of such a thermodynamic state transition depends on a set of conditions derived in \cite{2ndlaw}, phrased in terms of quantities called generalized free energies (see Appendix for more details). These conditions place upper bounds on the amount of work $\Wext$ extractable, and since our initial states are block-diagonal in the energy eigenbasis, these second laws are necessary and sufficient to characterize a transition.

	The efficiency of a particular heat engine is given by 
	\begin{equation}\label{eq:def_efficiency}
	\eta := \frac{\Wext}{\Delta H},
	\end{equation}
	where $\Delta H = \tr (\hat H_\hot \tau_\hot^0) - \tr (\hat H_\hot \rho_\hot^1)$. This can be simplified by noting that the total Hamiltonian in Eq.~\eqref{eq:totalH} is simply the individual sum of each system's free Hamiltonian, and therefore for any state $\rho_\total$,
	\begin{equation}
	tr (\hat H_\total \rho_\total) = \tr (\hat H_\hot \rho_\hot)+\tr (\hat H_\cold \rho_\cold)+\tr (\hat H_\batt \rho_\batt)+\tr (\hat H_\mach \rho_\mach).
	\end{equation}
	If we define the terms
	\begin{align*}
	\Delta C &= \tr (\hat H_\cold \rho_\cold^1) - \tr (\hat H_\cold \tau_\cold^0),\\
	\Delta W &= \tr (\hat H_\batt \rho_\batt^1) - \tr (\hat H_\batt \rho_\batt^0),
	\end{align*}
	then we see that since total energy is preserved in the process,
	\begin{align*}
	\tr (\hat H_\hot \tau_\hot^0)+\tr (\hat H_\cold \tau_\cold^0)+\tr (\hat H_\batt \rho_\batt^0)+\tr (\hat H_\mach \rho_\mach^0) 
	&= \tr (\hat H_\hot \rho_\hot^1)+\tr (\hat H_\cold \tau_\cold^1)+\tr (\hat H_\batt \rho_\batt^1)+\tr (\hat H_\mach \rho_\mach^1) .
	\end{align*}
	By noting that $\rho_\mach^0=\rho_\mach^1$ and rearranging terms, we have $\Delta H = \Delta C + \Delta W$. Furthermore, note that because of Eqns.~\eqref{eq:rhobatt_int} and \eqref{eq:rhobatt_fin}, we have $\Delta W = (1-\varepsilon)\Wext$. Hence, according to Eq.~\eqref{eq:def_efficiency}, we have
	\begin{equation}\label{eq:simplified_eff1}
		\eta^{-1} =  1-\varepsilon + \frac{\Delta C}{\Wext}.
	\end{equation}
	\section*{Results}	
	The main result of this paper is that: we show that Carnot efficiency can be surpassed in a single-shot setting of work extraction, even without using additional non-thermal resources. We obtain this result through deriving an analytical expression for the efficiency of a QHE in the quasi-static limit, when extracting imperfect work. 

      Consider the probability $\varepsilon$ where the final battery state is not in the state $\ketbra{E_k}{E_k}$, as according to Eq.~\eqref{eq:rhobatt_fin}. This is also what we call the failure probability of extracting work. The limit $\varepsilon\rightarrow 0$ is the focus of our analysis for several reasons. Firstly, recall that when categorizing the quality of extracted work, one is interested not only in the absolute values of entropy change in the battery, which we have denoted as $\Delta S$. Rather, this entropy change compared to the amount of extracted work $\Wext$, in other words the ratio $\frac{\Delta S}{\Wext}$ is the quantity of importance.
    For any given finite $n$ number of cold bath qubits, the amount of work extractable is finite. Extracting near perfect work means that the entropy $\Delta S$ should be negligible compared with extractable work $\Wext$, as we have also seen in Def.~\ref{def:nearperfectwork}. Since according to Eq.~\eqref{eq:DeltaSDef}, $\Delta S = h_2 (\varepsilon) \geq \varepsilon$, therefore we are concerned with the limit where $\varepsilon$ is arbitrarily small.
    On the other hand, now consider imperfect work. The quasi-static limit, i.e. $g\rightarrow 0$ is the focus of our analysis that aims to provide examples of imperfect work extraction. In the quasi-static limit, since the cold bath changes only by an infinitesimal amount, therefore the amount of work extractable $\Wext$ is also infinitesimally small. 
    For most cases of imperfect work (when the ratio of $\frac{\Delta S}{\Wext}$ is finite) we know that $\Delta S$ is vanishingly small, and therefore so is the quantity $\varepsilon$.
    
        	In \cite{WNW15}, it has already been shown that perfect work is never achievable, while considering near perfect work allows us to sometimes achieve arbitrarily near to Carnot efficiency, but not always. 
	Therefore, our results, when combining with \cite{WNW15} provide the full range of possible limits for $\frac{\Delta S}{\Wext}$, with the corresponding findings about the maximum achievable efficiency, which we summarize in Table \ref{table:1}. 
        \begin{table*}[ht]
    \begin{tabular}{|*{3}{c|}}
    \hline
     \multicolumn{2}{|c|}{\textbf{Type}} & \textbf{Maximum efficiency} \\[5pt] 
    \hline
    \rule{0pt}{3ex} 
     Perfect work &$\varepsilon=0$ \cite{WNW15} & ~Work extraction for any $\Wext >0$ is not possible. \\  [3pt] 
    \hline 
    \rule{0pt}{4.2ex} 
     ~~ Near perfect work \qquad & $\displaystyle\lim_{\varepsilon\rightarrow 0}\frac{\Delta S}{\Wext}=0$~\cite{WNW15} & \multirow{4}{*}{\specialcell{$\eta_C$ is the theoretical maximum, and can only be approached uniquely\\in the quasi-static limit. However, $\eta_C$ can be approached only if certain~\\conditions on the bath Hamiltonian are met. Otherwise, the maximum\\attainable efficiency is strictly upper bounded away from $\eta_C$.}} \\  [8pt] 
    \hhline{--~}
    \rule{0pt}{1ex} 
     &  & \\ [3pt] 
        \hhline{~~}
    \rule{0pt}{3.8ex} 
  \multirow{2}{*}{\specialcell{Imperfect work\\(this paper)}} & $\quad\displaystyle\lim_{\varepsilon\rightarrow 0}\frac{\Delta S}{\Wext}= p\in (0,\infty)\quad$ & \\ [10pt] 
    \hhline{~--}
    \rule{0pt}{3ex} 
    & $\infty$ & ~Unknown, however examples of exceeding CE can be found. \\ [3pt] 
    \hline 
    \end{tabular}
    \caption{Different regimes of work corresponding to different limits of the ratio $\displaystyle\lim_{\varepsilon\rightarrow 0} \frac{\Delta S}{\Wext}$.}\label{table:1}
    \end{table*}
    
	    Theorem \ref{thm:mainresult} formally states our main result. This theorem establishes a simplification of the efficiency of a quasi-static heat engine, given a cold bath consisting of $n$ identical qubits, each with energy gap $E$. In this theorem, we consider a special case where the failure probability $\varepsilon\propto g$ is proportional to the quasi-static parameter $g$ (see Def.~\ref{def:quasistatic}), and evaluate the efficiency in the limit $g\rightarrow 0$. In the appendix, we show that this corresponds to extracting imperfect work, in particular, $\displaystyle\lim_{\varepsilon\rightarrow 0}\frac{\Delta S}{\Wext}=\infty$. For such a case, we show that whenever 
$E < \frac{1}{2(\beta_c-\beta_h)}$,
    then for some parameter $\alpha^*$, we can choose the proportionality constant $c (\alpha^*) = \frac{\varepsilon}{g}$ such that the corresponding efficiency of such a heat engine is given by a simple analytical expression.
    Therefore, by numerically evaluating such an expression for different parameters $\beta_c, \beta_h, E, n, \alpha^*$ etc, one can find examples of surpassing the Carnot efficiency. 
	\begin{theorem}[\textbf{Main Result}]\label{thm:mainresult}
		Consider a quasi-static heat engine with a cold bath consisting of $n$ identical qubits with energy gap $E>0$. Given the inverse temperatures of the hot and cold bath $\beta_h,\beta_c>0$ respectively, and for $\alpha\in (1,\infty)$ define the functions 
		\begin{equation}
		B_\alpha= \frac{E}{1+e^{\beta_c E}}\cdot \frac{e^{(\beta_h+\alpha\beta_c)E}-e^{(\beta_c+\alpha\beta_h)E}}{e^{\alpha\beta_h E}+e^{(\beta_h+\alpha\beta_c)E}}
		\end{equation}
		and $B_\alpha' = \diff{B_\alpha}{\alpha}$ being the first derivative of $B_\alpha$ according to $\alpha$.
		If the energy gap of the qubits satisfy
		\begin{equation}\label{eq:mainresult_Egap}
		 0<E<\frac{1}{2(\beta_c-\beta_h)},
		 \end{equation} 
		 then there exists an $\alpha^*\in(1,2)$ such that the failure probability 
		 \begin{equation}\label{eq:mainresult_eps}
		 \varepsilon = g\cdot n [\alpst(\alpst-1)B_{\alpst}' - B_{\alpst}] >0,
		 \end{equation}
		 and the inverse efficiency (Eq.~\eqref{eq:simplified_eff1}) of the described heat engine is given by 
		 \begin{equation}\label{eq:mainresult_eff}
		 \eta^{-1} = 1 + \frac{\beta_h}{\beta_c-\beta_h}\frac{1}{{\alpha^*}^2}\frac{B_1'}{B_{\alpha^*}'}.
		 \end{equation}
	\end{theorem}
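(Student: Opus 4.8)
The plan is to recast the feasibility of $\rho^0_\total\to\rho^1_\total$ entirely in terms of the generalized free energies (the $\alpha$-R\'enyi divergences referenced in the Appendix), which under (A.1)--(A.4) and the block-diagonality of every state are necessary and sufficient for a catalytic thermal operation. Writing $D_\alpha(\cdot)$ for the order-$\alpha$ divergence of a state from the relevant $\beta_h$-thermal reference and exploiting its additivity across the product structure of Eq.~\eqref{eq:rho1}, the transition is possible iff $\Delta D_\alpha^\batt\le\Delta D_\alpha^\cold$ for all $\alpha\ge 0$, where $\Delta D_\alpha^\cold=D_\alpha(\tau_\cold^0)-D_\alpha(\rho_\cold^1)$ and $\Delta D_\alpha^\batt=D_\alpha(\rho_\batt^1)-D_\alpha(\rho_\batt^0)$. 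The entire theorem is then a first-order-in-$g$ analysis of this one-parameter family of inequalities together with the optimisation that makes one of them tight.

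First I would expand the cold-bath side. As the $n$ identical qubits pass from $\beta_c$ to $\beta_f=\beta_c-g$, additivity reduces $\Delta D_\alpha^\cold$ to $n$ times a single-qubit divergence $d_\alpha$, and a direct Taylor expansion in $g$ should give exactly $\Delta D_\alpha^\cold=g\,n\,\tfrac{\alpha}{\alpha-1}B_\alpha+O(g^2)$; this is how $B_\alpha$ enters. I would record the limits $B_1=0$ and $B_1'=E^2x(\beta_c-\beta_h)/(1+x)^2$ with $x:=e^{\beta_c E}$, since $B_1'$ will control $\Delta C$. Next I would treat the battery with $\varepsilon=cg$ and $\Wext=O(g)$. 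The crucial observation is that for $\alpha>1$ the term $\varepsilon^\alpha=(cg)^\alpha$ is $o(g)$ and drops out, so $\Delta D_\alpha^\batt=\beta_h\Wext-\tfrac{\alpha}{\alpha-1}cg+O(g^2)$; I would separately verify that the inequalities at $\alpha\le 1$ (where instead a binary-entropy term $\sim cg\ln(1/g)$ dominates) are strictly slack, so that only $\alpha>1$ can bind.

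The feasibility conditions then collapse to $\beta_h\Wext\le g\,f(\alpha)$ for every $\alpha>1$, with $f(\alpha)=\tfrac{\alpha}{\alpha-1}(nB_\alpha+c)$, so that the maximal work is $\Wext=g\min_{\alpha>1}f(\alpha)/\beta_h$. Requiring the minimiser to sit at a chosen $\alpst$ is the stationarity condition $f'(\alpst)=0$, which rearranges precisely to $c=n[\alpst(\alpst-1)B_{\alpst}'-B_{\alpst}]$, i.e. Eq.~\eqref{eq:mainresult_eps}; and there $\beta_h\Wext=g\,{\alpst}^2 nB_{\alpst}'$. Finally I would compute $\Delta C=g\,n\,E^2x/(1+x)^2=g\,nB_1'/(\beta_c-\beta_h)$, substitute into $\eta^{-1}=1-\varepsilon+\Delta C/\Wext$, and let $g\to0$ (so $\varepsilon\to0$); the factors of $g$, $n$ and $\beta_h$ cancel and leave Eq.~\eqref{eq:mainresult_eff}.

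The main obstacle is the existence statement: that under $0<E<\tfrac{1}{2(\beta_c-\beta_h)}$ there is a genuine interior minimiser $\alpst\in(1,2)$ with $c=\varepsilon/g>0$. Because $c(1)=0$ and $c'(1)=0$, the sign of $c$ just above $\alpha=1$ is set by the second-order coefficient $\tfrac12 B_1''+B_1'$; computing $B_1''$ and combining with $B_1'$ shows this coefficient carries the sign of $(a-b)+2+x\,(2-(a-b))$ with $a-b:=E(\beta_c-\beta_h)$, which the energy-gap bound makes positive. The same bound, together with the boundary behaviour $f(\alpha)\to+\infty$ as $\alpha\to1^+$ and $f$ finite as $\alpha\to\infty$, is what pins the minimiser below $\alpha=2$. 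Care will also be needed to confirm the stationary point is a minimum (the second-order test on $f$) and that the $g$-expansions are uniform in $\alpha$ on the relevant compact interval, so that the limit $g\to0$ commutes with the optimisation.
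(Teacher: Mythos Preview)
Your overall strategy matches the paper's: expand the generalized free-energy conditions to first order in $g$, reduce the feasibility of the transition to $\beta_h\Wext\le g\,f(\alpha)$ with $f(\alpha)=\tfrac{\alpha}{\alpha-1}(nB_\alpha+c)$, pick $c$ so that a prescribed $\alpst$ is stationary for $f$, and then read off the efficiency from $\Delta C/\Wext$. The computation of $c$, of $\beta_h\Wext=g\,n{\alpst}^2 B_{\alpst}'$, of $\Delta C$, and of the sign of $c$ just above $\alpha=1$ are all exactly as in the paper.

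The genuine gap is in the global-minimum step. Knowing that $f(\alpha)\to+\infty$ as $\alpha\to1^+$ and that $f$ has a finite limit as $\alpha\to\infty$ does \emph{not} pin the minimiser below $2$; indeed it does not even rule out that the infimum is attained at $\alpha=\infty$, nor that there is a second, lower local minimum elsewhere in $(1,\infty)$. The paper closes both holes explicitly. First, it proves (via strict concavity of $f'(\alpha)$, hence at most three stationary points, together with the fact that $W_\infty$ is approached from above) that $W_\alpha$ has at most one local minimum on $(1,\infty)$; so once you exhibit a local minimum at $\alpst$ it is automatically unique. Second, it proves directly that $W_{\alpst}<W_\infty$ under the hypothesis $\alpst E<\tfrac{1}{\beta_c-\beta_h}$, which is precisely what the assumption $E<\tfrac{1}{2(\beta_c-\beta_h)}$ together with $\alpst<2$ guarantees. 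Without this comparison the infimum in $\Wext=\inf_{\alpha>1}W_\alpha$ need not equal $W_{\alpst}$, and your efficiency formula would not follow. You should therefore add (i) an argument excluding additional local minima (the paper's route is the concavity of $f'$), and (ii) the inequality $f(\alpst)<\lim_{\alpha\to\infty}f(\alpha)$, which after your substitution for $c$ reduces to $\tfrac{E}{1+e^{\beta_c E}}-\alpst B_{\alpst}'-B_{\alpst}>0$ and is what actually uses the factor $2$ in the energy-gap bound.
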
	
	
	The approach taken to prove Theorem \ref{thm:mainresult} is further explained in the Methods section.
	
		\begin{figure}[h!]      
		\centering 
        \begin{minipage}{0.32\textwidth}
                \includegraphics[width=\textwidth]{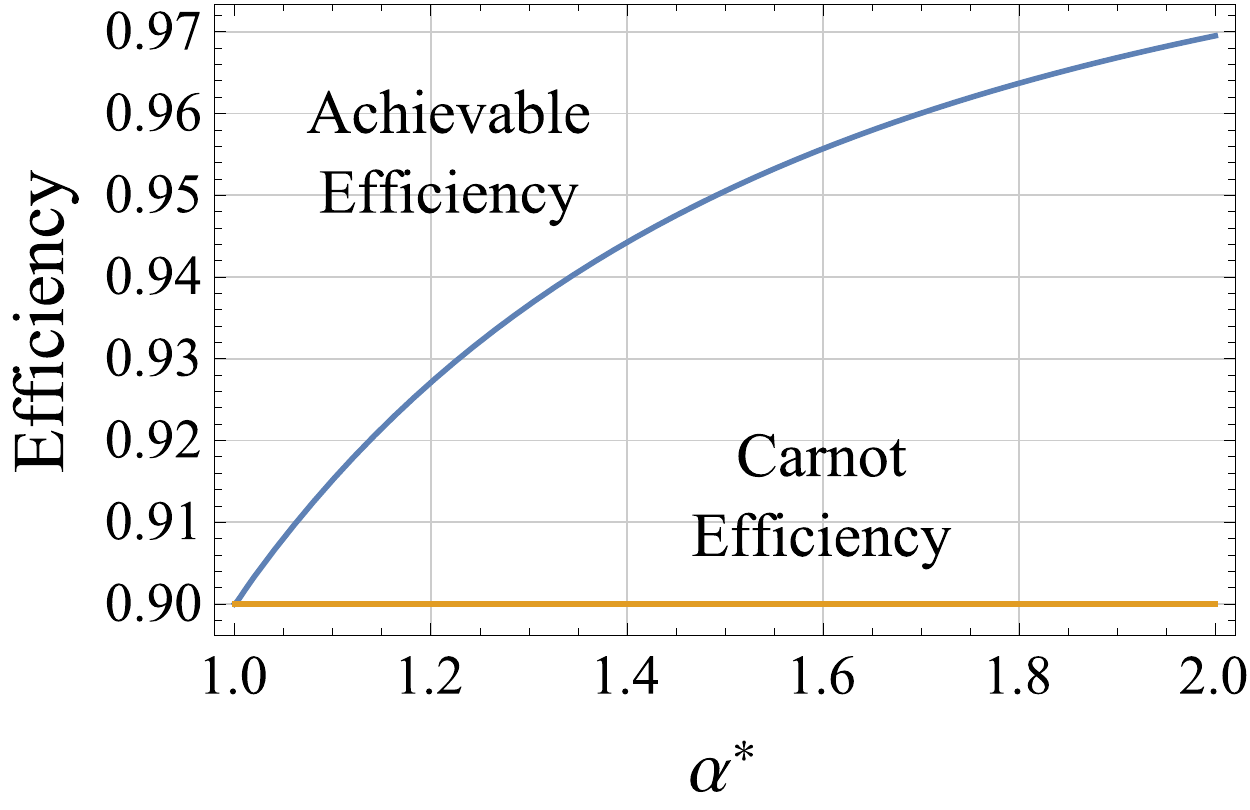}
                \caption{Achievable efficiency versus Carnot efficiency with respect to $\alpha^*\in(1,2)$, $\beta_h = 1$, $\beta_c = 10$ and $E = \frac{0.4}{\beta_c-\beta_h}$.}
                \label{fig:a}
        \end{minipage}\hspace{0.1cm}%
        ~ 
        \begin{minipage}{0.32\textwidth}
                \includegraphics[width=\textwidth]{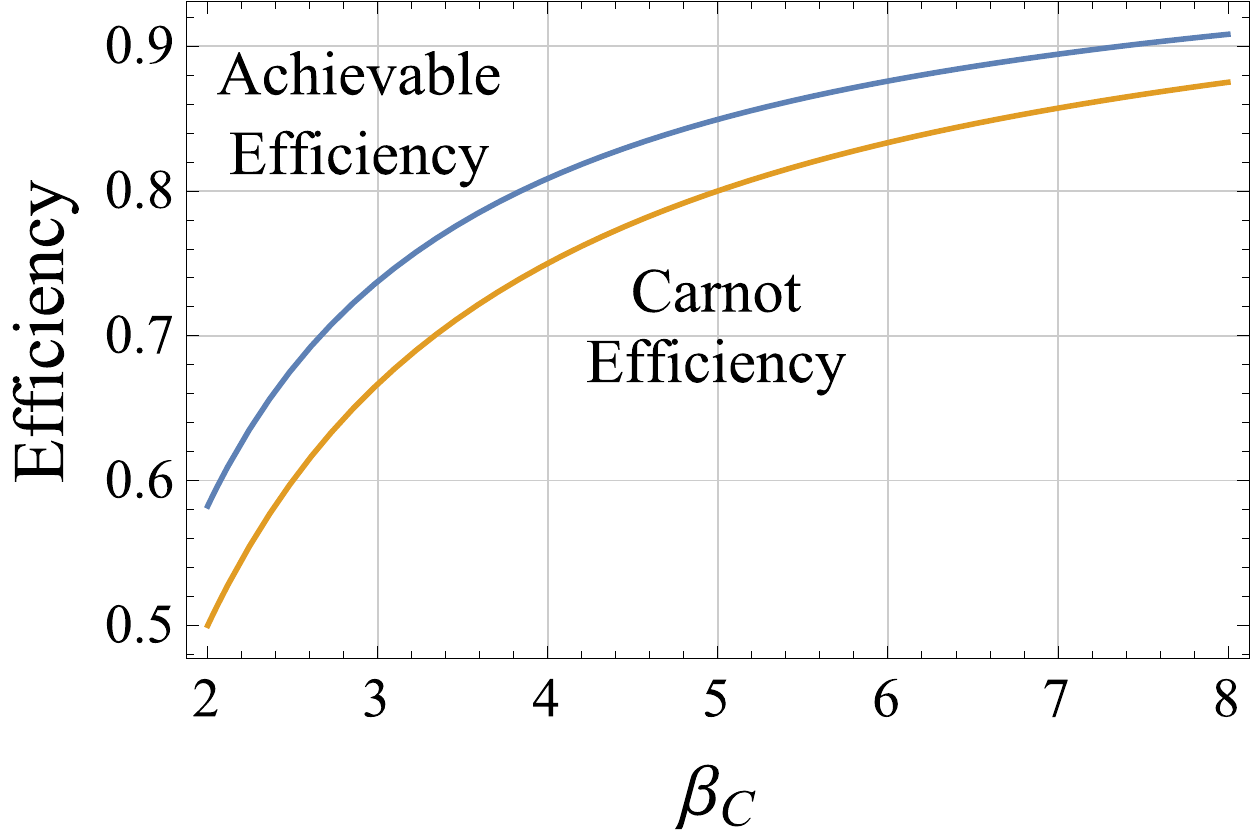}
                \caption{Achievable efficiency versus Carnot efficiency with respect to $\beta_c$, with $\alpha^*=1.2$, $\beta_h=1$, $E=\frac{0.4}{\beta_c-\beta_h}$.}
                \label{fig:b}
        \end{minipage}\hspace{0.1cm}
        ~ 
        \begin{minipage}{0.32\textwidth}
                \includegraphics[width=\textwidth]{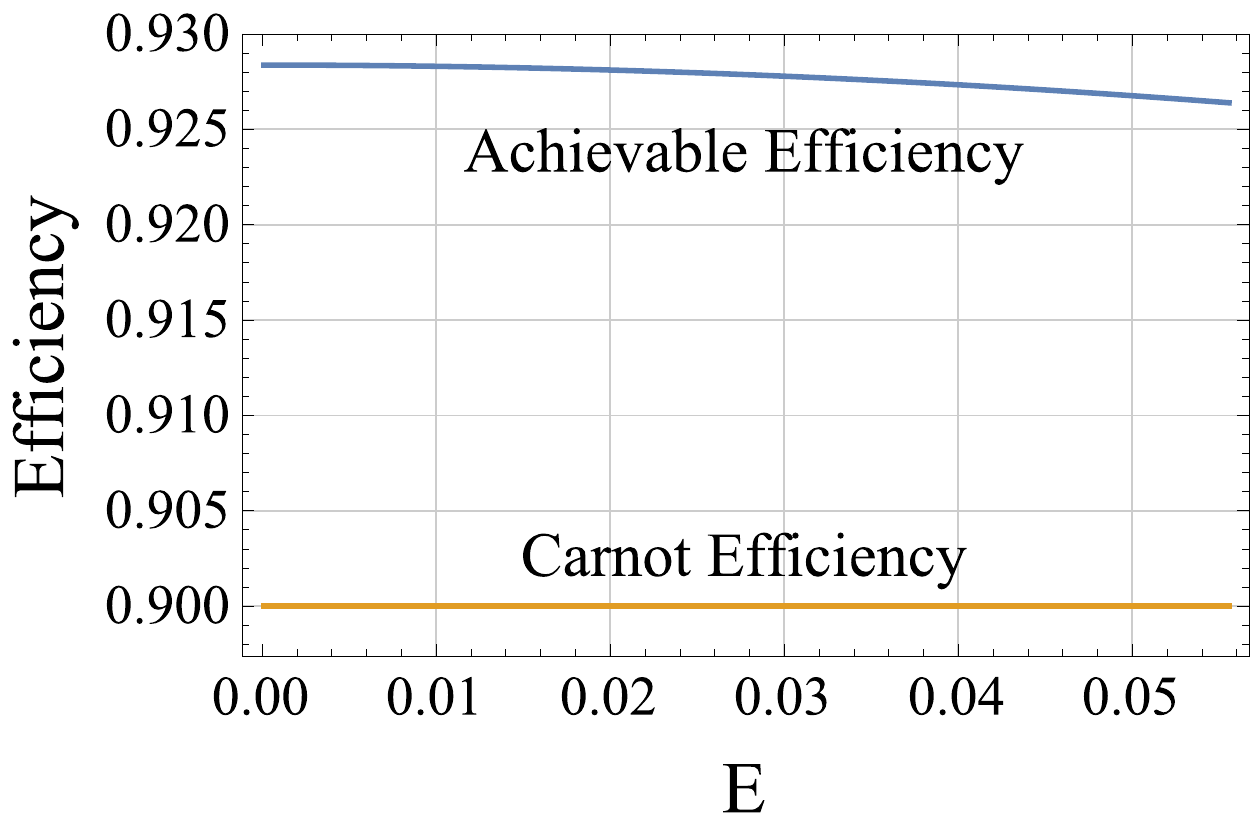}
                \caption{Achievable efficiency versus Carnot efficiency with respect to $E$, with $\alpha^*=1.2$, $\beta_h=1$, $\beta_c=10$, $E=\frac{0.4}{\beta_c-\beta_h}$.}
                \label{fig:c}
        \end{minipage}
        \end{figure}
        
        We plot, in Figures \ref{fig:a}-\ref{fig:c} the comparison between Carnot efficiency and the efficiency achievable according to Theorem \ref{thm:mainresult}. In all these plots we observe that Carnot efficiency is always surpassed. 
        It is worth noting that Eq.~\eqref{eq:mainresult_Egap} is in the regime where if one considers drawing near perfect work, it is always possible to achieve arbitrarily close to Carnot efficiency according to \cite{WNW15}.Therefore, the blue curve never falls below the yellow line. The improvement in efficiency happens most when the parameter $\alpst$ is adjusted, since this is the parameter that determines how quickly the ratio $\frac{\Delta S}{\Wext}\rightarrow\infty$ in the quasi-static limit.
        
Given that in Table \ref{table:1}, the case of $p\in(0,\infty)$ also corresponds to imperfect work, one might wonder if Carnot efficiency can also be surpassed in this regime. We show that this is not possible. However, if only the standard free energy is responsible for determining state transitions, then Carnot efficiency again might be surpassed.         
        

	\section*{Methods}
	There are several steps taken in order to achieve the proof of Theorem \ref{thm:mainresult}, which we outline in this section. For details, the reader is referred to Corollary \ref{cor:HamiltonianConditionsForAlphaStarMinimization} and its proof in the Appendix, which directly implies Theorem \ref{thm:mainresult}. 
	
	Theorem \ref{thm:mainresult} is obtained by considering a cold bath of $n$-identical qubits, and calculating the ratio of extractable work $\Wext$ against $\Delta C$ in the quasi-static limit, i.e. $g\rightarrow 0^+$. Then, by using Eq.~\eqref{eq:simplified_eff1}, one can evaluate the efficiency. The main difficulty lies in evaluating $\Wext$, the amount of extractable work. This quantity represents the maximum value of the battery's energy gap, such that a transition $\tau_{\beta_\cold}\ot\rho_\batt^0\rightarrow\rho_\cold^1\ot\rho_\batt^1$ is possible according to the generalized second laws described in Appendix \ref{sec:2secondlaws}.
	Applying the generalized second laws, we can calculate $\Wext$, which is given by a minimization problem over the continuous range of a real-valued variable $\alpha > 0$,
	\begin{equation}\label{eq:Wextinf}
 			\Wext =\inf_{\alpha > 0} W_\alpha,
 			\end{equation}
 			where
		 \begin{align}\label{eq:Walp_identicalqubits_quasistatic1}		 
			W_\alpha &=
			\begin{cases}
				\frac{1}{\beta_h (\alpha-1)} \left[\alpha n g B_\alpha - \varepsilon^\alpha +\alpha\varepsilon\right] + \bo(g^2)+\bo(\varepsilon^{2\alpha})+\bo(g\varepsilon^\alpha)+\bo(\varepsilon^2) &\mbox{if } \alpha\in(0,\infty)\backslash\lbrace 1 \rbrace,\\
				\left[\displaystyle\lim_{\alpha\rightarrow 1^+}\frac{1}{\beta_h (\alpha-1)} \left(\alpha n g B_\alpha - \varepsilon^\alpha +\alpha\varepsilon\right)\right] + \bo (\varepsilon g)+ \bo (\varepsilon^2 \ln\varepsilon) + \bo (\varepsilon^2)+\bo (g^2) &\mbox{if } \alpha=1.
			\end{cases}
		\end{align}
	Therefore, the difficulty of evaluating the efficiency lies 
	in performing the optimization of $W_\alpha$ over $\alpha\in(0,\infty)$, which is neither monotonic nor convex. However, by manipulating our freedom of choosing $\varepsilon$, we show that in certain parameter regimes of $\beta_c, \beta_h, $ and $E$, one can evaluate a simple, analytical expression for $\Wext$. The steps taken are outlined as follows, while all the technical lemmas are proven in the Appendix:
	\begin{enumerate}
	\item We start by choosing the failure probability to be $\varepsilon = \varepsilon_1\cdot g$, where $\varepsilon_1$ is independent of the quasi-static parameter $g$.
	\item Starting out from the expression for extractable work given in Lemma \ref{lem:quasiHE_nidqubits}, we prove that in the quasi-static limit, the regime $\alpha\in (0,1)$ need not be considered in the optimization. This is proven in Lemma \ref{lem:inf_not_below_one}.
	\item We show that the function $W_\alpha$ which we desire to minimize has at most one unique local minima. To do so, we establish technical Lemmas \ref{lem:falpha_is_concave}, \ref{lem:notmorethantwostpt} and \ref{lem:Winfty_approachedfromabove}, in order to arrive at Lemma \ref{lem:onlyonelocalmin}.
	\item We show that $\varepsilon_1$ can be chosen such that $\varepsilon >0$ (Lemma \ref{lem:condition_positive_eps1}), and that we can choose it so that we know that a particular $\alpst \in (1,2)$ corresponds to a local stationary point (Lemma \ref{lem:alpstisstationarypoint}) and specifically a local minima (Lemma \ref{lem:seconddiff>0}). Since we have established Item 3, this implies that we have identified a unique local minima.
	\item We show that under certain conditions, $W_{\alpst} <W_\infty$. This implies that $W_{\alpst}$ corresponds to the global minima which we desire to evaluate. 
	\item The conditions for Items 3-5 are summarized in Corollary~\ref{cor:HamiltonianConditionsForAlphaStarMinimization}, where one can now, by choosing the parameter $\alpst$ directly evaluate $\Wext$ analytically, and therefore use 
		\begin{equation}\label{eq:simplified_eff}
		\eta^{-1} =  1-\varepsilon + \frac{\Delta C}{\Wext}
	\end{equation}
	to calculate the efficiency. The calculation of $\Delta C$ is straightforward once $\rho_\cold^0, \rho_\cold^1$ are fixed, and for the quasi-static limit, we expand $\Delta C$ in terms of the quasi-static parameter $g$. 
	\end{enumerate}

    One can ask whether it is possible to always exceed Carnot efficiency when imperfect work is drawn. For example, observing in Table \ref{table:1} that the case of $p\in(0,\infty)$ also corresponds to imperfect work, one might wonder if a similar result of exceeding Carnot efficiency can be achieved in the regime where $\frac{\Delta S}{\Wext}\rightarrow p$ instead of $\frac{\Delta S}{\Wext}\rightarrow\infty$ (as in the case where $\varepsilon\propto g$). We show in Appendix \ref{subsec:entropy_comparable} that this is not possible, i.e. Carnot efficiency remains the theoretical maximum when the ratio $\frac{\Delta S}{\Wext}$ remains finite in the quasi-static limit. It is interesting to note that, if only the standard free energy is responsible for determining state transitions, then Carnot efficiency again might be exceeded. In conclusion, in the regime where $p$ is finite, the reason that one cannot exceed Carnot efficiency stems from the fact that there exists a continuous family of generalized free energies in the quantum microregime (see Appendix \ref{sec:2secondlaws}).

	\section*{Discussions and Conclusion}
	Why is it important to distinguish between work and heat? Suppose we have two batteries $A_1$ and $A_2$, each containing the same amount of average energy. However, $A_1$ is in a pure, defined energy eigenstate; while $A_2$ is simply a thermal state corresponding to a particular temperature $T_2$. Firstly, note that there is an irreversibility via catalytic thermal operations for these two batteries: the transition $A_1\rightarrow A_2$ might be possible, but certainly $A_2\nrightarrow A_1$, since the free energy of $A_1$ is higher than of $A_2$. This makes $A_1$ a more valuable resource compared to $A_2$. Indeed, if we further consider the environment to be of temperature $T_2$, then having $A_2$ is completely useless: it is passive compared to the environment and cannot be used as a resource to enable more state transitions. On the other hand, $A_1$ can be useful in terms of enabling state transitions. Even more crucially, the full amount of energy contained in $A_1$ can be transferred out, because we have full knowledge of the quantum state.
	
	 Indeed, for the case of extracting imperfect work, and in particular for the choice of $\varepsilon$ proportional to $g$, heat contributions are dominant. This is because in such an example, the average energy in the battery increases, its \emph{free energy} actually decreases. This can be seen because by using Eqns.~\eqref{eq:rhobatt_int}, \eqref{eq:rhobatt_fin} and \eqref{eq:DeltaSDef}, the free energy difference can be written as
		 $\Delta F = (1-\varepsilon) \Wext - \beta^{-1} \Delta S$,
	 and when $\varepsilon\propto g$ in the quasi-static limit, $\Delta S$ is much larger than $\Wext$.
	 This indicates that the free energy difference, instead of average energy difference in the battery would serve as a more accurate quantifier of work. Indeed, by adopting an operational approach towards this problem, \cite{definingworkEisert} has also identified the free energy to be a potentially suitable quantifier. However, also note that for large but finite values of $p$ in Table \ref{table:1}, the free energy of the battery might also decrease in the process; but Carnot efficiency still cannot be surpassed in this regime.
	 
	 Our result therefore serves as a note of caution when it comes to analyzing the performance of heat engines, that quantifying microscopic work simply by the average energy increase in the battery does not adequately account for heat contribution in the work extraction process. Therefore, this might lead to the possibility of surpassing the Carnot efficiency, despite finite-size effects, even in the absence of uniquely quantum resources such as entanglement. For example, the work extraction protocol proposed in \cite{skrzypczyk2014work} indeed corresponds to $\frac{\Delta S}{\Wext}\rightarrow\infty$, when the intial battery state is a pure energy eigenstate. With each step in the protocol, an infinitesimal amount of energy is extracted, while a finite amount of entropy is injected into the battery. 
	This reminds us that work and heat, although both may contribute to an energy gain, are distinctively different in quality (i.e. orderliness). Therefore, when considering small QHEs, it is not only important to propose schemes that extract energy on average, but also ensure that work is gained, rather than heat.
	
\section*{Acknowledgements}
NHYN and SW acknowledge support from STW,
Netherlands, an ERC Starting Grant and an NWO VIDI
Grant. MPW acknowledges support from the Engineering and Physical Sciences Research Council of the United Kingdom.


\newpage
\begin{center}
\textbf{Appendix}
\end{center}
\setcounter{section}{0}
\renewcommand\thesection{\Alph{section}}
\renewcommand\thesubsection{\arabic{subsection}}
	This appendix contains the technical material used and developed in order to prove the results of this paper. In Section \ref{sec:2secondlaws}, we introduce the main tool, namely the generalized second laws that govern a state transition for small quantum systems. Section \ref{subsec:mainresult}  contains a summary of our main result and a proof sketch. Lastly, we list all the results adapted from \cite{WNW15} in Section \ref{subsec:tools}, while the technical lemmas developed in this paper are collected in \ref{subsub:techlemma}.

	\section{Second laws: the conditions for thermodynamical state transitions}\label{sec:2secondlaws}

		Macroscopic thermodynamics says that for a system undergoing heat exchange with a thermal bath (at inverse temperature $\beta$), the Helmholtz free energy
		\begin{equation}
			F(\rho):=\la \hat H \ra_\rho-\frac{1}{\beta}S(\rho),
		\end{equation}
		is necessarily non-increasing. For macroscopic systems, this also constitutes a sufficient condition: whenever the free energy does not increase, we know that a state transition is possible.
		
		However, in the microscopic quantum regime, where only a few quantum particles are involved, it has been shown that macroscopic thermodynamics is not a complete description of thermodynamical transitions. 
		More precisely, not only the Helmholtz free energy, but a whole other family of generalized free energies have to decrease during a state transition. This places further constraints on whether a particular transition is allowed.
		In particular, if the final target state $\rho_\CW^1$ is diagonal in the energy eigenbasis, these laws also give necessary and sufficient conditions for the possibility of a transition $\rho_\CW^0\rightarrow\rho_\CW^1$ via catalytic thermal operations.

		We can apply these second laws to our scenario by associating the catalyst with $\rho^0_\mach$, and considering the heat engine state transition $\rho^0_\batt\otimes\tau_\cold^0 \rightarrow \rho_\CW^1$. 
		Since we start with $\rho^0_\batt\otimes\tau_\cold^0$ which is diagonal in the energy eigenbasis, and since catalytic thermal operations do not create coherences between energy levels, the final state $\rho_\CW^1$ is also diagonal in the energy eigenbasis. Hence, the transition from $\rho^0_\batt\otimes\tau_\cold^0 \rightarrow \rho^1_\batt\otimes\rho_\cold^1$ is possible via catalytic thermal operations iff $\forall \alpha\geq 0$ \cite{2ndlaw},
		\begin{align}\label{eq:2ndlaws_diag}
			F_\alpha(\tau_\cold^0\otimes\rho^0_\batt,\tau_\CW^h)\geq F_\alpha(\rho_\cold^1\otimes\rho^1_\batt,\tau_\CW^h),
		\end{align}
		where $\tau_\CW^h$ is the thermal state of the system at temperature $T_\hot$ of the surrounding bath. 
		The quantity $F_\alpha(\rho,\sigma)$ for $\alpha\geq 0$ corresponds to a family of free energies defined in \cite{2ndlaw}, which can be written in the form
		\begin{align}\label{eq:generalfreeenergy}
			F_\alpha(\rho,\tau_{\beta_h})=\frac{1}{\beta_h} \left[D_\alpha(\rho\|\tau)-\ln Z_{\beta_h}\right],
		\end{align}
 		where $D_\alpha(\rho\|\tau)$ are known as $\alpha$-R{\'e}nyi divergences. Sometimes we will use the short hand $F_\infty:=\lim_{\alpha\rightarrow \infty}F_\alpha$. On occasion, we will refer to a particular transition as being possible/impossible according to the $F_\alpha$ free energy constraint. By this, we mean that for that particular value of $\alpha$ and transition, Eq. \eqref{eq:2ndlaws_diag} is satisfied/not satisfied. The $\alpha$-R{\'e}nyi divergences can be defined for arbitrary quantum states, giving us necessary (but insufficient) second laws for state transitions \cite{2ndlaw,LJR2015description}. However, since we are analyzing states which are diagonal in the same eigenbasis (namely the energy eigenbasis), these laws are both neccesary and sufficient. Also, the R{\'e}nyi divergences can be simplified to
		\begin{align}\label{w no ep}
			D_\alpha(\rho\|\tau)=\frac{1}{\alpha-1}\ln \sum_i p_i^\alpha q_i^{1-\alpha},
		\end{align}
		where $p_i,$ $q_i$ are the eigenvalues of $\rho$ and the state $\tau$. The cases $\alpha=0$ and $\alpha\rightarrow1$ are defined by continuity, namely
		\begin{align}\label{eq:reyi in limits}
			D_0(\rho\|\tau)&=\lim_{\alpha\rightarrow 0^+}D_\alpha(\rho\|\tau)=-\ln \sum_{i:p_i\neq 0}q_i,
			\qquad D_1(\rho\|\tau)=\lim_{\alpha\rightarrow 1}D_\alpha(\rho\|\tau)=\sum_{i}p_i\ln \frac{p_i}{q_i},
		\end{align}
		and we also define $D_\infty$ as
		\begin{align}
			D_\infty(\rho\|\tau)&=\lim_{\alpha\rightarrow \infty^+}D_\alpha(\rho\|\tau)=\ln \max_{i}\frac{p_i}{q_i}.
		\end{align}
		The quantity $D_1(\rho\|\tau)$ is also known as the \emph{relative entropy}, while it can be checked that $F_1 (\rho,\tau)$ coincides with the Helmholtz free energy.
		We will often use the convention $D(\rho\|\tau), F(\rho,\tau)$ in place of $D_1(\rho\|\tau)$ and $F_1(\rho,\tau)$.

	\section{Results}	
	In \cite{WNW15} it has been shown that for a heat engine to extract any positive amount of work at all, $\varepsilon >0$ has to be true. Therefore, perfect work can never be drawn. 
	Also, in \cite{WNW15} the regime of near perfect work was analyzed. There, it was found that the maximum efficiency can never exceed the Carnot efficiency. 	
	
	In this paper, we develop an example of a heat engine which extracts imperfect work. In Section \ref{subsec:mainresult}, we show (our main result) how to find examples where Carnot efficiency is surpassed. More specifically, this occurs in the quasi-static limit where $\frac{\Delta S}{\Wext}\rightarrow\infty$. 
	In Section \ref{subsec:entropy_comparable} we analyze the regime where $\frac{\Delta S}{\Wext}\rightarrow p$, with $0<p<\infty$. We find that in this regime, 
 according to the generalized second laws, Carnot efficiency cannot be surpassed.

		\subsection{Main Result: An example of drawing imperfect work surpassing the Carnot efficiency}\label{subsec:mainresult}
		Our main result is stated in Theorem \ref{thm:mainresult}. Here, we present Corollary \ref{cor:HamiltonianConditionsForAlphaStarMinimization}, a more detailed version of Theorem \ref{thm:mainresult} with its proof, which is built upon all the technical lemmas derived in Section \ref{subsec:technical}.

		\begin{corollary}\label{cor:HamiltonianConditionsForAlphaStarMinimization}
		Consider a quasi-static heat engine with a cold bath consisting of $n$ identical qubits with energy gap $E$. 
		Given the inverse temperatures of the hot and cold bath $\beta_h,\beta_c>0$ respectively, and for $\alpha\in (1,\infty)$ define the functions 
		\begin{equation}
		B_\alpha= \frac{E}{1+e^{\beta_c E}}\cdot \frac{e^{(\beta_h+\alpha\beta_c)E}-e^{(\beta_c+\alpha\beta_h)E}}{e^{\alpha\beta_h E}+e^{(\beta_h+\alpha\beta_c)E}}
		\end{equation}
		and $B_\alpha' = \diff{B_\alpha}{\alpha}$ being the first derivative of $B_\alpha$ according to $\alpha$.
		If the energy gap of the qubits satisfies
		\begin{equation}\label{eq:finalcond}
		 E<\frac{1}{2(\beta_c-\beta_h)},
		 \end{equation} 
		 then there exists an $\alpha^*\in(1,2)$ such that the following holds:
		\begin{enumerate}
		\item The failure probability of the heat engine, can be chosen as $ \varepsilon = g\cdot n [\alpst(\alpst-1)B_{\alpst}' - B_{\alpst}] >0$.
		\item The amount of extractable work is $\Wext = W_{\alpst}$, given by Eq.~\eqref{eq:Walp_identicalqubits_quasistatic}.
		\item The (inverse) efficiency of the described heat engine is given by 
		 $\eta^{-1} = 1 + \frac{\beta_h}{\beta_c-\beta_h}\frac{1}{{\alpha^*}^2}\frac{B_1'}{B_{\alpha^*}'} + \bo(g^{\alpha^*-1}).$
		\end{enumerate}				 
		\end{corollary}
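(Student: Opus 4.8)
The plan is to follow the six-step program of the Methods section, treating the technical lemmas (Lemmas~\ref{lem:inf_not_below_one}--\ref{lem:seconddiff>0}) as established and concentrating on how they combine to identify the global minimiser of $W_\alpha$ and hence the efficiency. The starting point is Eq.~\eqref{eq:Walp_identicalqubits_quasistatic1}, which gives $\Wext=\inf_{\alpha>0}W_\alpha$. First I would fix the scaling $\varepsilon=\varepsilon_1 g$ with $\varepsilon_1$ independent of the quasi-static parameter $g$. The point of this choice is that the bracket $\alpha n g B_\alpha-\varepsilon^\alpha+\alpha\varepsilon$ becomes $g\,(\alpha n B_\alpha+\alpha\varepsilon_1)-\varepsilon_1^\alpha g^\alpha$; since the relevant $\alpha$ exceed $1$, the piece $\varepsilon_1^\alpha g^\alpha$ and all remainders $\bo(g^2),\bo(\varepsilon^{2\alpha}),\bo(g\varepsilon^\alpha),\bo(\varepsilon^2)$ are of strictly higher order in $g$ than the leading term. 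Thus, to leading order, minimising $W_\alpha$ reduces to minimising the $g$-independent function $\alpha(nB_\alpha+\varepsilon_1)/(\alpha-1)$, and by Lemma~\ref{lem:inf_not_below_one} the search can be restricted to $\alpha\ge 1$.

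Next I would locate the minimiser. Differentiating $h(\alpha):=\alpha(nB_\alpha+\varepsilon_1)/(\alpha-1)$ and clearing the common denominator turns $h'(\alpha)=0$ into the stationarity condition $\varepsilon_1=n[\alpha(\alpha-1)B_\alpha'-B_\alpha]$. Read as a definition of $\varepsilon_1$ in terms of a chosen $\alpha$, this shows that for any target $\alpha^*$ the proportionality constant can be tuned so that $\alpha^*$ is stationary, which is the content of Lemmas~\ref{lem:condition_positive_eps1} and \ref{lem:alpstisstationarypoint}, with positivity $\varepsilon_1>0$ (hence $\varepsilon>0$) holding on a suitable subinterval. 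Lemma~\ref{lem:seconddiff>0} confirms it is a local minimum, and the concavity and counting input of Lemmas~\ref{lem:falpha_is_concave}--\ref{lem:Winfty_approachedfromabove}, assembled in Lemma~\ref{lem:onlyonelocalmin}, guarantees at most one local minimum on $(1,\infty)$, so this $\alpha^*\in(1,2)$ is the unique interior candidate.

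The step I expect to be the main obstacle is controlling the boundary behaviour at $\alpha\to\infty$: a unique interior local minimum only gives the global infimum once $W_{\alpha^*}<W_\infty$ is secured. This is precisely where the hypothesis $E<1/[2(\beta_c-\beta_h)]$ of Eq.~\eqref{eq:finalcond} enters, through the comparison of the interior value with the limiting $F_\infty$ value $W_\infty$. I would establish the inequality by evaluating both sides to leading order in $g$ and showing that a sufficiently small gap forces the interior minimiser to win; it is this comparison, rather than the calculus in the preceding steps, that the regime \eqref{eq:finalcond} is engineered to make true, and I expect the delicate part to be bounding $W_\infty$ from below uniformly enough to conclude.

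Finally I would assemble the efficiency. Substituting $\varepsilon_1=n[\alpha^*(\alpha^*-1)B_{\alpha^*}'-B_{\alpha^*}]$ back into the leading term produces the cancellation $\alpha^* nB_{\alpha^*}+\alpha^*\varepsilon_1={\alpha^*}^2(\alpha^*-1)nB_{\alpha^*}'$, giving $\Wext=W_{\alpha^*}=\frac{g\,{\alpha^*}^2 nB_{\alpha^*}'}{\beta_h}+\bo(g^{\alpha^*})$. For $\Delta C$ I would use that the cold bath ends thermal at $\beta_f=\beta_c-g$, Taylor-expand the $n$-qubit internal energy $U(\beta)=nE/(e^{\beta E}+1)$ to first order, and recognise $\Delta C=-g\,U'(\beta_c)+\bo(g^2)=\frac{g\,nB_1'}{\beta_c-\beta_h}+\bo(g^2)$, using the computed value $B_1'=E^2(\beta_c-\beta_h)e^{\beta_c E}/(1+e^{\beta_c E})^2$. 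Inserting both into $\eta^{-1}=1-\varepsilon+\Delta C/\Wext$ and noting $\varepsilon=\varepsilon_1 g\to 0$ then yields $\eta^{-1}=1+\frac{\beta_h}{\beta_c-\beta_h}\frac{1}{{\alpha^*}^2}\frac{B_1'}{B_{\alpha^*}'}+\bo(g^{\alpha^*-1})$, the remainder being of the stated order because the dropped $\varepsilon^{\alpha}$-type correction to the work scales as $g^{\alpha^*}$ against a leading work of order $g$.
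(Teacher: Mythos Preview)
Your proposal is correct and follows essentially the same route as the paper's proof: restrict to $\alpha>1$ via Lemma~\ref{lem:inf_not_below_one}, tune $\varepsilon_1=n[\alpha^*(\alpha^*-1)B_{\alpha^*}'-B_{\alpha^*}]$ to place a stationary point at the chosen $\alpha^*$, verify it is the unique local minimum via Lemmas~\ref{lem:onlyonelocalmin}--\ref{lem:seconddiff>0}, then compare with the boundary value $W_\infty$ and plug into Eq.~\eqref{eq:simplified_eff}. Two small points worth noting: the paper does package the boundary comparison you flag as the ``main obstacle'' into a separate Lemma~\ref{lem:Walpst<Winfty}, whose hypothesis $\alpha^*E<1/(\beta_c-\beta_h)$ combined with $\alpha^*<2$ is exactly what forces the full strength of Eq.~\eqref{eq:finalcond}; and the condition~\eqref{eq:finalcond} is also used (in the weaker forms $E<\tfrac{2}{\beta_c-\beta_h}\tfrac{1+e^{\beta_cE}}{e^{\beta_cE}-1}$ and $E<\tfrac{1}{\beta_c-\beta_h}$) to secure positivity of $\varepsilon_1$ and the second-derivative sign, so it is not solely the $W_{\alpha^*}<W_\infty$ step that consumes the hypothesis.
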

		\begin{proof}
		Since $\frac{1+e^{\beta_c E}}{e^{\beta_c E}-1}>1$, if Eq.~\eqref{eq:finalcond} holds, then Eq.~\eqref{eq:condition_positive_eps1} holds. Therefore Item 1 is a direct result of Lemma \ref{lem:condition_positive_eps1}. 
		
		Item 2 concerns the quantity $\Wext$ for the quasi-static heat engine, given by Eq.~\eqref{eq:Wextinf} and~\eqref{eq:Walp_identicalqubits_quasistatic1}. If one chooses $\alpst\in (1,2)$ and that Eq.~\eqref{eq:finalcond} holds, then Eq.~\eqref{eq:lemma8} holds as well, and so Lemma \ref{lem:seconddiff>0} and Lemma \ref{lem:Walpst<Winfty}. Therefore, Item 2 is true because 
		\begin{itemize}
		\item By Lemma \ref{lem:onlyonelocalmin} we know $W_\alpha$ does not have more than one distinct local minima.
		\item By Lemma \ref{lem:alpstisstationarypoint} and \ref{lem:seconddiff>0}, $W_{\alpst}$ is a unique local minima.
		\item By Lemma \ref{lem:Walpst<Winfty}, $W_{\alpst}<W_\infty$. Therefore, $W_{\alpst}$ is the global minima.
		\end{itemize}
		
		Therefore, finally, for the fixed parameters $n\in\mathbb{Z}^+, E\in\mathbb{R}, \alpha^*\in(1,2)$, we can evaluate the efficiency of our quasi-static heat engine for a cold bath comprising of identical qubits. This can be done by evaluating Eq.~\eqref{eq:simplified_eff} for our heat engine:
		\begin{align}
		\eta^{-1} = 1-\varepsilon + \frac{\Delta C}{W_\textup{ext}}.
		\end{align}
		The term $\varepsilon = \varepsilon_1\cdot g = \bo(g)$, where $\varepsilon_1 = n [\alpha^*(\alpha^*-1)B_{\alpha^*}' - B_{\alpha^*}]$ is a finite constant. Therefore we know $\displaystyle\lim_{g\rightarrow 0^+} \varepsilon =0$. 
		On the other hand, we have
		\begin{align}
		\Wext 
		&= W_{\alpha^*} \\
		&= \frac{1}{\beta_h (\alpha^*-1)} \left[\alpha^* n g B_{\alpha^*} - \varepsilon^{\alpha^*} +\alpha^* \varepsilon\right] + \bo(g^2)+\bo(\varepsilon^{2\alpha})+\bo(g\varepsilon^\alpha)+\bo(\varepsilon^2) \\
		& = \frac{\alpha^* (nB_{\alpha^*}+\varepsilon_1)}{\beta_h (\alpha^*-1)}g +\bo( g^{\alpha^*}) + \bo(g^2)+\bo(\varepsilon^{2\alpha})+\bo(g\varepsilon^\alpha)+\bo(\varepsilon^2)\\
		& = \frac{n{\alpha^*}^2B_{\alpha^*}'}{\beta_h }g +\bo( g^{\alpha^*}) + \bo(g^2)+\bo(\varepsilon^{2\alpha})+\bo(g\varepsilon^\alpha)+\bo(\varepsilon^2).\label{eq:final_expression_Walpst}
		\end{align}
		Since we choose $\varepsilon \propto g$ according to Item 1, and since $\alpha^*\in (1,2)$, the next leading order term in Eq.~\eqref{eq:final_expression_Walpst} is $\bo( g^{\alpha^*})$, therefore
		\begin{equation}\label{eq:final_expression_Walpst1}
		\Wext = \frac{n{\alpha^*}^2B_{\alpha^*}'}{\beta_h }g +\bo( g^{\alpha^*}).
		\end{equation}
		This tells us that $\Wext$ is a function of $g$ that vanishes as $g\rightarrow 0$. Also, from Lemma \ref{lem:DeltaC} we know the expression for $\Delta C$, which also vanishes with $g$.
		Therefore, combining expressions we have for $\varepsilon, \Delta C,$ in Eq.~\eqref{eq:DeltaCexpanded} and  $\Wext$ in Eq.~\eqref{eq:final_expression_Walpst1}, we have Item 3, i.e.
		\begin{align}
		\eta^{-1} &= 1+ \frac{\Delta C}{\Wext}\\
		&= 1 + \frac{\beta_h}{\beta_c-\beta_h}\frac{1}{{\alpha^*}^2}\frac{B_1'}{B_{\alpha^*}'} + \bo(g^{\alpha^*-1}),
		\end{align}
		where in the quasi-static limit ($g\rightarrow 0$), the order term vanishes. This concludes the proof.
		\end{proof}
	
		\begin{figure}[h!]      
		\centering 
        \begin{minipage}{0.315\textwidth}
                \includegraphics[width=\textwidth]{EffComp_alpst_bc10_bh1_G0p4divtempdiff.pdf}
                \caption{Achievable efficiency versus Carnot efficiency with respect to $\alpha^*\in(1,2)$, $\beta_h = 1$, $\beta_c = 10$ and $E = \frac{0.4}{\beta_c-\beta_h}$.}
                \label{fig:aa}
        \end{minipage}\hspace{0.1cm}%
        ~ 
        \begin{minipage}{0.315\textwidth}
                \includegraphics[width=\textwidth]{EffComp_alpst1p2_E0p4divtempdiff_bh1.pdf}
                \caption{Achievable efficiency versus Carnot efficiency with respect to $\beta_c$, with $\alpha^*=1.2$, $\beta_h=1$, $E=\frac{0.4}{\beta_c-\beta_h}$.}
                \label{fig:bb}
        \end{minipage}\hspace{0.1cm}
        ~ 
        \begin{minipage}{0.315\textwidth}
                \includegraphics[width=\textwidth]{EffComp_E_alp1p2_bc10_bh1_g1en100.pdf}
                \caption{Achievable efficiency versus Carnot efficiency with respect to $E$, with $\alpha^*=1.2$, $\beta_h=1$, $\beta_c=10$, $E=\frac{0.4}{\beta_c-\beta_h}$.}
                \label{fig:bc}
        \end{minipage}
\end{figure}
		
		With this, we can numerically plot out the achievable efficiency as a function of $\beta_c,\beta_h,n,E,\alpha^*$, in the limit where $g\rightarrow 0^+$. It is worth noting that from Eq.~\eqref{eq:mainresult_eff}, we see that the efficiency contains terms that originate from the expression of $\varepsilon_1$ chosen in Eq.~\eqref{eq:mainresult_eps}. It is then, perhaps, unsurprising that we observe the surpassing of Carnot efficiency (for some values of $\alpha^*>1$).
		Indeed, although the average energy change in the battery is positive, i.e. $\Delta W = (1-\varepsilon) \Wext >0$, the change in \emph{free energy} of the battery, 
		\begin{equation}
		\Delta F_\batt = F(\rho_\batt^1) - F(\rho_\batt^0) = \Delta W - \beta_h^{-1} \Delta S
		 \end{equation} 
		 is actually negative. This can be seen when we compute the limit 
		\begin{align*}
		\lim_{g\rightarrow 0^+}	\frac{\Delta F_\batt}{\Delta W} = \lim_{g\rightarrow 0^+}\frac{\Delta W - \beta_h^{-1}\Delta S}{\Delta W} = 1 - \beta_h^{-1} \lim_{g\rightarrow 0^+}\frac{\Delta S}{(1-\varepsilon)\Wext} = -\infty,
		\end{align*}
		where the last limit comes from noting that $\displaystyle\lim_{g\rightarrow 0^+}\varepsilon = 0$, and applying Lemma \ref{lem:free_energy_batt}.

	\subsection{Drawing imperfect work with entropy comparable with $\Wext$}\label{subsec:entropy_comparable}
	In this section we analyze the achievable efficiency when considering the quasi-static limit where 
	\begin{equation}\label{eq:ent_comp}
		\frac{\Delta S}{\Wext} \rightarrow c \qquad \textup{for some}~ c>0.
	\end{equation}	
	 One can see that only certain choices of $\varepsilon (g)$ will lead to having such a limit, which we shall see later in detail on Table \ref{table:2}. We prove that for all choices of $\varepsilon$ such that Eq.~\eqref{eq:ent_comp} is true, one cannot surpass the Carnot efficiency.

\begin{theorem}
Consider a quasi-static heat engine where the failure probability of extracting work is $\varepsilon (g)$, $g$ being the quasi-static parameter (see definition in main text), such that
	\be \label{eq:kappabar_1}
\lim_{g\rightarrow 0^+}\frac{\varepsilon^\kappa(g)}{g}=
\begin{cases}
0 &\mbox{if } \kappa\geq1\\
\infty &\mbox{if } \kappa<1.\\
\end{cases}
\ee
and $\displaystyle\lim_{g\rightarrow 0}\frac{\varepsilon\ln\frac{1}{\varepsilon}}{g} = c >0$. Then the maximum achievable efficiency is upper bounded by the Carnot efficiency.
\end{theorem}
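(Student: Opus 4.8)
The plan is to show that, in the quasi-static limit $g\to0^+$, the inverse efficiency $\eta^{-1}=1-\varepsilon+\Delta C/\Wext$ from Eq.~\eqref{eq:simplified_eff1} stays at or above the Carnot value $\eta_C^{-1}=\beta_c/(\beta_c-\beta_h)=1+\beta_h/(\beta_c-\beta_h)$. The hypotheses give $\varepsilon\to0$ (indeed $\lim\varepsilon\ln(1/\varepsilon)/g=c$ forces $\varepsilon/g\to0$, so in particular $\varepsilon\ll g$), and Lemma~\ref{lem:DeltaC} supplies $\Delta C=\frac{ngB_1'}{\beta_c-\beta_h}(1+o(1))$ with $B_1'>0$. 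Since these quantities are independent of the fine details of $\varepsilon(g)$, the whole statement reduces to establishing a single upper bound on the extractable work, namely $\Wext\le\frac{ngB_1'}{\beta_h}(1+o(1))$. Granting this, $\Delta C/\Wext\ge\frac{\beta_h}{\beta_c-\beta_h}(1+o(1))$, whence $\eta^{-1}\ge\eta_C^{-1}$ and therefore $\eta\le\eta_C$ in the limit.

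To obtain the bound I would use $\Wext=\inf_{\alpha>0}W_\alpha\le W_{\alpha^\dagger}$ from Eqns.~\eqref{eq:Wextinf}--\eqref{eq:Walp_identicalqubits_quasistatic1} for a carefully chosen $\alpha^\dagger=1+\delta$, where $\delta=\delta(g)\to0$ but only slowly. Two competing requirements pin down the admissible rate. First, $\delta\to0$ is needed so that $B_{\alpha^\dagger}=\delta B_1'+O(\delta^2)$ (recall $B_1=0$), which makes the leading term $\frac{\alpha^\dagger ngB_{\alpha^\dagger}}{\beta_h(\alpha^\dagger-1)}=\frac{ngB_1'}{\beta_h}(1+O(\delta))$ converge to exactly the target value. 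Second, the entropy-carrying terms $-\varepsilon^{\alpha^\dagger}+\alpha^\dagger\varepsilon$ must be negligible: after dividing by $(\alpha^\dagger-1)=\delta$ they contribute $O(\varepsilon/\delta)$, so I require $\varepsilon/\delta\ll g$, i.e.\ $\delta\gg\varepsilon/g$. Both constraints are compatible because $\varepsilon/g\to0$; a clean choice is $\delta=\sqrt{\varepsilon/g}$, for which $\delta\to0$ while $\varepsilon/\delta=\sqrt{\varepsilon g}\ll g$. With this choice the $\varepsilon$-dependent part of $W_{\alpha^\dagger}$ drops out and $W_{\alpha^\dagger}=\frac{ngB_1'}{\beta_h}(1+o(1))$, giving the required estimate.

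The main obstacle is that the expansion of $W_\alpha$ in Eq.~\eqref{eq:Walp_identicalqubits_quasistatic1} is not uniform as $\alpha\to1^+$: the principal term and each correction $\bo(g^2),\bo(\varepsilon^{2\alpha}),\bo(g\varepsilon^\alpha),\bo(\varepsilon^2)$ carry $\alpha$-dependent prefactors that may grow like $(\alpha-1)^{-1}=\delta^{-1}$, so I must verify that every correction remains $o(g)$ along the path $\alpha^\dagger=1+\sqrt{\varepsilon/g}$. This reduces to checking the hierarchy $g^2/\delta,\ \varepsilon^{2\alpha^\dagger}/\delta,\ g\varepsilon^{\alpha^\dagger}/\delta,\ \varepsilon^2/\delta\ll g$, each of which follows from $\varepsilon\ll g$ together with $\delta=\sqrt{\varepsilon/g}$; this bookkeeping is the delicate part of the argument. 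I emphasize that I never need to analyze $\alpha\in(0,1)$: the hypothesis $\lim\varepsilon^\kappa/g=\infty$ for $\kappa<1$ shows $W_\alpha$ is of order $\varepsilon^\alpha\gg g$ there, so that region cannot drag the infimum below the order-$g$ value already in play, and in any case only an upper bound on $\Wext$ is needed.

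Finally I would explain why the hypotheses matter structurally. The condition $\lim\varepsilon\ln(1/\varepsilon)/g=c>0$ places us squarely in the comparable-entropy regime $\Delta S/\Wext\to\beta_h c/(nB_1')\in(0,\infty)$ and guarantees $\Wext>0$, so the efficiency is well defined. The key phenomenon is that the battery entropy enters $W_\alpha$ only through the degenerate limit $\alpha\to1$: evaluated exactly at $\alpha=1$ one recovers the Helmholtz value $W_1=\frac{ngB_1'}{\beta_h}+\frac{\Delta S}{\beta_h}$, which alone would permit $\eta>\eta_C$, whereas at any $\alpha^\dagger$ approaching $1$ slowly enough ($\delta\gg\varepsilon/g$) the entropy term is absent and the infimum sits at the strictly smaller, entropy-free value $\frac{ngB_1'}{\beta_h}$. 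This is precisely why it is the full family of generalized free energies, and not the Helmholtz free energy on its own, that enforces the Carnot bound in this regime.
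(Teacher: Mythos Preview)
Your proposal is correct and follows essentially the same strategy as the paper: reduce everything to the single work estimate $\Wext\le\frac{ngB_1'}{\beta_h}(1+o(1))$ and combine it with $\Delta C=\frac{ngB_1'}{\beta_c-\beta_h}(1+o(1))$ from Lemma~\ref{lem:DeltaC} to force $\eta^{-1}\ge\eta_C^{-1}$. The only difference is tactical. You run a diagonal argument, picking $\alpha^\dagger(g)=1+\sqrt{\varepsilon/g}\to1^+$ and tracking the error terms along that path; the paper instead keeps $\alpha>1$ \emph{fixed}, lets $g\to0^+$ first (so that $\varepsilon^\alpha/g,\ \varepsilon/g\to0$ by hypothesis and $W_\alpha/g\to\frac{\alpha nB_\alpha}{\beta_h(\alpha-1)}$ pointwise), and only afterwards sends $\alpha\to1^+$ to obtain $\limsup_{g\to0^+}\Wext/g\le\frac{nB_1'}{\beta_h}$. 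This iterated limit is shorter precisely because it sidesteps the non-uniformity obstacle you correctly identify: with $\alpha$ held away from $1$ during the $g$-limit, the $(\alpha-1)^{-1}$ prefactors are harmless and no diagonal bookkeeping is needed. Your closing structural remark---that $W_1$ alone would allow $\eta>\eta_C$, while the neighbouring $W_\alpha$ for $\alpha>1$ do not---exactly matches the paper's conclusion that it is the full family of generalized free energies, not the Helmholtz free energy by itself, that enforces the Carnot bound in this regime.
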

\begin{proof}
Firstly, note that an example for such a choice of $\varepsilon$ can be constructed, i.e. $\varepsilon \ln \frac{1}{\varepsilon} = c\cdot g$. 

We make use of Eq.~\eqref{eq:kappabar_1} to analyze $\Wext$, which is given in Appendix \ref{subsec:technical}. Rewriting Eq.~\eqref{eq:Walp_identicalqubits_quasistatic} by first drawing out a factor of $g$, and neglecting the higher order terms,
		\begin{equation}\label{eq:Wext_as_inf}
			\Wext= g\cdot \inf_{\alpha > 0} \tilde W_\alpha, 
		\end{equation}
where
		 \begin{align}\label{eq:Walp_identicalqubits_quasistatic_eps2}
			\tilde W_\alpha=
			\begin{cases}
			\frac{1}{\beta_h (\alpha-1)} \left[\alpha n B_\alpha - \frac{\varepsilon^\alpha}{g} + \frac{\alpha\varepsilon}{g}\right], &\qquad\mathrm{if } ~\alpha\in(0,\infty)\backslash\lbrace 1 \rbrace,\\[3pt]
			\beta_h^{-1} \left[\displaystyle\lim_{\alpha\rightarrow 1^+} \frac{\alpha n B_\alpha}{\alpha-1} + \frac{\varepsilon\ln\frac{1}{\varepsilon}}{g}\right] = \beta_h^{-1}\left[\displaystyle\lim_{\alpha\rightarrow 1^+} \frac{\alpha n B_\alpha}{\alpha-1} +c\right], &\qquad \alpha=1.
			\end{cases}
		\end{align}
		We are, then, interested in evaluating the minimum over $\tilde W_\alpha$. First of all, note that Eq.~\eqref{eq:kappabar_1} implies that for values of $\alpha\in (0,1)$, the term $\frac{-\varepsilon^\alpha}{g (\alpha-1)}$ goes to infinity as $g\rightarrow 0^+$. This implies that the minimization can be restricted to parameters $\alpha\geq 1$.
		
		On the other hand, for $\alpha >1$ the expression for $\tilde W_\alpha$ can be further simplified in the quasi-static limit,
		\be
			\tilde W_\alpha=
			\frac{\alpha n B_\alpha}{\beta_h (\alpha-1)}   \qquad\mathrm{if } ~\alpha\in(1,\infty).
		\ee
		This is because the terms $\frac{\varepsilon^\alpha}{g}, \frac{\alpha \varepsilon}{g}$ now vanish as $g\rightarrow 0^+$. From this we also see that since $\tilde W_1 > \beta_h^{-1}\displaystyle\lim_{\alpha\rightarrow 1^+} \frac{\alpha n B_\alpha}{\alpha-1}$, and by continuity of the function $\frac{\alpha n B_\alpha}{\alpha-1}$ for $\alpha\in (1,\infty)$, $\tilde W_1$ can also be disregarded in the minimization (see Figure \ref{fig:discontinuity} for a pictorial understanding).
		\begin{figure}[h!]
		\includegraphics[scale=0.5]{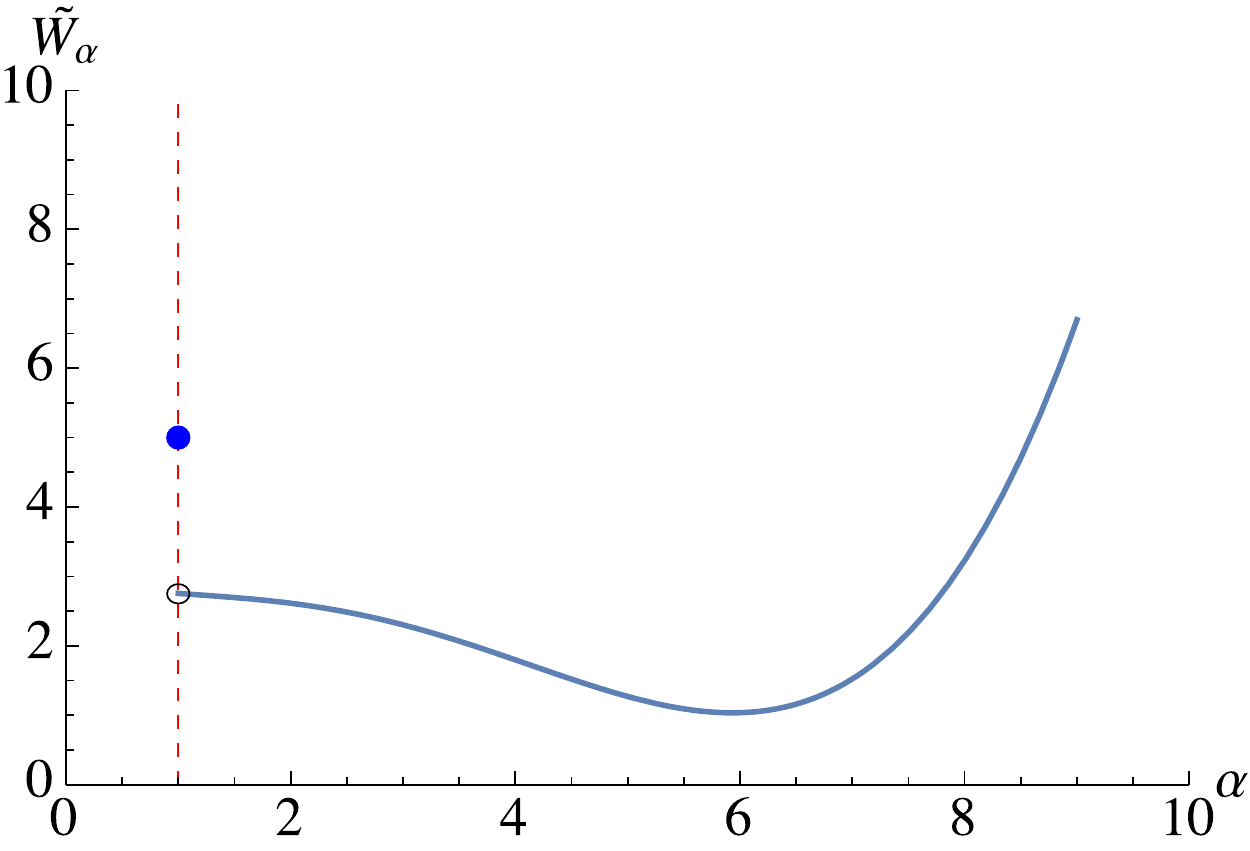}
		\caption{The value of $\tilde W_\alpha$ at $\alpha = 1$ can be ignored while minimizing $\tilde W_\alpha$ over $\alpha\in [1,\infty)$, because the neighbouring values of the function for $\alpha >1$ is lower.}\label{fig:discontinuity}
		\end{figure}
		
		Upon scrutiny, one sees that in the quasi-static limit, the contribution from $\varepsilon$ has dropped out of the expression for $\Wext$. Intuitively this tells us that having such a probability of failure $\varepsilon$ does not help to boost $\Wext$, and in turn the efficiency. In particular, we can upper bound the amount of extractable work by using Eq.~\eqref{eq:Wext_as_inf},
		\begin{align}\label{eq:Wextbound}
		\Wext \leq g\cdot \displaystyle\lim_{\alpha\rightarrow 1^+} \frac{\alpha n B_\alpha}{\alpha-1} = \frac{ng}{\beta_h}	(B_1 + B_1') = \frac{ng}{\beta_h} B_1'.
		\end{align}
		The first equality in Eq.~\eqref{eq:Wextbound} comes by noting that $B_1 = 0$, and therefore applying the L'Hospital rule. The second equality comes again from noting that $B_1 = 0$.
		We can now evaluate an upper bound for the efficiency, 
		\begin{align}\label{eq:etabound}
		\eta^{-1} &= \lim_{g\rightarrow 0^+} 1-\varepsilon + \frac{\Delta C}{W_\textup{ext}}
		 \geq 1 + \frac{ g\cdot \frac{n B_1'}{\beta_c-\beta_h}}{ng \frac{B_1'}{\beta_h}}
		= 1 + \frac{\beta_h}{\beta_c-\beta_h} = \eta_C^{-1}.
		\end{align}
		one finds that the upper bound yields the expression for Carnot efficiency, i.e. $\eta\leq\eta_C$. Eq.~\eqref{eq:etabound} is obtained by applying the identity in Lemma \ref{lem:DeltaC} and the expression in Eq.~\eqref{eq:Wextbound} respectively. This means that for such choices of $\varepsilon(g)$, Carnot efficiency cannot be surpassed.
		\end{proof}

	\section{Technical Lemmas}\label{subsec:technical}
	\subsection{Tools adapted from \cite{WNW15}}\label{subsec:tools}
	In this section, we write out the analytical expressions for the amount of extractable work in the case of a quasi-static heat engine, where the cold bath comprises of $n$ identical systems. In particular, we use the expression of extractable work in Lemma \ref{lem:quasiHE_nidqubits} in order to evaluate the efficiency of our heat engine. The reader is referred to \cite{WNW15} for details of the proof.
	
	Consider a state transition via catalytic thermal operations 
	\begin{equation}\label{eq:st_transition_general}
		\tau_{\beta_\cold}\ot\rho_\batt^0\rightarrow\rho_\cold^1\ot\rho_\batt^1,
	\end{equation}
	where $\tau_{\beta_\cold}$ is the initial state of the cold bath (at inverse temperature $\beta_\cold$), 
	$\rho_\cold^1$ is the final state of the cold bath, and the battery states are given by
	\begin{align}
		\rho_\batt^0 &= \ketbra{E_\batt^j}{E_\batt^j},\label{eq:initial_batt}\\
		\rho_\batt^1 &= \varepsilon \ketbra{E_\batt^j}{E_\batt^j} + (1-\varepsilon) \ketbra{E_\batt^k}{E_\batt^k}.
		\label{eq:final_batt}
	\end{align}
	
	\begin{lemma}\label{lem:quasiHE_nidqubits}
	Consider the state transition described in Eqns.~\eqref{eq:st_transition_general}, \eqref{eq:initial_batt} and \eqref{eq:final_batt}, and assume that the cold bath Hamiltonian is taken to be of $n$ identical systems,
			\begin{equation}\label{eq:tensorprod_Hamiltonian}
				\hat H_\cold=\sum_{i=1}^n \id^{\otimes (i-1)}\otimes \hat H_{\onecold}\otimes\id^{\otimes (n-i)}.
 			\end{equation}
 			Then in the quasi-static limit, where recall that this implies $\rho_\cold^1 = \tau_{\beta_f}$, such that $\beta_f-\beta_c = g\ll 1$, whenever the failure probability $0<\varepsilon \ll1$, the maximum extractable work is 
 			\begin{equation}
 			\Wext =\inf_{\alpha > 0} W_\alpha,
 			\end{equation}
 			where
		 \begin{align}\label{eq:Walp_identicalqubits_quasistatic}		 
			W_\alpha &=
			\begin{cases}
				\frac{1}{\beta_h (\alpha-1)} \left[\alpha n g B_\alpha - \varepsilon^\alpha +\alpha\varepsilon\right] + \bo(g^2)+\bo(\varepsilon^{2\alpha})+\bo(g\varepsilon^\alpha)+\bo(\varepsilon^2) &\mbox{if } \alpha\in(0,\infty)\backslash\lbrace 1 \rbrace,\\
				\displaystyle\lim_{\alpha\rightarrow 1^+}\frac{1}{\beta_h (\alpha-1)} \left[\alpha n g B_\alpha - \varepsilon^\alpha +\alpha\varepsilon\right] + \bo (\varepsilon g)+ \bo (\varepsilon^2 \ln\varepsilon) + \bo (\varepsilon^2)+\bo (g^2) &\mbox{if } \alpha=1,
			\end{cases}
		\end{align}
		and
		\begin{equation}
		B_\alpha =\frac{1}{\displaystyle\sum_i p_i^\alpha q_i^{1-\alpha}} \displaystyle\sum_i p_i^\alpha q_i^{1-\alpha} \left(\langle \hat H_\onecold\rangle_{\beta_c} - E_i \right), \label{eq:Balpha}
		\end{equation}
		where $p_i, q_i$ are the probabilities of thermal states of $\hat H_c$ at inverse temperatures $\beta_c, \beta_h$ respectively.
		In the special case where the cold bath consists of $n$ identical qubits, i.e. $\hat H_{\onecold}=E|1\ra\la 1|$ with $E$ being the energy gap of each qubit, the expression for $B_\alpha$ simplifies to
		\begin{equation}\label{eq:Balp_identicalqubits_quasistatic}
		B_\alpha= \frac{E}{1+e^{\beta_c E}}\cdot \frac{e^{(\beta_h+\alpha\beta_c)E}-e^{(\beta_c+\alpha\beta_h)E}}{e^{\alpha\beta_h E}+e^{(\beta_h+\alpha\beta_c)E}}.
		\end{equation}
	\end{lemma}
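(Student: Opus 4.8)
The plan is to read the transition conditions off the generalized second laws of Eq.~\eqref{eq:2ndlaws_diag}, solve them for the battery energy gap $\Wext$, and then Taylor-expand the resulting threshold in the quasi-static limit. First I would exploit the product structure of both the initial state $\tau_{\beta_c}\ot\rho_\batt^0$ and the target $\rho_\cold^1\ot\rho_\batt^1$: since the $\alpha$-R\'enyi divergences of Eq.~\eqref{w no ep} are additive over tensor factors, each constraint in Eq.~\eqref{eq:2ndlaws_diag} splits into a cold-bath piece and a battery piece. Because all four states are diagonal in the energy eigenbasis, these laws are necessary and sufficient, so the maximal $\Wext$ is fixed by the most binding constraint.

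Next I would compute the battery contribution exactly. The initial battery state Eq.~\eqref{eq:initial_batt} is a pure energy eigenstate, so $D_\alpha(\rho_\batt^0\|\tau)$ is $\alpha$-independent and equals $\beta_h E_j^\batt + \ln Z_\batt$; the final state Eq.~\eqref{eq:final_batt} is supported on a two-level subspace, so after factoring out $e^{(\alpha-1)\beta_h E_j^\batt}$ and using $\Wext = E_k^\batt - E_j^\batt$, the battery free-energy slack collapses to $-\frac{1}{\alpha-1}\ln[\varepsilon^\alpha + (1-\varepsilon)^\alpha e^{(\alpha-1)\beta_h\Wext}]$, independent of the reference partition function. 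Equating this slack to the cold-bath cost $\Delta D_\cold^{(\alpha)} := D_\alpha(\rho_\cold^1\|\tau_{\beta_h}) - D_\alpha(\tau_{\beta_c}\|\tau_{\beta_h})$ and solving for $\Wext$ yields the exact threshold $W_\alpha = \frac{1}{(\alpha-1)\beta_h}\ln\frac{e^{-(\alpha-1)\Delta D_\cold^{(\alpha)}} - \varepsilon^\alpha}{(1-\varepsilon)^\alpha}$. One checks that the slack is monotone decreasing in $\Wext$ for every $\alpha$ (both $\alpha<1$ and $\alpha>1$), so each $W_\alpha$ is a genuine upper bound and $\Wext = \inf_{\alpha>0}W_\alpha$.

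To reach Eq.~\eqref{eq:Walp_identicalqubits_quasistatic} I would then expand $\Delta D_\cold^{(\alpha)}$ quasi-statically. Tensorization over the $n$ identical systems of Eq.~\eqref{eq:tensorprod_Hamiltonian} turns it into $n$ times the single-system divergence difference between $\beta_f$ and $\beta_c$; a first-order expansion in $g = \beta_c - \beta_f$ (Def.~\ref{def:quasistatic}) introduces the derivative $\partial_{\beta_c} D_\alpha(\tau_{\beta_c}\|\tau_{\beta_h})$, which a short computation using $\partial_{\beta_c} p_i = p_i(\langle\hat H_\onecold\rangle_{\beta_c} - E_i)$ shows equals $\frac{\alpha}{\alpha-1}B_\alpha$, with $B_\alpha$ exactly the weighted average in Eq.~\eqref{eq:Balpha}. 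Substituting $\Delta D_\cold^{(\alpha)} = -\frac{\alpha}{\alpha-1}n B_\alpha g + \bo(g^2)$ into the closed form and expanding the logarithm jointly in the small quantities $g$ and $\varepsilon$ produces the stated leading term $\frac{1}{\beta_h(\alpha-1)}[\alpha n g B_\alpha - \varepsilon^\alpha + \alpha\varepsilon]$ together with its remainder. The identical-qubit formula Eq.~\eqref{eq:Balp_identicalqubits_quasistatic} then follows by inserting the two-level Gibbs weights $p_i,q_i$ and $\langle\hat H_\onecold\rangle_{\beta_c}$ into Eq.~\eqref{eq:Balpha} and simplifying.

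I expect the main obstacle to be the bookkeeping of error terms in the joint expansion, and in particular the $\alpha=1$ case. There the prefactor $1/(\alpha-1)$ is singular and the competition between $\varepsilon^\alpha$ and $\varepsilon$ degenerates, so one must take the limit $\alpha\to 1^+$ carefully; this is where the qualitatively different remainders $\bo(\varepsilon^2\ln\varepsilon)$ and $\bo(\varepsilon g)$ arise, reflecting the logarithmic behaviour of the relative entropy $D_1$ in Eq.~\eqref{eq:reyi in limits}. A secondary subtlety is verifying that the expansion is uniform enough near the infimizing $\alpha$ that replacing $W_\alpha$ by its leading-order form does not shift the location of $\inf_{\alpha>0}W_\alpha$; retaining the higher-order $\bo(\cdot)$ terms explicitly, rather than discarding them, is precisely what guards against this.
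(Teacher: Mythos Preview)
Your proposal is correct and follows exactly the natural line of attack: additivity of $D_\alpha$ over the tensor factor $\tau_\cold\otimes\rho_\batt$, closed-form evaluation of the two-level battery divergence, solving the resulting equality for $\Wext$, and a first-order Taylor expansion of the cold-bath divergence in $g$ via $\partial_{\beta_c}p_i = p_i(\langle\hat H_\onecold\rangle_{\beta_c}-E_i)$, which produces precisely the $B_\alpha$ of Eq.~\eqref{eq:Balpha}. The paper itself does not give a proof of this lemma; it states it as a result adapted from \cite{WNW15} and refers the reader there for details, so there is no in-paper argument to compare against --- but your derivation is the standard one and is what \cite{WNW15} carries out.
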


		 We also list several expressions that will be useful in deriving our results later.
		 Taking the derivatives of $B_\alpha$ as defined in Eq.~\eqref{eq:Balp_identicalqubits_quasistatic} w.r.t. $\alpha$, we have
		 \begin{align}
		 B_\alpha' =~ \diff{B_\alpha}{\alpha} &= \frac{1}{[e^{\alpha\beta_h E}+e^{(\beta_h+\alpha\beta_c)E}]^2}\cdot  E^2 (\beta_c-\beta_h)\cdot e^{[\beta_h+\alpha(\beta_c+\beta_h)]E}&\label{eq:B_alp_p_identicalqubits_quasistatic}\\
		 &>0 ~~ \textup{whenever}~\beta_c > \beta_h ,\forall\alpha > 0,\vspace{0.1cm}\\
		 B_\alpha'' = \seconddiff{B_\alpha}{\alpha} &=\frac{1}{[e^{\alpha\beta_h E}+e^{(\beta_h+\alpha\beta_c)E}]^3}\cdot  E^3 (\beta_c-\beta_h)^2\cdot e^{[\beta_h+\alpha(\beta_c+\beta_h)]E}\cdot \left[ e^{\alpha\beta_h E}-e^{(\alpha\beta_c+\beta_h)E} \right]& \label{eq:B_alp_pp_identicalqubits_quasistatic}\\
		& <0 ~~ \textup{whenever}~\beta_c > \beta_h ,\forall\alpha > 0.
		 \end{align}

Next, an identity which was proven in \cite{WNW15} will be important for the evaluation of efficiency for a quasi-static heat engine as well. This we present as a lemma here.

\begin{lemma}\label{lem:DeltaC}
Consider a quasi-static heat engine where the cold bath consists of $n$ identical systems (with individual Hamiltonians $\hat H_c$) at inverse temperature $\beta_c$. Denote the inverse temperature of the hot bath as $\beta_h$, and the following function
\begin{align}
\Delta C &:= \tr (\hat H_c \rho_C^1) - \tr (\hat H_c \tau_{\beta_c}).
\end{align}
		Then in the quasi-static limit, where the cold bath final state is a thermal state of inverse temperature $\beta_f = \beta_c - g$, where $0<g\ll 1$,
		\begin{equation}\label{eq:DeltaCexpanded}
		\Delta C = \frac{n B_1'}{\beta_c-\beta_h} \cdot g + \bo(g^2),
		\end{equation}
		where $B_\alpha ' = \diff{B_\alpha}{\alpha}$ and $B_\alpha$ is defined in Eq.~\eqref{eq:Balpha}. 
		\end{lemma}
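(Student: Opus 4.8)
The plan is to reduce $\Delta C$ to a single-site thermal energy difference, Taylor-expand that difference in the quasi-static parameter $g$, and then recognize that the first $\alpha$-derivative of $B_\alpha$ at $\alpha=1$ is exactly (up to the factor $\beta_c-\beta_h$) the $\beta$-derivative of the single-site thermal energy that appears in this expansion. First I would use additivity: since the cold bath Hamiltonian $\hat H_\cold=\sum_{i=1}^n \id^{\otimes(i-1)}\otimes \hat H_\onecold\otimes\id^{\otimes(n-i)}$ is a sum of non-interacting identical terms, the Gibbs state at any inverse temperature $\beta$ factorizes into a product of single-site thermal states, so $\tr(\hat H_\cold\,\tau_\beta)=n\,\langle \hat H_\onecold\rangle_\beta$, where $\langle \hat H_\onecold\rangle_\beta=-\tfrac{d}{d\beta}\ln Z(\beta)$ and $Z(\beta)=\tr\, e^{-\beta\hat H_\onecold}$. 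Using the initial state $\tau_{\beta_c}$ and the quasi-static final state $\rho_\cold^1=\tau_{\beta_f}$, this gives $\Delta C=n\big[\langle \hat H_\onecold\rangle_{\beta_f}-\langle \hat H_\onecold\rangle_{\beta_c}\big]$.

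Next I would expand in $g$. Writing $\beta_f=\beta_c-g$ and invoking Assumption (A.4) (the bounded pure point spectrum makes $Z$ and all its $\beta$-derivatives finite, so $\langle \hat H_\onecold\rangle_\beta$ is smooth in $\beta$), a first-order Taylor expansion yields
\begin{equation}
\Delta C = -\,n g\,\frac{d}{d\beta}\langle \hat H_\onecold\rangle_\beta\Big|_{\beta=\beta_c} + \bo(g^2).
\end{equation}
The only care needed for the remainder is that the $\bo(g^2)$ term is uniform, which follows because the bounded spectrum keeps the second $\beta$-derivative of the single-site free energy finite; the overall factor $n$ is a fixed constant.

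The crux, which I expect to be the main (conceptual) obstacle, is reinterpreting $B_\alpha$ as a thermal average at an \emph{interpolated} inverse temperature. Writing the Gibbs weights $p_i\propto e^{-\beta_c E_i}$ and $q_i\propto e^{-\beta_h E_i}$, the combination $p_i^\alpha q_i^{1-\alpha}$ is proportional to $e^{-\beta(\alpha)E_i}$ with $\beta(\alpha):=\beta_h+\alpha(\beta_c-\beta_h)$. Substituting this into the definition \eqref{eq:Balpha} makes the normalizing prefactor cancel, collapsing it to $B_\alpha=\langle \hat H_\onecold\rangle_{\beta_c}-\langle \hat H_\onecold\rangle_{\beta(\alpha)}$. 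Differentiating via the chain rule, and using $\tfrac{d\beta(\alpha)}{d\alpha}=\beta_c-\beta_h$, gives $B_\alpha'=-(\beta_c-\beta_h)\,\tfrac{d}{d\beta}\langle \hat H_\onecold\rangle_\beta\big|_{\beta(\alpha)}$.

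Finally I would evaluate at $\alpha=1$, where $\beta(1)=\beta_c$ (which incidentally re-derives the identity $B_1=0$ used elsewhere). This yields $B_1'=-(\beta_c-\beta_h)\,\tfrac{d}{d\beta}\langle \hat H_\onecold\rangle_\beta\big|_{\beta_c}$, equivalently $\tfrac{d}{d\beta}\langle \hat H_\onecold\rangle_\beta\big|_{\beta_c}=-B_1'/(\beta_c-\beta_h)$. Substituting this into the first-order expansion of $\Delta C$ gives exactly $\Delta C=\frac{n B_1'}{\beta_c-\beta_h}\,g+\bo(g^2)$, as claimed. As a consistency check, $\tfrac{d}{d\beta}\langle \hat H_\onecold\rangle_\beta=-\mathrm{Var}(\hat H_\onecold)_\beta<0$, so with $\beta_c>\beta_h$ both $B_1'>0$ and $\Delta C>0$, matching the sign of $B_\alpha'$ already recorded in Eq.~\eqref{eq:B_alp_p_identicalqubits_quasistatic} and the physical expectation that the cold bath absorbs energy as it warms.
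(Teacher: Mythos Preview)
Your proof is correct. The paper does not actually prove this lemma: it is listed under ``Tools adapted from \cite{WNW15}'' and the text explicitly says the identity ``was proven in \cite{WNW15}''. So there is no in-paper argument to compare against; your derivation supplies what the paper merely cites.

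That said, your route is clean and worth noting on its own merits. The key step --- rewriting $p_i^\alpha q_i^{1-\alpha}\propto e^{-\beta(\alpha)E_i}$ with $\beta(\alpha)=\beta_h+\alpha(\beta_c-\beta_h)$ so that Eq.~\eqref{eq:Balpha} collapses to $B_\alpha=\langle\hat H_\onecold\rangle_{\beta_c}-\langle\hat H_\onecold\rangle_{\beta(\alpha)}$ --- immediately yields $B_1=0$ and, via the chain rule, $B_1'=-(\beta_c-\beta_h)\,\partial_\beta\langle\hat H_\onecold\rangle_\beta\big|_{\beta_c}$, which is exactly the coefficient the Taylor expansion of $\Delta C$ needs. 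The sign check through $\partial_\beta\langle\hat H_\onecold\rangle_\beta=-\mathrm{Var}(\hat H_\onecold)_\beta$ is a nice consistency confirmation with the positivity of $B_\alpha'$ recorded in Eq.~\eqref{eq:B_alp_p_identicalqubits_quasistatic}. Nothing is missing.
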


Lastly, we adopt an observation made in \cite{WNW15} for choices of $\varepsilon (g)$ as a function of the quasi-static parameter $g$. in \cite{WNW15} it is shown that one can characterize any choice of continuous function $\varepsilon (g)$ by the real parameters $\bar\kappa,\sigma \in\mathcal{R}_{\geq 0}$.
\begin{lemma}\label{lemma:existence kappabar}
For every continuous function $\varepsilon(g)>0$ satisfying $\lim_{g\rightarrow 0^+}\varepsilon(g)=0,$ $\exists$ $\bar \kappa\in\mathbb{R}_{\geq 0}$ s.t. 
\be \label{eq:existence lemma}
\delta (\kappa) = \lim_{g\rightarrow 0^+}\frac{\varepsilon^\kappa(g)}{g}=
\begin{cases}
0 &\mbox{if } \kappa>\bar \kappa\\
\sigma\geq 0 &\mbox{if } \kappa=\bar\kappa\\
\infty &\mbox{if } \kappa<\bar\kappa\\
\end{cases}
\ee
where $\bar \kappa=+\infty$ is allowed (that is to say, $\lim_{g\rightarrow 0^+}\frac{\varepsilon^\kappa(g)}{g}$ diverges for every $\kappa$) and $\sigma=+\infty$ is also allowed.
\end{lemma}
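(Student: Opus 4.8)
The plan is to exploit the single structural fact that drives the whole statement: for all sufficiently small $g$ one has $0<\varepsilon(g)<1$ (because $\varepsilon(g)\to 0$), so for each such fixed $g$ the map $\kappa\mapsto \varepsilon^\kappa(g)/g$ is strictly decreasing in $\kappa$. The threshold behaviour of $\delta(\kappa)$ in \eqref{eq:existence lemma} is then a limiting shadow of this pointwise monotonicity, and the lemma reduces to two ``propagation'' estimates together with a careful definition of $\bar\kappa$.

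First I would establish the two propagation facts. (i) If for some $\kappa_0$ the ratio stays bounded as $g\to 0^+$, i.e. $\limsup_{g\to 0^+}\varepsilon^{\kappa_0}(g)/g<\infty$, then $\delta(\kappa)=0$ for every $\kappa>\kappa_0$: write $\varepsilon^{\kappa}(g)/g=\varepsilon^{\kappa-\kappa_0}(g)\cdot\big(\varepsilon^{\kappa_0}(g)/g\big)$ and note that the first factor tends to $0$ (as $\varepsilon\to0$ and $\kappa-\kappa_0>0$) while the second is eventually bounded. (ii) Symmetrically, if for some $\kappa_1$ the ratio stays bounded away from $0$, i.e. $\liminf_{g\to 0^+}\varepsilon^{\kappa_1}(g)/g>0$, then $\delta(\kappa)=\infty$ for every $\kappa<\kappa_1$: factor out $\varepsilon^{-(\kappa_1-\kappa)}(g)\to\infty$. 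Each is a two-line estimate.

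Next I would set $\bar\kappa:=\inf\{\kappa\ge 0:\limsup_{g\to 0^+}\varepsilon^{\kappa}(g)/g<\infty\}$, with the convention $\inf\emptyset=+\infty$ (the case in which the ratio is unbounded for every $\kappa$, giving $\bar\kappa=+\infty$). By fact (i), for any $\kappa>\bar\kappa$ there is $\kappa_0\in(\bar\kappa,\kappa)$ with bounded $\limsup$, so $\delta(\kappa)=0$; this settles the first line of \eqref{eq:existence lemma}. For $\kappa<\bar\kappa$ I would aim to produce an intermediate $\kappa_1\in(\kappa,\bar\kappa)$ whose ratio is bounded away from $0$ and then invoke fact (ii) to conclude $\delta(\kappa)=\infty$, settling the third line. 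At the single value $\kappa=\bar\kappa$ there is nothing further to prove: one simply names the (possibly infinite) limit $\sigma=\delta(\bar\kappa)\in[0,\infty]$, and monotonicity in $\kappa$ is consistent with any such $\sigma$.

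The hard part will be the $\kappa<\bar\kappa$ case, and more precisely guaranteeing that the two-sided limit $\delta(\kappa)$ genuinely exists there rather than merely a $\liminf$/$\limsup$ pair. The clean way to control this is to pass to the log-ratio $\rho(g):=\ln\varepsilon(g)/\ln g>0$, for which $\tfrac{1}{-\ln g}\ln\big(\varepsilon^\kappa(g)/g\big)=1-\kappa\rho(g)$; since $-\ln g\to+\infty$, the sign and divergence of $1-\kappa\rho(g)$ dictate whether $\varepsilon^\kappa(g)/g\to 0$ or $\to\infty$, and $\bar\kappa$ is identified with $1/\lim_{g\to 0^+}\rho(g)$. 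The obstacle is precisely that $\rho(g)$ need not converge a priori; one must use the structure at hand (the reciprocal monotone bracketing supplied by facts (i)--(ii), together with the continuity hypothesis on $\varepsilon$) to preclude an interval of $\kappa$ on which the ratio oscillates between $0$ and $\infty$. Establishing that no such oscillatory gap survives, so that $A=\{\kappa:\limsup<\infty\}$ and $B=\{\kappa:\liminf>0\}$ share the common boundary $\bar\kappa$, is the crux on which the trichotomy in \eqref{eq:existence lemma} rests.
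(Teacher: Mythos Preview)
The paper does not supply a proof of this lemma; it is quoted verbatim from \cite{WNW15} as a tool, so there is no in-paper argument to compare against. Your approach via the two propagation facts and the threshold definition $\bar\kappa=\inf\{\kappa:\limsup_{g\to0^+}\varepsilon^\kappa(g)/g<\infty\}$ is the natural one, and facts (i) and (ii) are correct as stated.

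You have also put your finger on the real issue: nothing in the hypotheses forces the two-sided limit $\delta(\kappa)$ to exist for $\kappa<\bar\kappa$. Your suggestion that continuity of $\varepsilon$ can be leveraged to ``preclude an interval of $\kappa$ on which the ratio oscillates'' does not work, and in fact the lemma as literally stated is false. Take $\varepsilon(g)=g^{\,3/2+(1/2)\sin(1/g)}$: this is continuous and positive on $(0,\infty)$ with $\varepsilon(g)\to 0$, and here $\rho(g)=\ln\varepsilon(g)/\ln g=3/2+(1/2)\sin(1/g)$ oscillates throughout $[1,2]$. For any $\kappa\in(1/2,1)$ the exponent $\kappa\rho(g)-1$ changes sign infinitely often as $g\to0^+$, so $\varepsilon^\kappa(g)/g=g^{\kappa\rho(g)-1}$ has $\liminf=0$ and $\limsup=\infty$; the limit does not exist. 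With your definition one gets $\bar\kappa=1$, yet the third line of \eqref{eq:existence lemma} fails on the whole interval $(1/2,1)$.

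The statement becomes correct if one either (a) adds the hypothesis that $\rho(g)=\ln\varepsilon(g)/\ln g$ converges as $g\to0^+$ (then your log-ratio computation gives $\bar\kappa=1/\lim\rho(g)$ directly and the trichotomy is immediate), or (b) replaces the conclusion by the weaker assertion that $\delta(\kappa)=0$ for $\kappa>\bar\kappa$ and $\limsup_{g\to0^+}\varepsilon^\kappa(g)/g=\infty$ for $\kappa<\bar\kappa$, which is exactly what your facts (i)--(ii) deliver. Either reading suffices for every concrete $\varepsilon(g)$ actually used in the paper (e.g.\ $\varepsilon=cg$), where the limits exist trivially.
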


Therefore, we summarize results from \cite{WNW15} into the following Table \ref{table:2}, for any continuous function $\varepsilon(g)$ such that $\displaystyle\lim_{g\rightarrow 0}\varepsilon (g)=0$. The first regime, i.e. $\displaystyle\lim_{g\rightarrow 0} \frac{\Delta S}{\Wext} = 0$ is thoroughly investigated in \cite{WNW15}. In this paper, we complete the picture by first analyzing in Section \ref{subsec:mainresult} an example where $\displaystyle\lim_{g\rightarrow 0} \frac{\Delta S}{\Wext} =\infty$, and in Section \ref{subsec:entropy_comparable} investigating the full regime $\displaystyle\lim_{g\rightarrow 0} \frac{\Delta S}{\Wext} = p>0$.
\begin{table}[h!]\label{table:2}
\begin{tabular}{|c|c|c|}
\hline 
\rule{0pt}{4.5ex} 
{} & $\qquad\displaystyle\lim_{g\rightarrow 0} \frac{\Delta S}{\Wext}\qquad$ & $\qquad\qquad\qquad$Characterization$\qquad\qquad\qquad$ \\ [8pt] 
\hline
\multirow{2}{*}{$\quad$Near perfect work$\quad$} & \rule{0pt}{3ex}\multirow{2}{*}{0}& $\bar\kappa\in[0,1)$ \\ [2pt] 
\hhline{~~-} 
\rule{0pt}{4.5ex} 
{} & {} & $\bar\kappa = 1~\wedge~\displaystyle\lim_{g\rightarrow 0} \frac{\varepsilon\ln\frac{1}{\varepsilon}}{g}=0$ \\[10pt] 
\hline 
\rule{0pt}{4.5ex}{} &  $p>0$ & $\bar\kappa=1~\wedge~\displaystyle\lim_{g\rightarrow 0} \frac{\varepsilon\ln\frac{1}{\varepsilon}}{g}= p',~0<p'<\infty$ \\ [10pt] 
\hhline{~--} 
\rule{0pt}{3ex} Imperfect work & \multirow{3}{*}{}  & $\bar\kappa\in (1,\infty)$ \\ [2pt] 
\hhline{~~-} 
{} & $\infty$& \multirow{2}{*}{$\bar\kappa = 1 \wedge \sigma = p'' >0$ }\\ [5pt] 
\hhline{~~~} 
\rule{0pt}{4.5ex}{} &  {} & $\qquad \Big($This implies that $\displaystyle\lim_{g\rightarrow 0} \frac{\varepsilon\ln\frac{1}{\varepsilon}}{g} = \infty \Big) \qquad$ \\ [10pt]
\hline
\end{tabular}
\caption{Each choice of a continuous function $\varepsilon$ such that $\displaystyle\lim_{g\rightarrow 0}\varepsilon = 0$, can lead to different regimes of $\frac{\Delta S}{\Wext}$ in the quasi-static limit, depending on the values of $\bar\kappa, \sigma$ and $\displaystyle\lim_{g\rightarrow 0} \frac{-\varepsilon\ln\varepsilon}{g}$. Recall Lemma \ref{lemma:existence kappabar} for the definitions of $\bar\kappa$ and $\sigma$.}\end{table}

\subsection{Technical Lemmas used for the proof of Theorem \ref{thm:mainresult}}\label{subsub:techlemma}
		Building on the results adapted from \cite{WNW15} and summarized in Section \ref{subsec:tools}, this section contains the technical lemmas and proofs used to develop the proof of Theorem \ref{thm:mainresult}.
		\begin{lemma}\label{lem:inf_not_below_one}
			Given any heat engine, consider the state transition 
			\begin{equation}\label{eq:st_transition}
				\tau_{\beta_\cold}\ot\rho_\batt^0\rightarrow\rho_\cold^1\ot\rho_\batt^1,
			\end{equation}
			where $\rho_\batt^0 = |E_j\ra\la E_j|_\batt ,~\rho_\batt^1=(1-\varepsilon)|E_k\ra\la E_k|_\batt+\varepsilon|E_j\ra\la E_j|_\batt$ respectively, where $\Wext = E_k - E_j$.
			Let $\varepsilon=\varepsilon_1\cdot g$, where note that $\varepsilon_1>0$ is independent of $\alpha$ and $g$. 
			Then there exists $g'>0$ such that for all $0<g\leq g'$, 
			\begin{equation}\label{eq:minimization_of_Walp}
				\Wext= \displaystyle\inf_{\alpha > 0} W_\alpha = \displaystyle\inf_{\alpha> 1} W_\alpha,
			\end{equation}
			where $W_\alpha$ is defined in Eq.~\eqref{eq:Walp_identicalqubits_quasistatic}.
		\end{lemma}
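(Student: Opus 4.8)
The plan is to substitute the ansatz $\varepsilon=\varepsilon_1 g$ into the expression for $W_\alpha$ from Lemma~\ref{lem:quasiHE_nidqubits} and to compare, regime by regime, the order in $g$ of the result. Because for each fixed $g>0$ the prefactor $g$ is a positive constant in $\alpha$, minimising $W_\alpha$ over $\alpha$ is equivalent to minimising the rescaled quantity $\tilde W_\alpha:=W_\alpha/g$, which for $\alpha\neq 1$ has leading part
\[
\tilde W_\alpha=\frac{1}{\beta_h(\alpha-1)}\left[\alpha n B_\alpha-\varepsilon_1^\alpha\,g^{\alpha-1}+\alpha\varepsilon_1\right],
\]
the additive remainders of Eq.~\eqref{eq:Walp_identicalqubits_quasistatic} becoming $\bo(g),\bo(g^{2\alpha-1}),\bo(g^\alpha),\bo(g)$ after division, which I will check are subdominant in every regime below. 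The decisive term is $-\varepsilon_1^\alpha g^{\alpha-1}/[\beta_h(\alpha-1)]$: for $\alpha<1$ one has $g^{\alpha-1}=(1/g)^{1-\alpha}\to+\infty$ and $1/(\alpha-1)<0$, so this term is positive and divergent, while for $\alpha>1$ it carries $g^{\alpha-1}\to 0$ and is negligible.

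First I would handle $\alpha>1$. Here $g^{\alpha-1}\to 0$, and since $B_1=0$ together with $B_\alpha'>0$ (Eq.~\eqref{eq:B_alp_p_identicalqubits_quasistatic}) gives $B_\alpha>0$, the quantity $\tilde W_\alpha$ tends to the finite positive limit $\alpha(nB_\alpha+\varepsilon_1)/[\beta_h(\alpha-1)]$. Evaluating at any fixed $\alpha_0>1$ therefore gives $\tilde W_{\alpha_0}=\bo(1)$, so $\inf_{\alpha>1}\tilde W_\alpha$ is bounded above by a constant independent of $g$. The crux is to show that $\inf_{\alpha\in(0,1]}\tilde W_\alpha\to+\infty$ as $g\to 0^+$. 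Pointwise divergence for each fixed $\alpha\in(0,1)$ is immediate, but this is not enough: as $g\to 0$ the minimiser over $(0,1)$ drifts towards $\alpha=1$, precisely where $1/(1-\alpha)$ diverges while $g^{\alpha-1}\to 1$, so the naive estimate ``$-\varepsilon_1^\alpha g^{\alpha-1}$ dominates'' breaks down. Obtaining a uniform lower bound through this boundary layer is the main obstacle.

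To overcome it I would split $(0,1]$ at a fixed cut $1-\delta$. On $(0,1-\delta]$ one has $g^{\alpha-1}\ge(1/g)^\delta$, so $\tilde W_\alpha\ge[\varepsilon_1^\alpha(1/g)^\delta-M]/\beta_h$ with $M=\sup_{[0,1]}|\alpha(nB_\alpha+\varepsilon_1)|$, which diverges like $(1/g)^\delta$. On $(1-\delta,1]$ I would set $\alpha=1-s$ and expand, using $B_1=0$, $A(\alpha):=\alpha(nB_\alpha+\varepsilon_1)$, $A(1)=\varepsilon_1$, and $\varepsilon_1^{1-s}g^{-s}=\varepsilon_1 e^{s\Lambda}$ with $\Lambda:=\ln\frac{1}{\varepsilon_1 g}$, to get
\[
\tilde W_{1-s}=\frac{\varepsilon_1\big(e^{s\Lambda}-1\big)+s\,A'(1)+\bo(s^2)}{\beta_h\,s}+\bo(g).
\]
The elementary inequality $e^x-1\ge x$ yields $\varepsilon_1(e^{s\Lambda}-1)/s\ge\varepsilon_1\Lambda$ for all $s>0$, hence $\tilde W_{1-s}\ge\varepsilon_1\Lambda/\beta_h-C$ uniformly on $(0,\delta]$, where $C$ is a $g$-independent constant absorbing $A'(1)$ and the $\bo(s),\bo(g)$ terms. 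Since $\Lambda\sim\ln(1/g)\to+\infty$, this bound diverges. Combining the two sub-intervals gives $\inf_{\alpha\in(0,1]}\tilde W_\alpha\to+\infty$, which for $g$ small enough exceeds the constant upper bound on $\inf_{\alpha>1}\tilde W_\alpha$. Fixing $g'$ so that this holds for all $0<g\le g'$, the infimum over $(0,\infty)$ cannot be attained on $(0,1]$, which gives $\inf_{\alpha>0}W_\alpha=\inf_{\alpha>1}W_\alpha$ as claimed.
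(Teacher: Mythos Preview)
Your proof is correct and takes a genuinely different route from the paper's.

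The paper argues in two steps. First it differentiates $W_\alpha$ and shows, pointwise in $\alpha\in(0,1)$, that the dominant term $f(g,\alpha)=\varepsilon_1^\alpha g^{\alpha-1}\big[1+(1-\alpha)\ln(\varepsilon_1 g)\big]$ in $\diff{W_\alpha}{\alpha}$ tends to $-\infty$ as $g\to 0^+$, concluding that $W_\alpha$ is monotonically decreasing on $(0,1)$ for small $g$. Second it shows explicitly that $W_1>W_\infty$, so the infimum over $(0,1]$ is bounded below by $W_1$, which already exceeds a value attained for $\alpha>1$. Thus the paper uses derivative information plus a single endpoint comparison.

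You avoid the derivative entirely and instead give a direct, uniform lower bound on $\tilde W_\alpha$ over $(0,1]$. The key difference is that you explicitly confront the boundary layer at $\alpha\to 1^-$: splitting at $1-\delta$, you bound the bulk by $(1/g)^\delta$ and then, on the layer, rewrite the dangerous quotient via $\varepsilon_1^{1-s}g^{-s}=\varepsilon_1 e^{s\Lambda}$ and invoke $e^x-1\ge x$ to extract the uniform bound $\varepsilon_1\Lambda/\beta_h-C$. This is exactly the step the paper glosses over; their pointwise argument for $f(g,\alpha)\to-\infty$ does not by itself give a single $g'$ valid for all $\alpha\in(0,1)$, since near $\alpha=1$ one checks $f(g,\alpha)$ stays bounded. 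Your approach is therefore more self-contained on the uniformity issue, while the paper's approach buys the extra structural fact that $W_\alpha$ is monotone on $(0,1)$, which is not needed for the lemma but pleasant to know. Both reach the same conclusion; your version trades the derivative calculus for an elementary convexity inequality.
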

		\begin{proof}
		To prove this, we need only to 1) find $g'$ such that for all $0< g\leq g'$, $\diff{W_\alpha}{\alpha} <0$ whenever $\alpha<1$, and 2) show that the minimum does not occur at $\alpha =1$. Let us do the first. 
		Considering any heat engine with a cold bath that consists of $n$ identical systems, according to Eq.~\eqref{eq:Walp_identicalqubits_quasistatic}, we can evaluate
		\begin{align}\label{eq:derivative_Wpalp_identicalqubits_quasistatic}
			\diff{W_\alpha}{\alpha} 
			&= \frac{1}{\beta_h}\frac{1}{(\alpha-1)^2}
			\left[ (\alpha-1) \left( \alpha n g B_\alpha'+ng B_\alpha-\varepsilon^\alpha\ln\varepsilon+\varepsilon\right)
			- \left(\alpha n gB_\alpha - \varepsilon^\alpha+\alpha\varepsilon\right) \right]\nonumber\\
			&=\frac{1}{\beta_h} \frac{1}{(\alpha-1)^2} 
			\left\lbrace \alpha(\alpha-1) ng B_\alpha' -ngB_\alpha-\varepsilon+\varepsilon^\alpha [1+(1-\alpha)\ln\varepsilon] \right\rbrace\nonumber\\
			&= \frac{g}{\beta_h} \frac{1}{(\alpha-1)^2} \left\lbrace \alpha(\alpha-1)n B_\alpha' -n B_\alpha -\varepsilon_1 +\underbrace{\varepsilon_1^\alpha g^{\alpha-1} [1+(1-\alpha)\ln(\varepsilon_1 g)]}_{f(g,\alpha)} \right\rbrace.
		\end{align}

		Note that for any $0<\alpha<1$, since $\displaystyle\lim_{g\rightarrow 0^+} g^{\alpha-1}=\infty$ and $\displaystyle\lim_{g\rightarrow 0^+} \ln(\varepsilon_1 g)=-\infty$. 
		Therefore, $\displaystyle\lim_{g\rightarrow 0^+}f(g,\alpha) =-\infty$. 
		By the definition of limits, implies that there exists a $g'$ such that for $g\leq g'$, Eq.~\eqref{eq:derivative_Wpalp_identicalqubits_quasistatic} will be negative, implying that the function $W_\alpha$ is monotonically decreasing in the regime $\alpha\in (0,1)$. 
		
 		Secondly, we exclude the point $\alpha =1$ from the minimization, by noting that $W_1 > W_\infty$. Let us first write out the expression for $W_\infty$ as follows:
		\begin{equation}\label{eq:Winftyquasistatic}
			W_\infty = \lim_{\alpha\rightarrow\infty} W_\alpha = \frac{ng}{\beta_h} \left[\lim_{\alpha\rightarrow\infty} \frac{\alpha B_\alpha}{\alpha-1} + \frac{\varepsilon_1}{n}\right] = \frac{ng}{\beta_h} \left[\frac{E}{1+E^{\beta_c E}} + \frac{\varepsilon_1}{n}\right].
		\end{equation}	
		This quantity within the bracket is finite, for finite $E,~\beta_c$. On the other hand, from Eq.~\eqref{eq:Walp_identicalqubits_quasistatic}
		\begin{align}
		W_1 
		&= \frac{1}{\beta_h} \lim_{\alpha\rightarrow 1^+} \frac{\alpha ngB_\alpha-\varepsilon^\alpha +\alpha\varepsilon}{\alpha-1}\label{eq:W1quasistatic}\\
		&= \frac{1}{\beta_h} \lim_{\alpha\rightarrow 1^+} ng B_\alpha' + \alpha ng B_\alpha -\varepsilon^\alpha\ln\varepsilon + \varepsilon\\
		&= \frac{ng}{\beta_h} \left[B_1' - \varepsilon_1 \ln \varepsilon + \varepsilon_1 \right]\\
		&> \frac{ng}{\beta_h} \cdot\varepsilon_1 \cdot\ln \frac{1}{\varepsilon}.\label{eq:W1quasistatic_inequality}
		\end{align}
		The second equality comes by applying L'Hospital rule for differentiation limits, and the third equality comes by substituting $\alpha=1$ into the equation, while noting that $B_1 = 0$, and using $\varepsilon = \varepsilon_1 \cdot g$.
		The last inequality sign comes from noting that $B_1',\varepsilon_1 >0$. For any finite $\varepsilon_1$, we see that in the limit of $g\rightarrow 0^+$, $\varepsilon_1\cdot\ln\frac{1}{\varepsilon}$ tends to infinity, and therefore $W_1/g$ tends to infinity. Comparing Eq.~\eqref{eq:Winftyquasistatic} and \eqref{eq:W1quasistatic_inequality} , we see that in the quasi-static regime, $W_1 > W_\infty$.
		
		Therefore the global minima will not be obtained in the interval $\alpha\in (0,1]$, which in turn implies that 
		\begin{equation}
		\inf_{\alpha>0} W_\alpha = \inf_{\alpha > 1} W_\alpha.
		\end{equation}		
		
		\end{proof}

		With Lemma \ref{lem:inf_not_below_one}, one can dismiss the regime $\alpha\leq 1$ when considering the infimum over $W_\alpha$ in Eq.~\eqref{eq:Walp_identicalqubits_quasistatic}. 
		Note also this implies that while analyzing Eq.~\eqref{eq:Walp_identicalqubits_quasistatic} and \eqref{eq:derivative_Wpalp_identicalqubits_quasistatic},
		the term $\varepsilon^\alpha$ in Eq.~\eqref{eq:Walp_identicalqubits_quasistatic} and $f(g,\alpha)$ in Eq.~\eqref{eq:derivative_Wpalp_identicalqubits_quasistatic} can be omitted as higher order terms, 
		since they vanish more quickly as $g\rightarrow 0^+$ compared to the leading order terms.
		We therefore write out again the form of Eq.~\eqref{eq:derivative_Wpalp_identicalqubits_quasistatic}
		\begin{equation}\label{eq:derivative_Walp_quasistatic_identicalqubits_simple}
			\diff{W_\alpha}{\alpha} = \frac{ng}{\beta_h} \frac{B_\alpha'}{(\alpha-1)^2} \left\lbrace \alpha(\alpha-1) - \frac{B_\alpha}{B_\alpha'} - \frac{\varepsilon_1}{n B_\alpha'} \right\rbrace.
		\end{equation}
		
		From this, we can already understand how $\frac{\Delta S}{\Wext}$ behaves in the quasi-static limit, which we prove in Lemma \ref{lem:free_energy_batt}.
		\begin{lemma}\label{lem:free_energy_batt}
		For any heat engine where $\varepsilon = \varepsilon_1 \cdot g$, with $\varepsilon_1$ independent of $g$, in the quasi-static limit $g\rightarrow 0^+$, we have 
		\begin{equation}
		\lim_{g\rightarrow 0^+}\frac{\Delta S}{\Wext}= \infty.
		\end{equation}
		\end{lemma}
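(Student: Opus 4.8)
The plan is to bound the numerator $\Delta S$ from below and the denominator $\Wext$ from above by explicit functions of the quasi-static parameter $g$, and then to observe that the resulting ratio diverges. The key structural fact driving the result is that $\Delta S$ carries an extra logarithmic factor in $g$ that $\Wext$ does not.

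First I would treat $\Delta S$. Since $\varepsilon = \varepsilon_1 g \to 0$ as $g \to 0^+$, and since $-(1-\varepsilon)\ln(1-\varepsilon) \geq 0$, Eq.~\eqref{eq:DeltaSDef} gives the lower bound $\Delta S \geq -\varepsilon \ln \varepsilon = \varepsilon_1 g \ln\frac{1}{\varepsilon_1 g}$. Writing $\ln\frac{1}{\varepsilon_1 g} = \ln\frac{1}{g} - \ln\varepsilon_1$, one sees that for sufficiently small $g$ this is bounded below by a positive multiple of $g\ln\frac{1}{g}$. Thus $\Delta S$ grows strictly faster than linearly in $g$, by a logarithmic factor.

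Next I would bound $\Wext$ from above. Because $\Wext = \inf_{\alpha>0} W_\alpha \le W_\infty$, it suffices to invoke the expression for $W_\infty$ already computed in Eq.~\eqref{eq:Winftyquasistatic}, namely $W_\infty = \frac{ng}{\beta_h}\big[\frac{E}{1+e^{\beta_c E}} + \frac{\varepsilon_1}{n}\big] + \bo(g^2)$, which is manifestly $\bo(g)$ with a positive coefficient. Hence there is a finite constant $C>0$ with $\Wext \le Cg$ for all small enough $g$. Crucially, here one relies on Lemma~\ref{lem:inf_not_below_one}: restricting the infimum to $\alpha > 1$ ensures that the term $\varepsilon^\alpha = \varepsilon_1^\alpha g^\alpha$ is subleading (of order $g^\alpha$ with $\alpha>1$), so that $W_\alpha$, and therefore its infimum, is genuinely linear in $g$.

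Combining the two bounds yields $\frac{\Delta S}{\Wext} \geq \frac{\varepsilon_1 g \ln\frac{1}{\varepsilon_1 g}}{Cg} = \frac{\varepsilon_1}{C}\ln\frac{1}{\varepsilon_1 g}$, which tends to $+\infty$ as $g \to 0^+$. The main obstacle is not the algebra but certifying that $\Wext$ really is of order $g$ rather than something smaller; the upper bound via $W_\infty$ neatly sidesteps solving the (non-convex, non-monotone) minimization of $W_\alpha$ exactly, since for divergence of the ratio we need only an order-$g$ upper bound on $\Wext$ together with the $g\ln\frac{1}{g}$ lower bound on $\Delta S$, and positivity of $\Wext$ for a genuine heat engine.
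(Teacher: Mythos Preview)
Your proof is correct and takes a somewhat cleaner route than the paper's. The paper invokes Lemma~\ref{lem:inf_not_below_one} to assert that the infimum defining $\Wext$ is attained at some $\alpha_1\in(1,\infty)$, substitutes $\alpha_1$ into Eq.~\eqref{eq:Walp_identicalqubits_quasistatic}, and expands $\Wext$ explicitly to leading order in $g$; it then carries out the parallel expansion of $\Delta S$ and divides. You instead bound $\Wext\le W_\infty$ directly and quote the value of $W_\infty$ from Eq.~\eqref{eq:Winftyquasistatic}. Since only an $\bo(g)$ upper bound on $\Wext$ (together with $\Wext>0$, which holds by the very definition of a heat engine) is needed for the ratio $\Delta S/\Wext$ to diverge, your shortcut suffices and sidesteps locating the minimizer altogether.

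One remark: your appeal to Lemma~\ref{lem:inf_not_below_one} is in fact unnecessary for your own argument. The bound $\Wext=\inf_{\alpha>0}W_\alpha\le W_\infty$ holds regardless of where the infimum sits, so that lemma plays no role in establishing the upper bound you use. The paper's approach, by contrast, does rely on Lemma~\ref{lem:inf_not_below_one} essentially, and in exchange extracts the exact leading-order coefficient of $\Wext$ in $g$ (information it reuses later, e.g.\ in Corollary~\ref{cor:HamiltonianConditionsForAlphaStarMinimization}); your inequality yields only an upper bound, which is all the present lemma requires.
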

		\begin{proof}
		From Lemma \ref{lem:inf_not_below_one}, and by using Eq.~\eqref{eq:Walp_identicalqubits_quasistatic} we see that for some particular $\alpha_1\in (1,\infty)$,
		\begin{align}
		\Wext &= \frac{1}{\beta_h (\alpha_1 -1)} \left[\alpha n g B_{\alpha_1} - \varepsilon^\alpha +\alpha\varepsilon\right] + \bo(g^2)+\bo(\varepsilon^{2\alpha})+\bo(g\varepsilon^\alpha)+\bo(\varepsilon^2)\\
		&= \frac{g}{\beta_h (\alpha_1 -1)} \left[\alpha_1 n B_{\alpha_1} +\alpha_1\varepsilon_1\right] + \bo(g^{\alpha_1}) + \bo(g^2)+\bo\left(g^{2\alpha_1}\right)+\bo\left(g^{1+\alpha_1}\right)	.	
		\end{align}
		This implies that the leading order term in $\Wext$ is of first order in $g$. On the other hand,
		\begin{align}
		\Delta S &=-\varepsilon\ln\varepsilon - (1-\varepsilon)\ln (1-\varepsilon) \\
		&= -\varepsilon_1\cdot g \ln (\varepsilon_1 \cdot g) - (1-\varepsilon) [-\varepsilon + \bo(\varepsilon^2)] \\
		&= -\varepsilon_1 \cdot g\ln g + \varepsilon_1\ln\varepsilon_1\cdot g + \varepsilon + \bo (\varepsilon^2)+ \bo (\varepsilon^3)\\
		&= -\varepsilon_1 \cdot g\ln g + \bo (g) + \bo (g^2)+ \bo (g^3).
		\end{align}
		The second equality is obtained by substituting $\varepsilon = \varepsilon_1\cdot g$ and writing $\ln (1-\varepsilon) = -\varepsilon +\bo(\varepsilon^2)$ in terms of Taylor expansion. The third equality is obtained by expanding out all the multiplied brackets, while the last equality is obtained by noting that $\bo(\varepsilon) = \bo(g)$, and therefore concluding that the leading order term (which has the slowest convergence rate as $g\rightarrow 0$) is of order $g\ln g$. 
		With this, one can evaluate the limit
		\begin{align}
		\lim_{g\rightarrow 0^+}\frac{\Delta S}{\Wext} &= \lim_{g\rightarrow 0^+} \frac{-\varepsilon_1 \cdot g\ln g + \bo (g) + \bo (g^2)+ \bo (g^3)}{\frac{g}{\beta_h (\alpha_1 -1)} \left[\alpha_1 n B_{\alpha_1} +\alpha_1\varepsilon_1\right] + \bo(g^{\alpha_1}) + \bo(g^2)+\bo\left(g^{2\alpha_1}\right)+\bo\left(g^{1+\alpha_1}\right)}\\
		& = \lim_{g\rightarrow 0^+} \frac{-\varepsilon_1 \cdot \ln g + \bo (1) + \bo (g)+ \bo (g^2)}{\frac{1}{\beta_h (\alpha_1 -1)} \left[\alpha_1 n B_{\alpha_1} +\alpha_1\varepsilon_1\right] + \bo(g^{\alpha_1-1}) + \bo(g)+\bo\left(g^{2\alpha_1-1}\right)+\bo\left(g^{\alpha_1}\right)	}\\
		&=\infty.
		\end{align}
		The second equality is obtained by dividing both numerator and denominator with $g$. Then we see that in the numerator, $-\varepsilon_1 \cdot\ln g$ goes to infinity, while the other terms remain finite. On the other hand, the denominator goes to a finite constant. Therefore, we conclude that $\lim_{g\rightarrow 0^+}\frac{\Delta S}{\Wext} = \infty$. 
		\end{proof}
		
		\textbf{From here onwards, we focus our analysis to the case where the cold bath consists of qubits. Therefore, $B_\alpha$ is given by Eq.~\eqref{eq:Balp_identicalqubits_quasistatic}, and $B_\alpha',~B_\alpha''$ in Eq.~\eqref{eq:B_alp_p_identicalqubits_quasistatic}, \eqref{eq:B_alp_pp_identicalqubits_quasistatic} respectively.}
		
		The next Lemmas \ref{lem:falpha_is_concave} and \ref{lem:notmorethantwostpt} would establish a useful property of $\diff{W_\alpha}{\alpha}$, namely that this function has not more than 3 roots in the regime $\alpha\in (1,\infty)$, i.e. $W_\alpha$ does not have more than 3 stationary points. 
		Then in Lemma \ref{lem:Winfty_approachedfromabove} we show how the value of $\lim_{\alpha\rightarrow\infty} W_\alpha$ is approached. 
		
		\begin{lemma}\label{lem:falpha_is_concave}
		Consider the function $f(\alpha):=\alpha(\alpha-1)-\frac{B_\alpha}{B_\alpha'} - \frac{\varepsilon_1}{n B_\alpha'}$, which is found in the R.H.S. of Eq.~\eqref{eq:derivative_Walp_quasistatic_identicalqubits_simple}. Then its first derivative w.r.t. $\alpha$, $f'(\alpha) = \diff{f(\alpha)}{\alpha}$ is strictly concave in the domain $\alpha\in (1,\infty)$. This also implies that $f(\alpha)$ has at most 3 roots in the regime $\alpha\in (1,\infty)$.
		\end{lemma}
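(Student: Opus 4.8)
The plan is to reduce the claim to a single positivity statement in one exponential variable. Write $G := B_\alpha + \frac{\varepsilon_1}{n}$, so that $G' = B_\alpha'$ and the function of interest is
\[
f(\alpha) = \alpha(\alpha-1) - h, \qquad h := \frac{G}{G'} = \frac{B_\alpha + \varepsilon_1/n}{B_\alpha'}.
\]
Since $\alpha(\alpha-1)$ is quadratic, $f''' = -h'''$, so showing that $f'$ is strictly concave on $(1,\infty)$ is equivalent to showing $h''' > 0$ there. The difficulty is that differentiating the quotient $G/G'$ three times directly is unwieldy; the idea that tames it is to change to the variable
\[
w := e^{[\beta_h + (\beta_c-\beta_h)\alpha]E}, \qquad \diff{w}{\alpha} = \lambda\, w, \quad \lambda := (\beta_c-\beta_h)E > 0,
\]
which obeys a linear ODE, so that $w$ and $w^{-1}$ transform like $e^{\pm\lambda\alpha}$ under $\diff{}{\alpha}$.

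Next I would rewrite the $B$-quantities in this variable. Using Eq.~\eqref{eq:Balp_identicalqubits_quasistatic} and Eq.~\eqref{eq:B_alp_p_identicalqubits_quasistatic}, and abbreviating $A := \frac{E}{1+e^{\beta_c E}}$, $y := e^{\beta_c E}$, $C_1 := E^2(\beta_c-\beta_h)$ and $c := \frac{\varepsilon_1}{n}$, a short computation gives the clean forms
\[
B_\alpha = A\,\frac{w-y}{1+w}, \qquad B_\alpha' = C_1\,\frac{w}{(1+w)^2}.
\]
Hence, setting $P := A+c$ and $Q := c - Ay$,
\[
h = \frac{G}{G'} = \frac{(Pw+Q)(1+w)}{C_1\,w} = \frac{1}{C_1}\Big[\,Pw + (P+Q) + \frac{Q}{w}\,\Big].
\]
Because $\diff{w}{\alpha} = \lambda w$ and $\diff{(w^{-1})}{\alpha} = -\lambda w^{-1}$, the three derivatives cycle in sign and the constant term drops, yielding immediately
\[
h''' = \frac{\lambda^3}{C_1}\Big(Pw - \frac{Q}{w}\Big) = \frac{\lambda^3}{C_1\,w}\big(Pw^2 - Q\big).
\]

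It then remains only to verify $Pw^2 - Q > 0$ on $(1,\infty)$, since $\lambda, C_1, w > 0$ fix the overall sign. On this interval $w$ is increasing and $w > w\big|_{\alpha=1} = e^{\beta_c E} = y$, while $P = A+c > 0$, so $Pw^2 > Py^2$; thus it suffices to check the boundary value
\[
Py^2 - Q = (A+c)y^2 - (c - Ay) = (1+y)\big[\,Ay + c(y-1)\,\big] > 0,
\]
which holds because $A,c>0$ and $y = e^{\beta_c E} > 1$ for $E>0$. Therefore $h''' > 0$, i.e.\ $f''' < 0$, so $f'$ is strictly concave on $(1,\infty)$.

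Finally I would deduce the root count. A strictly concave function has at most two zeros, so $f'$ vanishes at most twice, i.e.\ $f$ has at most two stationary points on $(1,\infty)$; by Rolle's theorem each pair of consecutive roots of $f$ encloses a stationary point, so $m$ roots force at least $m-1$ stationary points, giving $m \le 3$. I expect the only genuinely delicate step to be spotting the substitution $w$: once the derivative structure is linearized the remaining inequalities are elementary. It is worth noting that the argument uses neither the energy-gap bound Eq.~\eqref{eq:mainresult_Egap} nor any smallness of $\varepsilon_1$ — only $\beta_c > \beta_h$, $E>0$ and $\varepsilon_1 > 0$.
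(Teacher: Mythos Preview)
Your proof is correct, and it proceeds along a genuinely different line than the paper's.

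The paper splits $f'(\alpha)$ additively as $f'(\alpha)=g'(\alpha)+\frac{\varepsilon_1}{n}\frac{B_\alpha''}{(B_\alpha')^2}$, where $g(\alpha)=\alpha(\alpha-1)-B_\alpha/B_\alpha'$. It then invokes an external result (Lemma~12 of the supplementary material of \cite{WNW15}) for the strict concavity of $g'$, and separately computes $\seconddiff{}{\alpha}\big(B_\alpha''/(B_\alpha')^2\big)$ by brute force to show the second summand is also strictly concave. Your approach instead absorbs the constant $\varepsilon_1/n$ into the numerator by writing $G=B_\alpha+\varepsilon_1/n$ and reducing to a single third-derivative estimate. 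The key gain is the substitution $w=e^{[\beta_h+(\beta_c-\beta_h)\alpha]E}$, which diagonalizes the $\alpha$-derivative on both $w$ and $w^{-1}$; once $h=C_1^{-1}[Pw+(P+Q)+Q/w]$, the three differentiations become trivial and the whole inequality collapses to the elementary boundary check $Py^2-Q=(1+y)[Ay+c(y-1)]>0$. This makes your argument fully self-contained (no appeal to \cite{WNW15}), handles the two pieces of the paper's decomposition simultaneously, and is shorter. The paper's decomposition, on the other hand, isolates the $\varepsilon_1$-independent piece, which is conceptually useful elsewhere in \cite{WNW15} for the near-perfect-work analysis. Your concluding Rolle argument for the root count is the same as the implicit one in the paper.
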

		\begin{proof}
		Note that $f'(\alpha) = g'(\alpha)+\frac{\varepsilon_1}{n}\frac{B_\alpha''}{B_\alpha'^2}$, where $g'(\alpha) = \diff{}{\alpha} [\alpha (\alpha-1)-\frac{B_\alpha}{B_\alpha'}]$. It has been shown in Lemma 12, Supplementary Material of \cite{WNW15} that $g'(\alpha)$ is a strictly concave function. On the other hand, by using the definitions in Eq.~\eqref{eq:B_alp_p_identicalqubits_quasistatic} and \eqref{eq:B_alp_pp_identicalqubits_quasistatic},  one can evaluate the second derivative of 
		\begin{align*}
		\seconddiff{}{\alpha} \frac{B_\alpha''}{B_\alpha'^2} = (\beta_c-\beta_h)^2 e^{-[\beta_h +\alpha(\beta_c+\beta_h)]E}E\cdot \left[e^{2\alpha\beta_h E}-e^{2(\alpha\beta_c+\beta_h)E}\right].
		\end{align*}
		All the terms in the equation above are positive, except for the last term which is always negative when $\beta_h<\beta_c$. Therefore, the function $\frac{B_\alpha''}{B_\alpha'^2}$ is strictly concave as well. This implies that $f'(\alpha)$ is the addition of two strictly concave functions, and therefore is also strictly concave itself. 
		\end{proof}
		
		One can apply Lemma \ref{lem:falpha_is_concave} to analyze the function $W_\alpha$ to show that it does not have more than 3 stationary points. 
		\begin{lemma}[$W_\alpha$ has not more than 3 stationary points]\label{lem:notmorethantwostpt} 
		Consider the continuous function $W_\alpha = \frac{ng}{\beta_h (\alpha-1)} \left[\alpha B_\alpha +\frac{\alpha\varepsilon_1}{n}\right]$ in the regime $\alpha\in (1,\infty)$. Then the equation $\diff{W_\alpha}{\alpha}=0$ has at most 3 roots, i.e. the function $W_\alpha$ has not more than 3 stationary points. 
		\end{lemma}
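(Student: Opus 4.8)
The plan is to reduce the problem of counting stationary points of $W_\alpha$ to counting the zeros of the auxiliary function $f(\alpha)$ already introduced in Lemma~\ref{lem:falpha_is_concave}, and then to exploit the strict concavity of $f'$ established there. The natural starting point is the factorized form of the derivative recorded in Eq.~\eqref{eq:derivative_Walp_quasistatic_identicalqubits_simple},
\begin{equation}
\diff{W_\alpha}{\alpha} = \frac{ng}{\beta_h}\,\frac{B_\alpha'}{(\alpha-1)^2}\,f(\alpha),
\end{equation}
which exhibits $\diff{W_\alpha}{\alpha}$ as a product of an explicit prefactor and the function $f$.

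First I would observe that the prefactor $\frac{ng}{\beta_h}\frac{B_\alpha'}{(\alpha-1)^2}$ is strictly positive throughout $\alpha\in(1,\infty)$: the constants $n,g,\beta_h$ are positive, $(\alpha-1)^2>0$ for $\alpha\neq 1$, and $B_\alpha'>0$ for all $\alpha>0$ by Eq.~\eqref{eq:B_alp_p_identicalqubits_quasistatic} whenever $\beta_c>\beta_h$. Since this prefactor never vanishes on the open interval, the equation $\diff{W_\alpha}{\alpha}=0$ holds precisely when $f(\alpha)=0$. Hence the stationary points of $W_\alpha$ on $(1,\infty)$ are in exact correspondence with the roots of $f$, and it suffices to bound the number of the latter by three.

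Next I would invoke the concavity result from Lemma~\ref{lem:falpha_is_concave}, namely that $f'(\alpha)$ is strictly concave on $(1,\infty)$. A strictly concave function has at most two zeros, since the set on which it is nonnegative is an interval; therefore $f'$ vanishes at most twice. These (at most two) zeros partition $(1,\infty)$ into at most three subintervals on each of which $f'$ has constant sign, so that $f$ is strictly monotone there. A strictly monotone continuous function has at most one root per piece, giving $f$ at most three roots in total. Combined with the correspondence from the previous step, $W_\alpha$ has at most three stationary points, as claimed.

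The genuine analytic content of this result lies in the preceding lemma (the strict concavity of $f'$); given that input, the present statement is essentially bookkeeping. The main point requiring care is to confirm that the positive prefactor neither introduces nor cancels any roots, which is why I would stress the \emph{strict} positivity of $B_\alpha'$ on the entire interval rather than mere nonnegativity, and to handle the counting argument ``derivative strictly concave $\Rightarrow$ at most three zeros'' precisely, since it is exactly this step that caps the number of stationary points at three rather than at two.
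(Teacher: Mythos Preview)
Your proposal is correct and follows essentially the same approach as the paper: factor $\diff{W_\alpha}{\alpha}$ as a strictly positive prefactor times $f(\alpha)$, then invoke Lemma~\ref{lem:falpha_is_concave}. The only cosmetic difference is that the paper packages the implication ``$f'$ strictly concave $\Rightarrow$ $f$ has at most three roots'' into the statement of Lemma~\ref{lem:falpha_is_concave} itself, whereas you spell out the counting argument explicitly; either way the logic is identical.
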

		\begin{proof}
		Let us begin by writing out the function 
		\begin{align*}
		\diff{W_\alpha}{\alpha} = \frac{ng}{\beta_h} \frac{1}{(\alpha-1)^2}B_\alpha' \left\lbrace \alpha(\alpha-1) -n \frac{B_\alpha}{B_\alpha'} - \frac{\varepsilon_1}{n B_\alpha'} \right\rbrace.
		\end{align*}
		Since from the expression in Eq.~\eqref{eq:B_alp_p_identicalqubits_quasistatic}, we see that $B_\alpha' >0$ whenever $\beta_h<\beta_c$, by Lemma \ref{lem:falpha_is_concave}, we know that $\diff{W_\alpha}{\alpha} $ can have at most 3 roots.
		\end{proof}

		\begin{lemma}[$W_\infty$ is approached from above]\label{lem:Winfty_approachedfromabove}
		Consider the continuous function $W_\alpha = \frac{ng}{\beta_h (\alpha-1)} \left[\alpha B_\alpha +\frac{\alpha\varepsilon_1}{n}\right]$ in the regime $\alpha\in (1,\infty)$. Then the limit $\displaystyle\lim_{\alpha\rightarrow\infty} \Walpha$ exists and is approached from above.
		\end{lemma}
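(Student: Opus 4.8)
The plan is to verify the limit explicitly and then control the sign of $W_\alpha - W_\infty$ for large $\alpha$. For the limit, since $\beta_c>\beta_h$ the exponential $e^{(\beta_h+\alpha\beta_c)E}$ dominates both numerator and denominator of Eq.~\eqref{eq:Balp_identicalqubits_quasistatic}; dividing through by it gives
\[
B_\alpha = \frac{E}{1+e^{\beta_c E}}\cdot \frac{1-e^{-(\alpha-1)(\beta_c-\beta_h)E}}{1+e^{-[\beta_h+\alpha(\beta_c-\beta_h)]E}}.
\]
Both exponentials in this fraction vanish as $\alpha\to\infty$, so $B_\alpha\to B_\infty:=\frac{E}{1+e^{\beta_c E}}$. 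Since $\frac{\alpha}{\alpha-1}\to 1$, the limit $\lim_{\alpha\to\infty}W_\alpha=\frac{ng}{\beta_h}\bigl(B_\infty+\varepsilon_1/n\bigr)$ exists and coincides with $W_\infty$ in Eq.~\eqref{eq:Winftyquasistatic}.

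Next I would isolate the sign. Writing $\frac{\alpha}{\alpha-1}=1+\frac{1}{\alpha-1}$ gives
\[
W_\alpha-W_\infty=\frac{ng}{\beta_h}\left[(B_\alpha-B_\infty)+\frac{B_\alpha+\varepsilon_1/n}{\alpha-1}\right].
\]
Because $B_\alpha'>0$ for all $\alpha>0$ (Eq.~\eqref{eq:B_alp_p_identicalqubits_quasistatic}), $B_\alpha$ is strictly increasing towards its supremum $B_\infty$, so the first term $B_\alpha-B_\infty$ is strictly negative; meanwhile the numerator difference in Eq.~\eqref{eq:Balp_identicalqubits_quasistatic} shows $B_\alpha>0$ for $\alpha>1$, so the second term is strictly positive. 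Thus the sign of $W_\alpha-W_\infty$ is decided by a competition between a negative and a positive contribution.

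Finally I would resolve that competition asymptotically. From the factored form above, $B_\infty-B_\alpha=\bo\!\left(e^{-(\alpha-1)(\beta_c-\beta_h)E}\right)$ decays \emph{exponentially} in $\alpha$; equivalently one may integrate the exponentially small derivative, $B_\infty-B_\alpha=\int_\alpha^\infty B_s'\,ds$, using the explicit $B_\alpha'$ of Eq.~\eqref{eq:B_alp_p_identicalqubits_quasistatic}. By contrast $\frac{B_\alpha+\varepsilon_1/n}{\alpha-1}\sim\frac{B_\infty+\varepsilon_1/n}{\alpha-1}$ decays only \emph{polynomially}. Since polynomial decay dominates exponential decay, there is an $\alpha_0$ such that for all $\alpha>\alpha_0$ the positive $\frac{1}{\alpha-1}$ term exceeds the magnitude of the negative term, yielding $W_\alpha>W_\infty$; hence $W_\infty$ is approached from above.

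The main obstacle is making the final domination fully rigorous rather than invoking the slogan ``exponential beats polynomial'': concretely, I would fix an explicit bound $B_\infty-B_\alpha\le K\,e^{-(\alpha-1)(\beta_c-\beta_h)E}$ for suitable $K>0$ and then exhibit $\alpha_0$ beyond which this is smaller than $\tfrac{1}{2}\frac{B_\infty+\varepsilon_1/n}{\alpha-1}$, so that the bracket in the displayed expression for $W_\alpha-W_\infty$ stays strictly positive. Everything else is a routine dominant-balance computation on the explicit exponentials of $B_\alpha$ and $B_\alpha'$.
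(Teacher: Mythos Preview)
Your argument is correct but structured differently from the paper's. The paper proves the same conclusion by showing that $\diff{W_\alpha}{\alpha}<0$ for all sufficiently large $\alpha$: from Eq.~\eqref{eq:derivative_Walp_quasistatic_identicalqubits_simple} the bracket $\alpha(\alpha-1)B_\alpha'-B_\alpha-\varepsilon_1/n$ has its first term $\alpha(\alpha-1)B_\alpha'\to 0$ (polynomial times an exponentially small factor) while the remainder tends to the strictly negative constant $-(B_\infty+\varepsilon_1/n)$, so the derivative is eventually negative and $W_\alpha$ decreases to its limit. You instead decompose $W_\alpha-W_\infty$ directly and pit the exponentially small deficit $B_\alpha-B_\infty<0$ against the $O(1/(\alpha-1))$ positive term. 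Both routes rest on the same dominant-balance observation (exponential decay of $B_\alpha'$, respectively of $B_\infty-B_\alpha$, against a polynomially decaying competitor), and both are equally rigorous once you fix constants as you indicate. The paper's derivative version has the minor advantage of yielding eventual monotonicity of $W_\alpha$, which is exactly the form invoked in Lemma~\ref{lem:onlyonelocalmin}; your version gives $W_\alpha>W_\infty$ for large $\alpha$, which is also sufficient there since a function that is non-decreasing beyond its last stationary point cannot stay strictly above its limit.
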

		\begin{proof}
		We have seen from Eq.~\eqref{eq:Winftyquasistatic} that $\lim_{\alpha\rightarrow\infty} W_\alpha$ exists and is some finite number. We then only need to prove that in the limit of large $\alpha$, the quantity $\diff{W_\alpha}{\alpha} <0$. This can be seen from Eq.~\eqref{eq:derivative_Walp_quasistatic_identicalqubits_simple}, which we rewrite here again
		\begin{equation}\label{eq:lemWinfty}
		\diff{W_\alpha}{\alpha} = \frac{ng}{\beta_h} \frac{1}{(\alpha-1)^2}\left\lbrace \alpha(\alpha-1) B_\alpha'- B_\alpha - \frac{\varepsilon_1}{n} \right\rbrace.
		\end{equation}
		
		Let us compare the terms in the large bracket of the R.H.S. The first term 
		\begin{equation}
		\alpha (\alpha-1) B_\alpha' = \alpha (\alpha-1) E^2 (\beta_c-\beta_h) e^{-\beta_h E} e^{-\alpha (\beta_c-\beta_h)E}
		\end{equation}
		has a quadratic term in $\alpha$ multiplied by a term which decreases exponentially in $\alpha$, i.e. $\lim_{\alpha\rightarrow\infty} \alpha (\alpha-1) B_\alpha' = 0$. On the other hand, the remaining terms
		\begin{equation}
		\lim_{\alpha\rightarrow\infty} -B_\alpha - \frac{\varepsilon_1}{n} = - \left[ \frac{E}{1+e^{\beta_c E}} + \frac{\varepsilon_1}{n} \right] <0.
		\end{equation}
		Since for large $\alpha \gg 1$, the multiplicative factor in Eq.~\eqref{eq:lemWinfty} is positive, we have that $\diff{W_\alpha}{\alpha} <0$. This implies that the function $W_\alpha$ approaches the limit $\alpha\rightarrow\infty$ from above.
		\end{proof}
		
 		\begin{lemma}\label{lem:onlyonelocalmin}
 		The function $W_\alpha$ does not have more than one distinct local minimas in the regime $\alpha\in (1,\infty)$.
 		\end{lemma}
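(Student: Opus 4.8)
The plan is to argue by contradiction, combining the bound on the number of stationary points (Lemma~\ref{lem:notmorethantwostpt}) with the fact that $W_\infty$ is approached from above (Lemma~\ref{lem:Winfty_approachedfromabove}). The central observation is that the sign of $\diff{W_\alpha}{\alpha}$ is governed entirely by the sign of the function $f(\alpha)=\alpha(\alpha-1)-\frac{B_\alpha}{B_\alpha'}-\frac{\varepsilon_1}{n B_\alpha'}$, because from Eq.~\eqref{eq:derivative_Walp_quasistatic_identicalqubits_simple} we have $\diff{W_\alpha}{\alpha}=\frac{ng}{\beta_h}\frac{B_\alpha'}{(\alpha-1)^2}f(\alpha)$, and the prefactor $\frac{ng}{\beta_h}\frac{B_\alpha'}{(\alpha-1)^2}$ is strictly positive throughout $(1,\infty)$ since $B_\alpha'>0$ whenever $\beta_h<\beta_c$. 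Thus the stationary points of $W_\alpha$ are exactly the roots of $f$, and a local minimum at some $\alpha_0$ corresponds precisely to $f$ changing sign from negative to positive there.

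First I would suppose, for contradiction, that $W_\alpha$ has two distinct local minima at $\alpha_1<\alpha_2$ in $(1,\infty)$. Since $W_\alpha$ is continuous, it must attain a local maximum at some $\alpha_m\in(\alpha_1,\alpha_2)$, at which $f$ changes sign from positive to negative. This produces three distinct sign changes of $f$: negative-to-positive at $\alpha_1$, positive-to-negative at $\alpha_m$, and negative-to-positive at $\alpha_2$. Consequently $f$ has at least three roots in $(1,\infty)$. By Lemma~\ref{lem:notmorethantwostpt} it has at most three roots, so $\alpha_1,\alpha_m,\alpha_2$ are precisely all of them.

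The contradiction then comes from examining the behaviour for $\alpha>\alpha_2$. Because these three points exhaust the roots of $f$, and $f$ changes from negative to positive at $\alpha_2$, we conclude that $f(\alpha)>0$ for every $\alpha>\alpha_2$, and hence $\diff{W_\alpha}{\alpha}>0$ for all large $\alpha$. This directly contradicts Lemma~\ref{lem:Winfty_approachedfromabove}, which asserts that $\lim_{\alpha\rightarrow\infty}W_\alpha$ is approached from above, i.e.\ $\diff{W_\alpha}{\alpha}<0$ for sufficiently large $\alpha$. Therefore $W_\alpha$ cannot possess two distinct local minima on $(1,\infty)$.

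I expect the only delicate point to be the bookkeeping of sign changes: one must ensure that each claimed local extremum genuinely forces a sign reversal of $f$ rather than a degenerate tangency, and that a lower bound on the number of sign changes really lower-bounds the number of roots. This is secured by Lemma~\ref{lem:falpha_is_concave}, since the strict concavity of $f'$ already guarantees that $f$ has only finitely many (indeed at most three) roots; hence the zeros of $\diff{W_\alpha}{\alpha}$ are isolated, $W_\alpha$ is piecewise strictly monotonic, and every interior local extremum is an honest negative-to-positive or positive-to-negative crossing of $f$, making the counting argument rigorous.
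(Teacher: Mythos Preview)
Your proposal is correct and follows essentially the same approach as the paper: both argue that two distinct local minima would force a local maximum in between, exhausting the at-most-three stationary points from Lemma~\ref{lem:notmorethantwostpt}, and then derive a contradiction with Lemma~\ref{lem:Winfty_approachedfromabove} from the forced monotonicity beyond the last stationary point. Your version is slightly more explicit in tracking the sign of $f(\alpha)$ and in ruling out degenerate tangencies, but the underlying argument is the same.
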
	 	
 		\begin{proof}
			By Lemma \ref{lem:notmorethantwostpt}, we know that the function $W_\alpha$ has at most 3 stationary points in the regime $\alpha\in (1,\infty)$. 
			Firstly, suppose that $W_\alpha$ has only 1 or 2 stationary points. Then it is clear that there cannot exist two distinct local minimas, since for a continuous function with two local minimas, there has to be at least another local maxima in between, which is also a stationary point. 
			
			Now, suppose that $W_\alpha$ has 3 stationary points, found at $1 <\alpha_1 < \alpha_2 <\alpha_3 < \infty$ respectively. Note that two neighbouring stationary points cannot both correspond to local minimas, as reasoned out in the previous paragraph. Therefore, the only way for there to exist 2 local minimum points, is to have $\alpha_1, \alpha_3$ corresponding to local minimas. If there are no more stationary points in the regime $\alpha > \alpha_3$, then the function $W_\alpha$ can only be non-decreasing, and the limit $\alpha\rightarrow\infty$ has to be approached from below. However, by Lemma \ref{lem:Winfty_approachedfromabove} we know that this cannot be true.
				
			This establishes the fact that $W_\alpha$ does not have two distinct local minimas. Therefore, it implies that whenever we find some $\alpha^*$ corresponding to a local minima, it will be the unique local minima of the entire function. This simplifies the minimization of $W_\alpha$ in Eq.~\eqref{eq:minimization_of_Walp} to comparing $W_{\alpha^*}$ with $W_\infty$.
		
		\end{proof} 			

		In the next lemma, we then prove that by making use of our liberty to choose $\varepsilon_1$, we can design it such that $\Wext=\inf_{\alpha> 1}W_\alpha$ is obtained at any $\alpha^*$ we desire.
		
		\begin{lemma}(Conditions for positive $\varepsilon_1$)\label{lem:condition_positive_eps1}
			Consider the function 
			\begin{equation}\label{eq:def_eps1}
			 	\varepsilon_1 (a,n) := n [a(a-1)B_{a}' - B_{a}].
			 \end{equation} When the condition 
			\begin{equation}\label{eq:condition_positive_eps1}
				E<\frac{2}{\beta_c-\beta_h}\frac{1+e^{\beta_c E}}{e^{\beta_c E}-1}
			\end{equation}
			holds, then there exists some $\alpha^*>1$ such that $\varepsilon_1 (\alpha^*,n) >0$.
		\end{lemma}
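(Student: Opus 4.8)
The plan is to reduce the claim to a statement about the sign of a single auxiliary function near $\alpha=1$. Since $n>0$, it suffices to produce some $\alpha^*>1$ with $G(\alpha^*)>0$, where
\begin{equation}
G(\alpha) := \alpha(\alpha-1)B_\alpha' - B_\alpha ,
\end{equation}
so that $\varepsilon_1(\alpha^*,n)=nG(\alpha^*)>0$. First I would record that $B_1=0$: substituting $\alpha=1$ into Eq.~\eqref{eq:Balp_identicalqubits_quasistatic} makes the numerator $e^{(\beta_h+\beta_c)E}-e^{(\beta_c+\beta_h)E}$ vanish. Consequently $G(1)=1\cdot 0\cdot B_1' - 0 = 0$, so $\alpha=1$ is itself a zero of $G$, and the desired condition on $E$ must instead be extracted from the behaviour of $G$ just to the right of $\alpha=1$.

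The key step is to examine the first two derivatives of $G$ at $\alpha=1$. Differentiating gives
\begin{equation}
G'(\alpha) = 2(\alpha-1)B_\alpha' + \alpha(\alpha-1)B_\alpha'',
\end{equation}
so $G'(1)=0$ as well, since every term carries a factor $(\alpha-1)$. The leading behaviour is therefore governed by the second derivative; differentiating once more and setting $\alpha=1$ collapses all terms whose coefficient contains $(\alpha-1)$, leaving
\begin{equation}
G''(1) = 2B_1' + B_1''.
\end{equation}
If $G''(1)>0$, then the Taylor expansion $G(\alpha)=\tfrac{1}{2}G''(1)(\alpha-1)^2+\bo((\alpha-1)^3)$ is strictly positive for $\alpha$ in a right-neighbourhood of $1$, and any such $\alpha^*$ furnishes the required $\varepsilon_1(\alpha^*,n)>0$. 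Thus the whole lemma reduces to showing that condition \eqref{eq:condition_positive_eps1} is precisely the statement $2B_1'+B_1''>0$.

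The remaining work is the evaluation of $B_1'$ and $B_1''$ from Eqns.~\eqref{eq:B_alp_p_identicalqubits_quasistatic} and \eqref{eq:B_alp_pp_identicalqubits_quasistatic} and the simplification of the resulting inequality. Setting $\alpha=1$, the denominators factor as $e^{\beta_h E}(1+e^{\beta_c E})$, and after cancellation one finds
\begin{equation}
B_1' = \frac{E^2(\beta_c-\beta_h)e^{\beta_c E}}{(1+e^{\beta_c E})^2}, \qquad
B_1'' = \frac{E^3(\beta_c-\beta_h)^2 e^{\beta_c E}(1-e^{\beta_c E})}{(1+e^{\beta_c E})^3}.
\end{equation}
I would then factor the manifestly positive quantity $E^2(\beta_c-\beta_h)e^{\beta_c E}/(1+e^{\beta_c E})^2$ (positive because $\beta_c>\beta_h$) out of $2B_1'+B_1''$, reducing the sign condition to
\begin{equation}
2 + \frac{E(\beta_c-\beta_h)(1-e^{\beta_c E})}{1+e^{\beta_c E}} > 0 ,
\end{equation}
which rearranges exactly to $E<\frac{2}{\beta_c-\beta_h}\frac{1+e^{\beta_c E}}{e^{\beta_c E}-1}$, i.e.\ Eq.~\eqref{eq:condition_positive_eps1}, closing the argument.

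I expect no conceptual difficulty in this lemma; the only genuine insight is the observation that $G(1)=G'(1)=0$, which forces the analysis down to the second-order coefficient $G''(1)$. The sole obstacle is bookkeeping: keeping the exponentials organised through the evaluation of $B_1'$ and $B_1''$ and the final factorisation, so that the stated threshold on $E$ emerges cleanly as the positivity of $2B_1'+B_1''$.
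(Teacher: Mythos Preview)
Your proposal is correct and follows essentially the same route as the paper: both show $G(1)=G'(1)=0$, reduce to the second-order condition $G''(1)=2B_1'+B_1''>0$, and then verify algebraically that this positivity is equivalent to Eq.~\eqref{eq:condition_positive_eps1}. Your explicit evaluations of $B_1'$ and $B_1''$ and the final factorisation are accurate and in fact slightly cleaner than the paper's, which carries the $e^{\beta_h E}$ factors in the denominator rather than cancelling them immediately.
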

		\begin{proof}
			We begin by noting that $\varepsilon(1,n)=0$ for any $n$. A Taylor's expansion around $a=1$ would	determine the positivity of $\varepsilon(a,n)$ for $a=1+\delta$ where $\delta\ll 1$. Therefore, we calculate the derivative $\diff{\varepsilon_1(a,n)}{a}$
			\begin{equation}\label{eq:derivative_eps1}
				\diff{\varepsilon_1(a,n)}{a} = n [(a-1)B_a'+aB_a'+a(a-1)B_a''-B_a'] = n(a-1)[2B_a'+aB_a''].
			\end{equation}
			It is easy to see from Eq.~\eqref{eq:derivative_eps1} that $\diff{\varepsilon_1(a,n)}{a}\Big|_{a=1} =0$.
			Therefore, the term that determines positivity of $\varepsilon_1 (a,n)$ around $a=1$ is the second derivative,
			\begin{equation}
				\seconddiff{\varepsilon_1(a,n)}{a} = n[2B_a'+aB_a''+(a-1)(2B_a''+B_a''+aB_a''')].
			\end{equation}
			The quantity $\seconddiff{\varepsilon_1(a,n)}{a}\Big|_{a=1} = n [2B_1'+B_1'']$ we can expressed in a simplified form,
			\begin{equation}
				\seconddiff{\varepsilon_1(a,n)}{a}\Big|_{a=1} = \frac{n (\beta_c-\beta_h) e^{(\beta_c + 3\beta_h)E} E^2}{[e^{\beta_h E}+ e^{(\beta_c+\beta_h)E}]^3} \left[ 2 +(\beta_c-\beta_h)E+e^{\beta_c E} (2 - \beta_c E +\beta_h E ) \right].
			\end{equation}
			For this to be positive, it implies that $ 2 +(\beta_c-\beta_h)E+e^{\beta_c E} (2 - \beta_c E +\beta_h E )>0$.
			Rearranging terms, we find
		\begin{align}
			(\beta_c-\beta_h)E (1-e^{\beta_c E}) &> -2 (1+e^{\beta_c E})\\
			E &< \frac{2}{\beta_c-\beta_h} \frac{e^{\beta_c E}+1}{e^{\beta_c E}-1}.
		\end{align}
		\end{proof}
		
		\begin{lemma}\label{lem:alpstisstationarypoint}
		Consider the function $W_\alpha$ as described in Eq.~\eqref{eq:Walp_identicalqubits_quasistatic}. When $\varepsilon_1$ is given by Eq.~\eqref{eq:def_eps1} for some $a=\alpha^*>1$, then
		$\diff{W_\alpha}{\alpha}\Big|_{\alpha=\alpha^*}=0$.
		\end{lemma}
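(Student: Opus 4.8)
The plan is to recognize that this lemma is a direct verification: the value of $\varepsilon_1$ prescribed in Eq.~\eqref{eq:def_eps1} has been reverse-engineered precisely so that the first-order condition for $W_\alpha$ holds at $\alpha=\alpha^*$. I would begin from the simplified form of the derivative in Eq.~\eqref{eq:derivative_Walp_quasistatic_identicalqubits_simple}, namely
\[
\diff{W_\alpha}{\alpha} = \frac{ng}{\beta_h}\,\frac{B_\alpha'}{(\alpha-1)^2}\left\{\alpha(\alpha-1)-\frac{B_\alpha}{B_\alpha'}-\frac{\varepsilon_1}{nB_\alpha'}\right\},
\]
which is the leading-order expression valid for $\alpha>1$ in the quasi-static limit, obtained from Eq.~\eqref{eq:derivative_Wpalp_identicalqubits_quasistatic} after discarding the terms $\varepsilon^\alpha$ and $f(g,\alpha)$; these are subleading as $g\rightarrow 0^+$ on the interval $\alpha>1$, as noted in the discussion following Lemma~\ref{lem:inf_not_below_one}.

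Next I would observe that the multiplicative prefactor $\tfrac{ng}{\beta_h}\tfrac{B_\alpha'}{(\alpha-1)^2}$ is strictly positive for every $\alpha>1$: indeed $n,g,\beta_h>0$, the factor $(\alpha-1)^2>0$, and $B_\alpha'>0$ whenever $\beta_c>\beta_h$ by Eq.~\eqref{eq:B_alp_p_identicalqubits_quasistatic}. Consequently $\diff{W_\alpha}{\alpha}$ vanishes at $\alpha=\alpha^*$ if and only if the term in braces vanishes there, i.e.\ if and only if
\[
\alpha^*(\alpha^*-1)-\frac{B_{\alpha^*}}{B_{\alpha^*}'}-\frac{\varepsilon_1}{nB_{\alpha^*}'}=0.
\]
Multiplying through by the nonzero quantity $nB_{\alpha^*}'$ converts this into the equivalent condition $\varepsilon_1 = n\bigl[\alpha^*(\alpha^*-1)B_{\alpha^*}'-B_{\alpha^*}\bigr]$, which is exactly the definition $\varepsilon_1(\alpha^*,n)$ in Eq.~\eqref{eq:def_eps1}. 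Substituting this chosen $\varepsilon_1$ back into the brace confirms that it vanishes identically, hence $\diff{W_\alpha}{\alpha}\Big|_{\alpha=\alpha^*}=0$.

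There is no genuine analytic obstacle here; the lemma is essentially a consistency check establishing that $\varepsilon_1(\alpha^*,n)$ solves the stationarity equation by construction. The only points that warrant care are the justification for using the truncated derivative Eq.~\eqref{eq:derivative_Walp_quasistatic_identicalqubits_simple} in place of the full Eq.~\eqref{eq:derivative_Wpalp_identicalqubits_quasistatic} — which rests on the subleading nature of the $f(g,\alpha)$ and $\varepsilon^\alpha$ contributions for $\alpha>1$ as $g\rightarrow 0^+$ — and the strict positivity $B_{\alpha^*}'>0$, needed both to divide through and to guarantee $\alpha^*$ is a genuine root rather than an artifact of a vanishing prefactor. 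I would emphasize that this lemma only identifies $\alpha^*$ as a \emph{stationary} point of the leading-order profile; confirming it is actually a local minimum is deferred to Lemma~\ref{lem:seconddiff>0}, and combined with the uniqueness result of Lemma~\ref{lem:onlyonelocalmin}, this is what ultimately pins down $\Wext=W_{\alpha^*}$.
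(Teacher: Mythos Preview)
Your proof is correct and follows essentially the same approach as the paper: write out the leading-order derivative, justify dropping the subleading $f(g,\alpha)$ term for $\alpha>1$ in the quasi-static limit, and then observe that the choice $\varepsilon_1=\varepsilon_1(\alpha^*,n)$ from Eq.~\eqref{eq:def_eps1} makes the bracketed factor vanish at $\alpha=\alpha^*$. Your version is in fact slightly more explicit than the paper's, which simply states that substitution yields zero.
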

		\begin{proof}
		To see this, let us write out the final form of the first derivative of $W_\alpha$ in Eq.~\eqref{eq:derivative_Wpalp_identicalqubits_quasistatic},
		\begin{equation}
		\diff{W_\alpha}{\alpha}= \frac{g}{\beta_h} \frac{1}{(\alpha-1)^2} \left\lbrace \alpha(\alpha-1)n B_\alpha' -n B_\alpha -\varepsilon_1 +f(g,\alpha) \right\rbrace,
		\end{equation}
		where $f(g,\alpha)\rightarrow 0$ as $g\rightarrow 0$ for $\alpha >1$, and now onwards can be neglected from our derivations since we consider the  quasi-static limit. 
		Substituting Eq.~\eqref{eq:def_eps1} into the equation above gives us 0 when $\alpha=\alpha^*$. This concludes the proof.		
		\end{proof}
		
So far, for a specific design of $\varepsilon_1$, we've found conditions expressed in Eq.~\eqref{eq:condition_positive_eps1} such that $\varepsilon_1>0$ and $W_{\alpha^*}$ is a stationary point. Next, we can write down further conditions for when given $\alpha^*$ and $\varepsilon_1(\alpha^*,n)$ as defined in Lemma \ref{lem:condition_positive_eps1}, one can now find conditions on $E$ such that $W_\alpha$ not only is a stationary point, but also a local minima.

		\begin{lemma}\label{lem:seconddiff>0}
		Consider the functions 
		\begin{equation}\label{eq:Walp_d1}
			\diff{W_\alpha}{\alpha} = \frac{g}{\beta_h} \frac{1}{(\alpha-1)^2} 
			\left[ \alpha(\alpha-1)n B_\alpha' -n B_\alpha -\varepsilon_1 \right],
		\end{equation}
		and 
		\begin{equation}
			B_\alpha= \frac{E}{1+e^{\beta_c E}}\cdot 
			\frac{e^{(\beta_h+\alpha\beta_c)E}-e^{(\beta_c+\alpha\beta_h)E}}{e^{\alpha\beta_h E}+e^{(\beta_h+\alpha\beta_c)E}},
		\end{equation}
		If the following condition holds:
		\begin{equation}\label{eq:lemma8}
		E<\frac{1}{\beta_c-\beta_h},
		\end{equation}
		there one can find some $\alpha^* >1$ in the vicinity of $\alpha=1$ such that when we define $\varepsilon_1 (\alpha^*,n) := n [\alpha^*(\alpha^*-1)B_{\alpha^*}' - B_{\alpha^*}]$, then $\varepsilon_1 (\alpha^*,n) >0$. Furthermore if $1<\alpha^*<2$ is chosen, then
		\begin{equation}
		\seconddiff{W_\alpha}{\alpha}\Bigg|_{\alpha=\alpha^*}>0.
		\end{equation}
		\end{lemma}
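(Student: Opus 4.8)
The plan is to treat the two assertions in turn, since the positivity of $\varepsilon_1$ reduces immediately to the previous lemma while the second-derivative sign requires the explicit qubit formulas. For the positivity claim, I would first observe that condition \eqref{eq:lemma8} is strictly stronger than condition \eqref{eq:condition_positive_eps1} of Lemma \ref{lem:condition_positive_eps1}: because $\frac{1+e^{\beta_c E}}{e^{\beta_c E}-1}>1$, we have $\frac{1}{\beta_c-\beta_h}<\frac{2}{\beta_c-\beta_h}<\frac{2}{\beta_c-\beta_h}\frac{1+e^{\beta_c E}}{e^{\beta_c E}-1}$. Hence $E<\frac{1}{\beta_c-\beta_h}$ implies the hypothesis of Lemma \ref{lem:condition_positive_eps1}, and the existence of an $\alpha^*>1$ near $\alpha=1$ with $\varepsilon_1(\alpha^*,n)>0$ follows directly from it (the Taylor-expansion-at-$\alpha=1$ argument there already localizes the sign change).

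For the second-derivative claim, the key structural observation is that $\alpha^*$ is a stationary point, so the $h(\alpha^*)$-term of the second derivative will drop out. Writing $\diff{W_\alpha}{\alpha}=\frac{g}{\beta_h}\frac{h(\alpha)}{(\alpha-1)^2}$ with $h(\alpha):=\alpha(\alpha-1)nB_\alpha'-nB_\alpha-\varepsilon_1$, I would note the clean identity $h(\alpha)=\varepsilon_1(\alpha,n)-\varepsilon_1(\alpha^*,n)$ once $\varepsilon_1$ is fixed by Eq.~\eqref{eq:def_eps1}, so that $h(\alpha^*)=0$ in agreement with Lemma \ref{lem:alpstisstationarypoint}. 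Differentiating $\frac{h(\alpha)}{(\alpha-1)^2}$ once more and evaluating at $\alpha^*$, the term proportional to $h(\alpha^*)$ vanishes, leaving $\seconddiff{W_\alpha}{\alpha}\big|_{\alpha^*}=\frac{g}{\beta_h(\alpha^*-1)^2}h'(\alpha^*)$. The prefactor is positive, so the sign is that of $h'(\alpha^*)=\diff{\varepsilon_1(\alpha,n)}{\alpha}\big|_{\alpha^*}=n(\alpha^*-1)[2B_{\alpha^*}'+\alpha^*B_{\alpha^*}'']$ by Eq.~\eqref{eq:derivative_eps1}; since $n>0$ and $\alpha^*-1>0$, it suffices to prove $2B_{\alpha^*}'+\alpha^*B_{\alpha^*}''>0$.

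The crux is this last inequality, for which I would substitute the closed forms \eqref{eq:B_alp_p_identicalqubits_quasistatic} and \eqref{eq:B_alp_pp_identicalqubits_quasistatic} and factor out the common positive prefactor $B_\alpha'$. The sign then reduces to that of the bracket $2+\alpha E(\beta_c-\beta_h)\frac{e^{\alpha\beta_h E}-e^{(\alpha\beta_c+\beta_h)E}}{e^{\alpha\beta_h E}+e^{(\alpha\beta_c+\beta_h)E}}$. Because $\beta_c>\beta_h$ forces $e^{(\alpha\beta_c+\beta_h)E}>e^{\alpha\beta_h E}>0$, the fraction lies strictly in $(-1,0)$, so the bracket is bounded below by $2-\alpha E(\beta_c-\beta_h)$. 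Using $\alpha^*<2$ and then $E<\frac{1}{\beta_c-\beta_h}$ gives $\alpha^* E(\beta_c-\beta_h)<2E(\beta_c-\beta_h)<2$, whence the bracket is strictly positive; this is precisely where the factor $\alpha^*<2$ converts the natural threshold $\frac{2}{\alpha^*(\beta_c-\beta_h)}$ into the stated $\frac{1}{\beta_c-\beta_h}$.

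I expect the main obstacle to be the explicit algebraic reduction of $2B_\alpha'+\alpha B_\alpha''$ into a single fraction whose sign is transparent; everything else is bookkeeping. Two points warrant care: that the collapse of the second derivative genuinely uses $h(\alpha^*)=0$ (equivalently stationarity of $W_\alpha$ at $\alpha^*$), rather than merely stationarity in some weaker sense; and that discarding the $\bo(g)$ corrections present in the exact $\diff{W_\alpha}{\alpha}$ of Eq.~\eqref{eq:derivative_Wpalp_identicalqubits_quasistatic} is legitimate in the quasi-static limit, as already justified following Lemma \ref{lem:inf_not_below_one}.
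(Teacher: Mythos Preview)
Your proposal is correct and follows essentially the same route as the paper: reduce the positivity of $\varepsilon_1$ to Lemma~\ref{lem:condition_positive_eps1} via the implication $E<\tfrac{1}{\beta_c-\beta_h}\Rightarrow$ \eqref{eq:condition_positive_eps1}, then use stationarity at $\alpha^*$ to collapse the second derivative to $\frac{ng}{\beta_h(\alpha^*-1)}[2B_{\alpha^*}'+\alpha^*B_{\alpha^*}'']$, and finally bound $\alpha^*(\beta_c-\beta_h)E<2$ using $\alpha^*<2$ and \eqref{eq:lemma8}. Your bookkeeping is slightly tidier than the paper's---recognizing $h'(\alpha)=\diff{\varepsilon_1(\alpha,n)}{\alpha}$ directly from \eqref{eq:derivative_eps1} and factoring out $B_\alpha'$ rather than expanding the full fraction---but the argument is the same in substance.
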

		
		\begin{proof}
		We first note that if $E<\frac{1}{\beta_c-\beta_h}$, then Eq.~\eqref{eq:condition_positive_eps1} holds and therefore by Lemma \ref{lem:condition_positive_eps1}, one can choose some $\alpha^*>1$ and close to 1 such that $\varepsilon_1(\alpha^*,n)>0$. Next, we calculate the analytical expression of $\seconddiff{W_\alpha}{\alpha}$ in terms of first and second derivatives of $B_\alpha$. Differentiating Eq.~\eqref{eq:Walp_d1},
		\begin{align}
		\seconddiff{W_\alpha}{\alpha} 
		&= \frac{g}{\beta_h}\frac{1}{(\alpha-1)^4} 
		\lbrace
		(\alpha-1)^2[(\alpha-1)nB_\alpha'+\alpha n B_\alpha'+\alpha(\alpha-1)nB_\alpha''-nB_\alpha']
		-2(\alpha-1)[\alpha(\alpha-1)nB_\alpha'-nB_\alpha-\varepsilon_1]
		\rbrace\nonumber\\
		&=\frac{g}{\beta_h}\frac{1}{(\alpha-1)^3} 
		\lbrace
		(\alpha-1)[2(\alpha-1)nB_\alpha'+\alpha(\alpha-1)nB_\alpha'']-2[\alpha(\alpha-1)nB_\alpha'-nB_\alpha-\varepsilon_1]
		\rbrace\nonumber\\
		&=\frac{g}{\beta_h}\frac{1}{(\alpha-1)^3} 
		\lbrace
		n(\alpha-1)^2[2B_\alpha'+\alpha B_\alpha'']-2[\alpha(\alpha-1)nB_\alpha'-nB_\alpha-\varepsilon_1]
		\rbrace.\label{eq:approx_Walpha_seonddiff}
		\end{align}
		Substituting $\alpha=\alpha^*$ into Eq.~\eqref{eq:approx_Walpha_seonddiff}, one sees that the last term vanishes, and therefore
		\begin{equation}
		\seconddiff{W_\alpha}{\alpha}\Bigg|_{\alpha=\alpha^*}=
		\frac{ng}{\beta_h}\frac{1}{(\alpha^*-1)} 
		[2B_{\alpha^*}'+\alpha B_{\alpha^*}''].
		\end{equation}
		Since $\alpha^*>1$, we see that to guarantee positivity of Eq.~\eqref{eq:approx_Walpha_seonddiff} is equivalent to showing that the last term $2B_{\alpha^*}'+\alpha B_{\alpha^*}''$ is strictly positive.
		To do so, we evaluate the terms $B_\alpha'$ and $B_\alpha''$. By both hand derivation and Mathematica, we obtain the expressions		
		\begin{equation}
			B_\alpha' =\frac{1}{[e^{\alpha\beta_h E}+e^{(\beta_h+\alpha\beta_c)E}]^2}\cdot  E^2 (\beta_c-\beta_h)\cdot e^{[\beta_h+\alpha(\beta_c+\beta_h)]E}
		\end{equation}		
		and 
		\begin{equation}
		B_\alpha'' =\frac{1}{[e^{\alpha\beta_h E}+e^{(\beta_h+\alpha\beta_c)E}]^3}\cdot  E^3 (\beta_c-\beta_h)^2\cdot e^{[\beta_h+\alpha(\beta_c+\beta_h)]E}\cdot \left[ e^{\alpha\beta_h E}-e^{(\alpha\beta_c+\beta_h)E} \right].
		\end{equation}
		
		One can then calculate the last term in Eq.~\eqref{eq:approx_Walpha_seonddiff}, which we again obtain a simplified expression via Mathematica,
		\begin{equation}
		2B_{\alpha^*}'+\alpha B_{\alpha^*}''= \underbrace{\frac{(\beta_c-\beta_h)E^2}{[e^{\alpha\beta_h E}+e^{(\beta_h+\alpha\beta_c)E}]^3}}_{>0}\cdot~\underbrace{e^{[\beta_h+\alpha(\beta_c+\beta_h)]E}}_{>0}\cdot f(\alpha^*),
		\end{equation}
		where 
		\begin{align}
		f(\alpha^*) 
		&:= e^{\alpha^*\beta_h E} [2+\alpha^*(\beta_c-\beta_h)E]
		+e^{(\alpha^*\beta_c+\beta_h) E}[2-\alpha^*(\beta_c-\beta_h)E]\\
		&~= \underbrace{2 \left[e^{\alpha^*\beta_h E}+e^{(\alpha^*\beta_c+\beta_h) E}\right]}_{>0} + \underbrace{\alpha^* (\beta_c-\beta_h) E \left[e^{\alpha^*\beta_h E}-e^{(\alpha^*\beta_c+\beta_h) E}\right] }_{<0}.
		\end{align}
		Note that the second term is always negative because $\alpha^*>1$ and $\beta_c>\beta_h$. Therefore, to lower bound $f(\alpha^*)$ we want to upper bound the factor $\alpha^* (\beta_c-\beta_h) E$. By letting $1<\alpha^*<2$ and $E<\frac{1}{\beta_c-\beta_h}$, one can have $\alpha^* (\beta_c-\beta_h) E<2$, which gives 
		\begin{equation}
		2B_{\alpha^*}'+\alpha B_{\alpha^*}'' > 2\left[e^{\alpha^*\beta_h E}+e^{(\alpha^*\beta_c+\beta_h) E}\right] + 2 \left[e^{\alpha^*\beta_h E}-e^{(\alpha^*\beta_c+\beta_h) E}\right] = 4e^{\alpha^*\beta_h E} >0.
		\end{equation}
		Note that the constraints on $\alpha^*$ and $E$ are not necessary, however sufficient and takes a relatively simple form.
		\end{proof}

		Finally, for $W_{\alpha^*}$ to be the global minima, since we have already seen that $W_1>W_\infty$ and cannot be the global minima, therefore we need one last condition: that $W_{\alpha^*}<W_\infty$. In the next lemma, we again develop conditions such that this is true.
		
		\begin{lemma}\label{lem:Walpst<Winfty}
		Suppose $\alpha^* E < \frac{1}{\beta_c-\beta_h}$. Then for $\varepsilon_1 (\alpha^*,n)$ defined as in Eq.~\eqref{eq:def_eps1}, we have that $W_{\alpha^*} < W_\infty$.
		\end{lemma}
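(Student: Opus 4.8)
The plan is to reduce the inequality $W_{\alpha^*} < W_\infty$ to a single, manifestly signed quantity, arranging matters so that the hypothesis $\alpha^* E(\beta_c-\beta_h) < 1$ enters only at the very last step. First I would evaluate $W_{\alpha^*}$ in closed form. Substituting the chosen $\varepsilon_1 = n[\alpha^*(\alpha^*-1)B_{\alpha^*}' - B_{\alpha^*}]$ into $W_\alpha = \frac{ng}{\beta_h(\alpha-1)}\bigl[\alpha B_\alpha + \alpha\varepsilon_1/n\bigr]$ and setting $\alpha=\alpha^*$, the two $\alpha^* B_{\alpha^*}$ terms cancel, leaving
\begin{equation}
W_{\alpha^*} = \frac{ng}{\beta_h}\,{\alpha^*}^2 B_{\alpha^*}',
\end{equation}
which is exactly the stationary-point value already used in the Corollary. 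On the other hand, Eq.~\eqref{eq:Winftyquasistatic} gives $W_\infty = \frac{ng}{\beta_h}\bigl[B_\infty + \varepsilon_1/n\bigr]$ with $B_\infty := \lim_{\alpha\to\infty}B_\alpha = E/(1+e^{\beta_c E})$. Substituting $\varepsilon_1/n$ once more and cancelling $\alpha^*(\alpha^*-1)B_{\alpha^*}'$ against part of $-{\alpha^*}^2 B_{\alpha^*}'$, the difference collapses to
\begin{equation}
W_\infty - W_{\alpha^*} = \frac{ng}{\beta_h}\bigl[B_\infty - B_{\alpha^*} - \alpha^* B_{\alpha^*}'\bigr].
\end{equation}
Since $ng/\beta_h>0$, it suffices to prove $B_\infty - B_{\alpha^*} - \alpha^* B_{\alpha^*}' > 0$.

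Next I would make this bracket explicit. The cleanest route is an identity. Abbreviating $x := e^{\alpha\beta_h E}$ and $w := e^{(\beta_h+\alpha\beta_c)E}$, one has $B_\alpha = \frac{E}{1+e^{\beta_c E}}\cdot\frac{w - e^{\beta_c E}x}{x+w}$, and a one-line computation gives the pleasant cancellation
\begin{equation}
B_\infty - B_\alpha = \frac{E\,x}{x+w}.
\end{equation}
Combining this with $B_\alpha' = E^2(\beta_c-\beta_h)\,x w/(x+w)^2$ (Eq.~\eqref{eq:B_alp_p_identicalqubits_quasistatic}) over the common denominator $(x+w)^2$ yields
\begin{equation}
B_\infty - B_\alpha - \alpha B_\alpha' = \frac{E\,x}{(x+w)^2}\Bigl[\,x + w\bigl(1 - \alpha E(\beta_c-\beta_h)\bigr)\Bigr].
\end{equation}

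Finally I would invoke the hypothesis. The prefactor $E\,x/(x+w)^2$ is strictly positive, and inside the bracket $x>0$ while $w>0$, so the bracket is positive as soon as $1 - \alpha^* E(\beta_c-\beta_h)\ge 0$. Setting $\alpha=\alpha^*$, the assumption $\alpha^* E < 1/(\beta_c-\beta_h)$ is precisely $1 - \alpha^* E(\beta_c-\beta_h)>0$, whence $B_\infty - B_{\alpha^*} - \alpha^* B_{\alpha^*}' > 0$ and therefore $W_{\alpha^*} < W_\infty$, as claimed.

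There is no genuine obstacle here; the only real risk is bookkeeping error in tracking the exponential factors through the two cancellations. The one step worth isolating and double-checking against the explicit form of $B_\alpha$ is the identity $B_\infty - B_\alpha = E\,x/(x+w)$, since it is what makes the final expression factor so transparently and makes visible that the threshold $E < 1/[\alpha^*(\beta_c-\beta_h)]$ is exactly the condition rendering the bracket positive. Everything else is routine algebra and a single sign check.
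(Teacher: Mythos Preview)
Your proof is correct and follows essentially the same route as the paper: both reduce $W_{\alpha^*}<W_\infty$ to the inequality $B_\infty - B_{\alpha^*} - \alpha^* B_{\alpha^*}' > 0$ and then verify this by explicit computation, arriving at the identical final factorization $\frac{Ex}{(x+w)^2}\bigl[x + w(1-\alpha^* E(\beta_c-\beta_h))\bigr]>0$. Your use of the identity $B_\infty - B_\alpha = Ex/(x+w)$ is a nice shortcut that compresses the paper's longer chain of manipulations, but the strategy and the endpoint are the same.
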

		\begin{proof}
		To do so, we write out the expressions for $W_{\alpha^*}$ and $W_\infty$. The former can be written using Eq.~\eqref{eq:Walp_identicalqubits_quasistatic}, while the later has been derived in Eq.~\eqref{eq:Winftyquasistatic}:
		\begin{align}
			W_{\alpha} & = \frac{ng}{\beta_h} \frac{\alpha^*}{\alpha^*-1} [B_{\alpha^*} + \alpha^* (\alpha^*-1) B_{\alpha^*}' - B_{\alpha^*}] = \frac{ng}{\beta_h} \alpha^{*2} B_{\alpha^*}'\\
			W_\infty & = \frac{ng}{\beta_h} \left[\frac{E}{1+E^{\beta_c E}} + \frac{\varepsilon_1}{n}\right].
		\end{align}
		For $W_{\alpha^*}<W_\infty$, this means
		\begin{align}
			\alpha^{*2} B_{\alpha^*}' < \frac{E}{1+E^{\beta_c E}} &+ \alpha^* (\alpha^*-1) B_{\alpha^*}' - B_{\alpha^*}\\
			\frac{E}{1+E^{\beta_c E}} & - \alpha^* B_{\alpha^*}' - B_{\alpha^*} > 0.\label{eq:WalpstleqWinfty}
		\end{align}
		Expanding Eq.~\eqref{eq:WalpstleqWinfty}, and using the shorthand $A := e^{\alpha^*\beta_h E}+e^{(\beta_h+\alpha^*\beta_c)E}$ we obtain
		\begin{align}
			&\frac{E}{1+E^{\beta_c E}} - \frac{(\beta_c-\beta_h)\alpha^{*2} E^2}{A^2} e^{[\beta_h + \alpha^* (\beta_c+\beta_h)]E} - \frac{E}{1+E^{\beta_c E}} \frac{e^{(\beta_h+\alpha^*\beta_c) E}+e^{(\beta_c+\alpha^*\beta_h)E}}{A}\\
			&= \frac{E}{1+E^{\beta_c E}} \frac{1}{A^2} 
			\left\lbrace A^2 - \alpha^*E (\beta_c+\beta_h) (1+e^{\beta_c E}) e^{[\beta_h + \alpha^* (\beta_c+\beta_h)]E} 
			- A\left[e^{\alpha^*\beta_h E}+e^{(\beta_h+\alpha^*\beta_c)E}\right] \right\rbrace\\
			&= \frac{E}{1+E^{\beta_c E}} \frac{1}{A^2} 
			\left\lbrace A \left[e^{\alpha^*\beta_h E}+E^{(\alpha^*\beta_h+\beta_c)E}\right] 
			- \alpha^*E (\beta_c+\beta_h) \left(1+e^{\beta_c E}\right) e^{[\beta_h + \alpha^*E (\beta_c+\beta_h)]E}  \right\rbrace\label{eq:derivation_mid}\\
			&=  \frac{E}{A^2} \cdot e^{\alpha^*\beta_h E} \left\lbrace A - \alpha^*E (\beta_c+\beta_h)  e^{(\beta_h + \alpha^* \beta_c)E} \right\rbrace\\
			& = \frac{E}{A^2} \cdot e^{\alpha^*\beta_h E} \left\lbrace e^{\alpha^*\beta_h E} + e^{(\beta_h+\alpha^*\beta_c)E} \left[ 1-\alpha^* E(\beta_c-\beta_h) \right] \right\rbrace.
		\end{align}
		The calculation above can be checked as follows: the first equality is obtained by taking out a common factor from all the three terms. The second equality focuses on the large bracket, and combines the first and third terms by expanding one of the $A$ in the first term. In the third equality, one recognizes more common factors in Eq.~\eqref{eq:derivation_mid}, and therefore pulls out $e^{\alpha^*\beta_h E} \cdot (1+e^{\beta_c E})$. The fourth equality is obtained by expanding $A$, while regrouping some terms. To demand that $W_{\alpha^*} < W_\infty$, implies that we demand 
		\begin{equation}\label{eq:forWalpstleqWinfty}
		e^{\alpha^*\beta_h E} + e^{(\beta_h+\alpha^*\beta_c)E} \left[ 1-\alpha^* E(\beta_c-\beta_h) \right] >0.
		\end{equation}
		Rearranging Eq.~\eqref{eq:forWalpstleqWinfty}, we have
		\begin{align}
			e^{\alpha^*\beta_h E} &> e^{(\beta_h+\alpha^*\beta_c)E} \left[ \alpha^* E(\beta_c-\beta_h)-1 \right].
		\end{align}
		One can continue to simplify the expression by bringing $e^{(\beta_h+\alpha^*\beta_c)E}$, and subsequently the $-1$ to the L.H.S., yielding
		\begin{equation}
		1+e^{-\alpha^*(\beta_c-\beta_h) E} e^{-\beta_h E} >  \alpha^* E(\beta_c-\beta_h).
		\end{equation}
		Then finally, one obtains an expression for $\alpst E$:
		\begin{equation}
		\alpha^* E < \frac{1+e^{-\alpha^*(\beta_c-\beta_h) E} e^{-\beta_h E} }{\beta_c-\beta_h}.
		\end{equation}	
		
		Since $\beta_c-\beta_h >0$, and we have that $e^{-\alpha^*(\beta_c-\beta_h) E} e^{-\beta_h E}>0$,  therefore as long as $\alpha^* E < \frac{1}{\beta_c-\beta_h}$, Eq.~\eqref{eq:WalpstleqWinfty} is satisfied and $W_{\alpha^*}<W_\infty$. This concludes our proof.
		\end{proof}
		
		Lemma \ref{lem:inf_not_below_one}, \ref{lem:onlyonelocalmin}, \ref{lem:condition_positive_eps1}, \ref{lem:alpstisstationarypoint} and \ref{lem:seconddiff>0} together presents a set of mathematical conditions such that $\alpha^*$ can be chosen such that $W_\alpha$ has a \emph{global} minima at $W_{\alpha^*}$. This is presented in Eq.~\eqref{eq:finalcond} of Corollary \ref{cor:HamiltonianConditionsForAlphaStarMinimization}.

\end{document}